\newtheorem{Thm}{Theorem}
\newtheorem*{Thm*}{Theorem}
\newtheorem{Prop}{Proposition}	
\newtheorem{Lem}{Lemma}
\newtheorem{Cor}{Corollary}
\newtheorem{Fact}{Fact}
\newtheorem*{Rem*}{Remark}
\newtheorem*{Thm1}{\rm\bf Theorem~\ref{Thm_unified_amortized}}
\newtheorem*{Thm2}{\rm\bf Theorem~\ref{Thm_unified_amortized_Theta}}
\newtheorem*{Thm3}{\rm\bf Theorem~\ref{Thm_direct_sum_expec_0}}
\newcommand{\Bset}{\{0, 1\}}
\newcommand{\ED}{\overline{D}}
\newcommand{\EC}{\overline{C}}
\newcommand{\ER}{\overline{R}}
\newcommand{\EQD}{\overline{QD}}
\newcommand{\EQR}{\overline{QR}}
\newcommand{\E}{\mathbf{E}}
\newcommand{\on}{\text{~on~}}
\newcommand{\Mand}{\text{~and~}}
\newcommand{\st}{\text{~s.t.~}}
\newcommand{\Tdelta}{\tilde{\delta}}
\newcommand{\TU}{\tilde{\mathcal{U}}}
\newcommand{\FE}{[P_R, \varepsilon]}
\newcommand{\PD}{(\mathcal{F}_\Theta, \NT, \mu)}
\newcommand{\FTE}{[(\mathcal{F}_\Theta, \NT, \mu), \varepsilon]}
\newcommand{\FTEn}{\FTE^{n}}
\newcommand{\PDE}{[P_D, \varepsilon]}
\newcommand{\PDEn}{[P_D, \varepsilon]^n}
\newcommand{\PR}{(\mathcal{F}_\Theta, \NT)}
\newcommand{\PRE}{[P_R, \varepsilon]}
\newcommand{\PREn}{[P_R, \varepsilon]^n}
\newcommand{\PT}{\mathcal{P}(\Theta)}
\newcommand{\PQD}{(f, \NT, \mu)}
\newcommand{\FQTE}{[(f, \NT, \mu), \varepsilon]}
\newcommand{\FQTEn}{\FTE^{n}}
\newcommand{\PQDE}{[P_{QD}, \varepsilon]}
\newcommand{\PQDEn}{[P_{QD}, \varepsilon]^n}
\newcommand{\PQR}{(f, \NT)}
\newcommand{\FQRE}{[(f, \NT), \varepsilon]}
\newcommand{\FQREn}{\FQRE^{n}}
\newcommand{\PQRE}{[P_{QR}, \varepsilon]}
\newcommand{\PQREn}{[P_{QR}, \varepsilon]^n}
\newcommand{\NT}{\mathcal{N}_\Theta}
\newcommand{\Nt}{\mathcal{N}_\theta}
\title{Direct sum theorems beyond query complexity}
\author{Daiki Suruga \thanks{Graduate School of Mathematics, Nagoya University}}
\begin{document}
\maketitle
\begin{abstract}
A fundamental question in computer science is: \emph{Is it harder to solve $n$ instances independently than to solve them simultaneously?} 
This question, known as the direct sum question or direct sum theorem,  has received much attention in several research fields including query complexity, communication complexity and information theory.
Despite its importance, however, little has been discovered in many other research fields.
\par
In this paper, we introduce a novel framework that extends to classical/quantum query complexity, PAC-learning for machine learning, statistical estimation theory, and more. 
Within this framework, we establish several fundamental direct sum theorems.
The main contributions of this paper include:
(i) establishing a complete characterization of the amortized query/oracle complexities, and
(ii) proving tight direct sum theorems when the error is small.
Note that in our framework, every oracle access needs to be performed \emph{classically}, even though our framework is capable of both classical and quantum scenarios.
This can be thought of one limitation of this work.
\par
As a direct consequence of our results, we obtain:
\begin{itemize}
\item The first known asymptotic separation of the randomized query complexity. Specifically, we show that there is a function $f: \Bset^k \to \Bset$ and small error $\varepsilon > 0$ such that solving $n$ instances simultaneously requires the query complexity $\tilde{O}(n\sqrt{k})$ but solving one instance with the same error has the complexity $\tilde{\Omega}(k)$.
In communication complexity this type of separation was previously given in~Feder, Kushilevitz, Naor and Nisan (1995).
\item The query complexity counterpart of the ``information = amortized communication" relation, one of the most influential results in communication complexity shown by Braverman and Rao (2011) and further investigated by Braverman (2015).
\item A partial answer to an open question given in~Jain, Klauck and Santha (2010), by showing a tighter direct sum theorem.
\item A complete answer to the open problem given in~Blais and Brody (2019) by exhibiting a counterexample.
\end{itemize}
We hope that our results will provide further interesting applications in the future.
\end{abstract}

\section{Introduction}
The \emph{direct sum question} is a basic, natural and fundamental question in complexity theory which asks whether it is easier to solve $k$ instances of a problem simultaneously than to solve each of them independently.
This question and its variants (e.g., XOR lemmas, direct product theorems) have attracted much attention in several research fields such as query complexity~\cite{JKS10, ACLT10, Dru11, Mon13, MS15, BK18, BB19, BB20, GM21, BTLS23, BKST24}, communication complexity~\cite{FKNN95, CSWY01, JRS03d, BJKS04, BBCR09, JK09, BR11, JPY12, MWY13, KMSY14, Bra17, Jain20, JK22, Wu22}, Boolean circuits~\cite{IW97, IJKW08, GNW11}.
As a consequence of these efforts, it is now known that the direct sum theorems hold in some models~\cite{JK09, JKS10, ACLT10, BB19} such as the deterministic query algorithm, 
whereas such theorems do not hold in several other models such as the two-party randomized classical communication model~\cite{FKNN95}.
(See Ref.~\cite[Section~3]{Pan12} for a survey of direct sum theorems.)
Providing various kinds of applications in addition to its original significance, the direct sum theorems have been playing a key role in complexity theory.

\subsection{Direct sum theorems in query and communication complexity}\label{subsec_DST-in-QC}
In this section, a brief review of the direct sum theorems in query and communication complexity is given, focusing especially on results relevant to the present paper.
\paragraph{Direct sum in query complexity}
In classical query complexity, several basic properties on the direct sum theorems are proved in~\cite{JKS10}, which shows
\begin{equation}\label{eq_intro_JKS10}
\mathsf{Det}(f^n) = n\mathsf{Det}(f)
\Mand R([f^n, \varepsilon]) \geq \delta^2 n   R([f, \varepsilon/(1-\delta) +\delta])
\end{equation}
where $\mathsf{Det}(f)$ denotes the deterministic query complexity for computing $f$, and $R([f, \varepsilon])$ denotes the worst-case randomized query complexity for computing $f$ with the worst-case error $\leq \varepsilon$, and $f^n = \underbrace{(f, \ldots, f)}_n$.
 Ref.~\cite{BK18} then showed
\begin{equation*}
\ER([f^n, \varepsilon]) \geq n\ER([f, \varepsilon])
\end{equation*}
holds where $\ER([f, \varepsilon])$ denotes the \emph{expected} randomized query complexity for computing $f$ with the worst-case error $\leq \varepsilon$.
This result is then strengthened by~\cite{BB19}, which firstly characterize the tight direct sum theorem as

\begin{equation}\label{Intro_eq_direct-sum_randomized}
\ER([f^n, \varepsilon]) = \Theta(n\ER([f, \varepsilon/n])).
\end{equation}
These results show that the direct sum theorems hold in the worst-case/expected randomized query complexity.
\par
Unlike the randomized model, it is well-known that in general, direct sum theorems do not hold in the worst-case distributional query complexity~\cite{Sha01}.
As Ref.~\cite{Sha01} shows\footnote{Precisely speaking, they showed 
$D([f^{ n}, \mu^n, \varepsilon]) = O\left(\varepsilon D([f, \mu, \varepsilon/n]) +n \log \frac{n}{\varepsilon}\right)$
but the term $n \log \frac{n}{\varepsilon}$ is independent of the function $f$ and therefore ignored.
}, there is a function $f$ such that
\begin{equation*}
D([f^{ n}, \mu^n, \varepsilon])
= O\left(\varepsilon D([f, \mu, \varepsilon/n])\right)
\end{equation*}
holds where $D([f, \mu,  \varepsilon])$ denotes the worst-case query complexity for computing $f$ with the average error under the distribution $\mu$. Since the RHS is trivially upper-bounded by $\lceil \log |\mathrm{dom} f|\rceil$, the LHS can not grow arbitrarily larger even if $n$ gets larger.
On the other hand, recently in Ref.~\cite{BKST24}, the authors showed the direct sum theorem \emph{does} hold in the \emph{expected} distributional query complexity:
\begin{equation}\label{eq_intro_BKST}
\ED([f^n, \mu^n, \varepsilon])
= \tilde{\Omega}(\varepsilon^2 n)\ED([f, \mu, \Theta(\varepsilon/n)])
\end{equation}
where $\ED([f, \varepsilon])$ denotes the \emph{expected} distributional query complexity for computing $f$ with the average error $\leq \varepsilon$ under the distribution $\mu$.

\par
Similar to the classical case, there are plentiful amount of researches in \emph{quantum} query complexity, including the groundbreaking Grover's search algorithm~\cite{Gro96}.
Regarding the direct sum question, a tight characterization on the worst-case quantum query complexity has been firstly shown in 2010 by~\cite{ACLT10};
Ref.~\cite{ACLT10} shows
\begin{equation}\label{Intro_eq_direct-sum_quantum}
QR([f^n, 1/3])
= \Theta(n QR([f, 1/3]))
\end{equation}
where $QR([f, \varepsilon])$ denotes the worst-case randomized quantum query complexity for computing $f$ with the worst-case error $\leq \varepsilon$.
The direct sum question in quantum query complexity is further investigated in several works~\cite{LMR+11}.

\par
These line of researches guarantees the importance of the direct sum questions in query complexity, even though historically it was sometimes mistakenly regarded as unimportant.
(See~\cite[Introduction]{JKS10} for a discussion.)

\paragraph{Direct sum in communication complexity}
Communication complexity definitely plays a central role in complexity theory~\cite{KN96, RY20}.
In communication complexity, several fundamental properties have been firstly proved in~\cite{FKNN95}.
Ref.~\cite{FKNN95} shows (among other results) there is a function $f:\Bset^\ell \to \Bset$ that satisfies
\begin{equation}\label{Intro_eq_FKNN}
R^\mathrm{CC}([f^n, 1/3]) = \Theta(n)
\Mand
R^\mathrm{CC}([f, 1/3]) = \Theta(\log \ell)
\end{equation}
where
$R^\mathrm{CC}([f, \varepsilon])$ denotes the worst-case randomized communication complexity for computing $f$ with the worst-case error $\leq \varepsilon$.
This result means, in the randomized communication complexity, the direct sum theorem simply does not hold.
Also note that as complementary results, it is shown in~\cite{JRS03d, JK09} that  the direct sum theorems hold when focusing on the restricted model of communication, e.g., the simultaneous message model~\cite{JK09}.

\par
One of the most essential tool for the analysis of the communication complexity is \emph{information complexity}, introduced originally in~\cite{CSWY01} for the simultaneous message model and relatively recently in~\cite{BR11} for the general two-party model.
There are a considerable number of works that apply the information complexity framework to the direct sum theorems in communication complexity.
This is partly because the quantity called \emph{information complexity}, which characterizes how much information the two parties need to reveal (see~\cite{BR11} for the precise definition), itself satisfies some version of the direct sum theorems~\cite{BR11, Bra15}.
Applying the information complexity framework,
Ref.~\cite{BR11} shows the complete characterization of the amortized two-party communication complexity in the distributional setting:
\begin{equation}\label{Intro_eq_info-comp_dist}
\lim_{n \to \infty} \frac{D^\mathrm{CC}([f, \mu, \varepsilon]^n)}{n}
= IC([f, \mu, \varepsilon])
\end{equation}
where $D^\mathrm{CC}([f, \mu, \varepsilon]^n)$ denotes the distributional communication complexity for computing $f^n$ with error $\leq \varepsilon$ on each  of $n$ instances $f$ under the input distribution $\mu$, and 
$IC([f, \mu, \varepsilon])$ denotes the information complexity for computing $f$ with error $\leq \varepsilon$ under the input distribution $\mu$.
Subsequently, Ref.~\cite{Bra15} applies the information complexity framework and shows a similar result but for the randomized setting, 
whereas Ref.~\cite{Tou15} has generalized the relation~\eqref{Intro_eq_info-comp_dist} to the quantum setting.
The information complexity has undoubtedly become an essential tool for investigating many topics in communication complexity~\cite{BJKS04, BEO+13, TVVW17, BGK+18}, as well as  direct sum theorems~\cite{BBCR09, MWY13, KMSY14, Jain20}.

\par
Similar to query complexity, the direct sum question in communication complexity has been extensively studied for better understanding of communication complexity.

\par
To summarize, as seen in both query complexity and communication complexity, the direct sum theorems are fundamental issues and worth investigating, providing various kinds of applications. 
However, despite the importance, little is discovered in many other complexity frameworks such as statistical sample complexity.
Therefore, it is necessary to investigate direct sum questions in those less-investigated complexity frameworks, as well as to provide more precise analysis in the well-investigated frameworks such as query complexity and communication complexity.

\subsection{Our contributions}
As mentioned in Section~\ref{subsec_DST-in-QC}, there are many research fields that direct sum theorems have not received much attention to, despite its importance.
In this paper, overcoming the issue, we introduce a new general framework that enables to handle various kinds of research topics such as classical/quantum query complexity, statistical estimation problems, PAC learning for machine learning. (See Section~\ref{subsec_examples} for a detailed explanation.)
Under the new framework, we then successfully analyze different research problems in a unified manner and prove several fundamental direct sum theorems applicable to any of these research problems.
\par
In the following sections, we first explain our new framework and then the two sections for our main results follow.
Since our main results many be divided into the two parts:``Direct sum theorems \emph{in the limit}'' and ``Direct sum theorems \emph{without the limit}'', we describe each of the two results separately after the explanation for our new framework.

\paragraph{Our framework}
As our new framework provides a pivotal role in this paper, we now describe its definition a bit in detail. The precise definition is given in Section~\ref{sec_Preliminaries}.
For simplicity, let us focus on that of classical randomized scenarios even though in this paper the new framework is applied to any of classical or quantum, distributional or randomized scenarios.
\par
First recall the well-known framework: the classical randomized query complexity. In the classical randomized query complexity, one needs to compute the value $f(x)$ of a function $f: \Bset^\ell \to \Bset$ by accessing to an oracle (or a query) that takes $i \in [\ell]$ as input and output $x_i$ deterministically.
The key differences between the query complexity framework and our framework are the definitions of  (i)~target functions and (ii)~oracles:

\begin{enumerate}[(i)]
\item
In our framework a target function is denoted as $F_\Theta: \Theta \ni \theta \mapsto F_\theta \subset \mathbb{R}^d$; the domain is simply a (possibly infinite) set $\Theta$
and the output value $F_\theta$ is a subset of some fixed Euclidean space $\mathbb{R}^d$.
By taking $\Theta := \Bset^\ell$ and $F_\theta := \{f(\theta)\}~(\theta \in \Bset^\ell)$ which has only one element $f(\theta)$, we see this definition covers the function in the query complexity framework.
\item
In case of oracles, our new definition allows them to behave stochastically. 
That is, in our framework, an oracle, denoted by $\mathcal{N}_\Theta =\{p_\theta(y|x)\mid \theta \in \Theta\}$, takes $x \in \mathcal{X}$ as input and return $y \in \mathcal{Y}$ with probability $p_\theta(y|x)$, where $\mathcal{X}$ and $\mathcal{Y}$ are finite sets\footnote{In information theory, this definition is known to be equivalent to classical channels.}.
\end{enumerate}
In short, in our framework, any problem $P$ is represented by the pair $(F_\Theta, \mathcal{N}_\Theta)$ whereas in query complexity any problem is defined only by a function $f$.
These are the main differences between the query complexity framework and our new framework.
Within the new framework, the player's mission is to output some real value $\pi_\mathsf{out} \in F_\theta$ by sending $x_1, \ldots, x_m \in \mathcal{X}$ to the oracle and receiving $y_1, \ldots, y_m \in \mathcal{Y}$ from the oracle in an adaptive manner.
Note that in our model of computation, unlike the model given in~\cite{ACLT10, LMR+11}, every oracle access is made in a classically adaptive way
even if we consider quantum information processing.
(See Section~\ref{sec_Preliminaries} for a detailed description on our model of computation.)
As shown in Section~\ref{subsec_examples}, this framework enables to investigate different complexity frameworks in a unified manner.

\paragraph{First part: Direct sum theorems in the limit}
In the first part of our results, we concern with direct sum theorems in the limit under our framework. To state our results in a concise manner, let us introduce several notations in the following. 
For each of the four complexity scenarios—classical distributional, classical randomized, quantum distributional, and quantum randomized—we use the abbreviations $D$, $R$, $QD$, and $QR$, respectively.
Then for any complexity scenario $C \in \{D, R, QD, QR\}$ and any problem $P_C$ (with the subscript $C$ to express which scenario is considered), 
let $C([P_C, \varepsilon])$ (resp. $\EC([P_C, \varepsilon])$) be the worst-case (resp. the expected) oracle complexity of the problem $P_C$ with error $\leq \varepsilon$.
For example, $QR([P_{QR}, 1/3])$ denotes the worst-case oracle complexity of the problem $P_{QR}$ with error $\leq 1/3$.
For direct sum theorems, we also define $C([P_C, \varepsilon]^n)$ (resp. $\EC([P_C, \varepsilon]^n)$) as the worst-case (resp. the expected) oracle complexity of the problem $P_C^n = \underbrace{(P_C, \ldots, P_C)}_n$ with error $\leq \varepsilon$ on \emph{each instance} $P_C$. 
Note that as is already defined,  $C([P_C^n, \varepsilon])$ denotes the complexity of $P_C^n$ with error $\leq \varepsilon$ on \emph{all $n$ instances}, even though $C([P_C^n, \varepsilon])$ and $C([P_C, \varepsilon]^n)$ may look similar.

\par
Using the notations defined above, one of the main results is stated as follows:
\begin{Thm}\label{Thm_unified_amortized}
For any complexity scenario $C \in \{D, R, QD, QR\}$, any $\varepsilon > 0$, and any problem $P_C$,
\begin{equation*}
\lim_{n \to \infty} \frac{C([P_C, \varepsilon]^n)}{n}
=
\EC([P_C, \varepsilon]).
\end{equation*}
\end{Thm}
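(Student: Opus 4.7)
The plan is to prove the two inequalities $\liminf_n C([P_C,\varepsilon]^n)/n \ge \EC([P_C,\varepsilon])$ and $\limsup_n C([P_C,\varepsilon]^n)/n \le \EC([P_C,\varepsilon])$ separately; in fact the lower bound will hold at every finite $n$, not just in the limit. For the lower bound, starting from any algorithm $\mathcal B$ that solves the $n$-fold problem with worst-case oracle count $T = C([P_C,\varepsilon]^n)$ and per-instance error $\le \varepsilon$, I would produce a single-instance algorithm $\mathcal A$ by the standard averaging trick: sample $i\in[n]$ uniformly, internally simulate the other $n-1$ instances (drawing $\theta_{-i}\sim \mu^{n-1}$ in the distributional scenarios, or fixing any reference $\theta_{-i}$ in the randomized scenarios), and forward only coordinate-$i$ oracle calls to the real oracle. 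Since $\mathcal B$'s per-coordinate error is $\le \varepsilon$ in the relevant sense (uniformly in $\theta$ for the randomized scenarios, averaged over $\mu^n$ in the distributional scenarios), $\mathcal A$ inherits error $\le \varepsilon$; letting $T_i(\theta)$ denote $\mathcal B$'s expected query count on coordinate $i$, the bookkeeping $\sum_i T_i(\theta)\le T$ gives $\E_i T_i(\theta)\le T/n$, so $\mathcal A$ has expected cost $\le T/n$, and hence $\EC([P_C,\varepsilon])\le T/n$.

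For the upper bound I would run $n$ independent copies of a near-optimal single-instance algorithm in parallel, with two layers of truncation to convert expected cost into worst-case cost. For a small $\gamma>0$, choose $\mathcal A^*$ with error $\le \varepsilon-\gamma$ and expected cost close to $\bar E_\gamma := \EC([P_C,\varepsilon-\gamma])$, and impose a local cap of $M_n$ queries per copy (so that the per-copy counts $\min(T_i,M_n)$ are bounded) together with a global cap of $n(\bar E_\gamma+\delta_n)$ total queries. Markov bounds the local-truncation failure probability per copy by $\bar E_\gamma/M_n$, and Hoeffding applied to the independent bounded variables $\min(T_i,M_n)$ bounds the global-cap failure probability by $\exp(-\Omega(n\delta_n^2/M_n^2))$; the required independence across copies holds in every scenario because the algorithm's internal randomness is independent and, in the distributional case, the inputs $\theta_i$ are i.i.d. Choosing $M_n\to\infty$ with $M_n=o(\sqrt n)$ and letting $\gamma=\gamma_n$ and $\delta_n$ tend to $0$ slowly enough that both failure probabilities are $o(\gamma_n)$, the per-instance error is $\le (\varepsilon-\gamma_n)+o(\gamma_n)\le \varepsilon$ for large $n$, while the worst-case total count is $\le n(\bar E_{\gamma_n}+\delta_n)$; dividing by $n$ and sending $n\to\infty$ yields $\limsup_n C([P_C,\varepsilon]^n)/n \le \lim_{\gamma\to 0^+}\EC([P_C,\varepsilon-\gamma])$.

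It remains to identify the right-hand side with $\EC([P_C,\varepsilon])$, which I would do by a convex-mixing argument: a probabilistic $\lambda$-vs-$(1-\lambda)$ mixture of algorithms with errors $\varepsilon_1<\varepsilon_2$ and expected costs $c_1,c_2$ produces an algorithm with convex-combination error and convex-combination expected cost, so $\varepsilon\mapsto \EC([P_C,\varepsilon])$ is non-increasing and convex in $\varepsilon$ on $(0,1)$, and hence continuous on the interior of its (finite-valued) domain. The delicate part I foresee is the upper-bound truncation bookkeeping: keeping per-instance error exactly $\le \varepsilon$ rather than $\varepsilon+o(1)$ forces the shrinking error budget $\gamma_n$ to strictly dominate both the Markov and Hoeffding failure terms while $\bar E_{\gamma_n}$ still approaches $\EC([P_C,\varepsilon])$. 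Everything else — averaging, simulation of the other $n-1$ instances, Markov, Hoeffding, algorithm mixing — is scenario-agnostic and works uniformly across $C\in\{D,R,QD,QR\}$; the only scenario-dependent choice is whether the simulated instances $\theta_{-i}$ are drawn from $\mu^{n-1}$ or chosen deterministically.
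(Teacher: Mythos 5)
Your lower bound argument has a genuine gap in the randomized scenarios $C\in\{R,QR\}$. For these, $\EC([P_C,\varepsilon]) = \inf_{\pi}\max_{\mu}\E_\mu[\pi]$ is a worst case over input distributions, and the variant of the averaging trick that ``fixes any reference $\theta_{-i}$'' does not yield the claimed bookkeeping. Writing $\tau_i(\theta)$ for the tuple with the real $\theta$ inserted at position $i$ and the reference elsewhere, the expected cost of your derived algorithm $\mathcal{A}$ on a real input $\theta$ is $\tfrac{1}{n}\sum_{i}T_i(\tau_i(\theta))$, a sum of the coordinate-$i$ query counts of $n$ \emph{different} input tuples. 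This need not be $\le T/n$: an adaptive $\mathcal{B}$ that makes heavy extra queries to coordinate $i$ precisely when $\theta_i$ deviates from the reference while all other coordinates match it has total cost $\approx T$ on every single tuple, yet $\sum_i T_i(\tau_i(\theta))\approx nT$ for the input $\theta$ that triggers the deviation, so $\mathcal{A}$'s worst-case expected cost stays near $T$, not $T/n$. The inequality $\sum_i T_i\le T$ simply has no useful content when the $T_i$ are evaluated at $n$ different arguments. This is exactly the obstruction the paper removes by proving a minimax theorem (Proposition~\ref{Prop_minimax}), which converts $\inf_\pi\max_\mu$ into $\max_\mu\inf_\pi$; the averaging is then run against a single hard distribution $\mu^\ast$ by sampling $\theta_{-i}\sim(\mu^\ast)^{n-1}$, so that the coordinate costs telescope cleanly to $\E_{(\mu^\ast)^{\otimes n}}[\pi_n]\le \ER(\PREn)+\delta$ (Lemma~\ref{Lem_nE_randomized_GO}). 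So your closing remark that the only scenario-dependent choice is ``$\mu^{n-1}$ vs.\ deterministic references'' is incorrect; the minimax step is the missing ingredient.

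The rest of your proposal is sound and parallels the paper. Your upper bound via parallel copies with two truncation layers (Markov for the per-copy cap $M_n$, Hoeffding for the global budget, with $M_n=o(\sqrt n)$ and a shrinking error budget $\gamma_n$ dominating both failure terms) is a valid alternative to the paper's single Chebyshev cutoff at $n\E[\pi]+k\sigma_n$ with $k=1/\sqrt{\alpha}$ (Lemmas~\ref{Lem_algo_CD} and~\ref{Lem_algo_CR}); both exploit the same $o(n)$ fluctuation of i.i.d.\ sums. Your convexity argument for the right-continuity of $\varepsilon\mapsto\EC([P_C,\varepsilon])$ is correct (probabilistic mixing of algorithms makes this function non-increasing and convex, hence continuous on the interior of its finite domain) and is arguably a cleaner packaging of the paper's explicit $\lambda$-vs-$(1-\lambda)$ mixing in Lemma~\ref{Lem_conti_CD}, though it is the same observation. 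In the distributional scenarios your averaging with $\theta_{-i}\sim\mu^{n-1}$ is exactly the paper's Lemma~\ref{Lem_nleq_GO}, and there the lower bound does indeed hold at every finite $n$.
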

Theorem~\ref{Thm_unified_amortized} firstly gives a complete characterization of the worst-case complexity $C([P_C, \varepsilon]^n)$ in the asymptotic setting, which had not been discovered before.
In classical scenarios, i.e., $C \in \{D, R\}$, Theorem~\ref{Thm_unified_amortized} naturally corresponds to the query/oracle counterpart of ``information = amortized communication'' relations~\cite{BR11, Bra15} as in the expression~\eqref{Intro_eq_info-comp_dist}, whereas in quantum cases Theorem~\ref{Thm_unified_amortized} arguably does not correspond to that of~\cite{Tou15} due to the classical adaptivity of our model of computation.
Since the ``information = amortized communication'' relations provide a considerable number of applications, Theorem~\ref{Thm_unified_amortized} may provide several important applications as well in the future.

\par
We also consider the case of $C([P_C^n, \varepsilon])$ with small error $\varepsilon$ and obtain Theorem~\ref{Thm_unified_amortized_Theta}:
\begin{Thm}\label{Thm_unified_amortized_Theta}
(informal)
For any complexity scenario $C \in \{D, R, QD, QR\}$ and for almost any problem $P_C$,
\begin{equation*}
\lim_{n \to \infty} \frac{C([P^n_C, \varepsilon])}{n}
=
\Theta\left(\EC([P_C, 0])\right).
\end{equation*}
for any sufficiently small positive $\varepsilon$.
\end{Thm}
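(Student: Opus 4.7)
The plan is to pinch $\lim_{n\to\infty} C([P_C^n,\varepsilon])/n$ between two quantities proportional to $\EC([P_C,0])$: the upper bound comes from a weak law of large numbers applied to a zero-error protocol, and the lower bound reduces, via $C([P_C^n,\varepsilon]) \ge C([P_C,\varepsilon]^n)$ and Theorem~\ref{Thm_unified_amortized}, to a comparison between $\EC([P_C,\varepsilon])$ and $\EC([P_C,0])$ for small $\varepsilon$.

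For the upper bound, I would fix an optimal zero-error (Las Vegas) protocol $A$ for $P_C$ of expected oracle cost $\mu := \EC([P_C,0])$, and run $A$ independently on each of the $n$ coordinates of $P_C^n$. The per-instance costs $T_1,\dots,T_n$ are i.i.d.\ with mean $\mu$, so the weak law of large numbers gives
\[
\Pr\!\left[\,T_1+\cdots+T_n > n(\mu+\delta)\,\right] \longrightarrow 0 \qquad (n\to\infty)
\]
for every $\delta>0$. Hence for each fixed $\varepsilon>0$ and all sufficiently large $n$ this probability is at most $\varepsilon$, and truncating the concatenated protocol after $n(\mu+\delta)$ total oracle calls (outputting an arbitrary guess on any unfinished coordinate) produces a protocol for $P_C^n$ with combined error $\le\varepsilon$. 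This yields $\limsup_n C([P_C^n,\varepsilon])/n \le \mu+\delta$, and sending $\delta\to 0$ proves $\limsup_n C([P_C^n,\varepsilon])/n \le \EC([P_C,0])$.

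For the lower bound, the elementary inequality $C([P_C^n,\varepsilon]) \ge C([P_C,\varepsilon]^n)$ (any combined-error $\le \varepsilon$ protocol has marginal error $\le \varepsilon$ on each coordinate), combined with Theorem~\ref{Thm_unified_amortized}, already gives $\liminf_n C([P_C^n,\varepsilon])/n \ge \EC([P_C,\varepsilon])$. It then suffices to argue $\EC([P_C,\varepsilon]) = \Omega(\EC([P_C,0]))$ for all sufficiently small $\varepsilon$. My plan here is an amplify-and-verify step: run an optimal $\varepsilon$-error protocol $B$ of expected cost $\EC([P_C,\varepsilon])$ twice, output the common value on agreement, and invoke a zero-error backup of expected cost $\EC([P_C,0])$ on disagreement. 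Two independent runs of $B$ disagree with probability at most $2\varepsilon$, so the composite has expected cost at most $2\EC([P_C,\varepsilon]) + 2\varepsilon\,\EC([P_C,0])$; iterating this recursion while using monotonicity of $\EC([P_C,\cdot])$ yields the desired constant-factor inequality whenever $\varepsilon$ is small enough.

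The main obstacle is precisely what the informal phrase ``almost any problem'' conceals: comparing $\EC([P_C,\varepsilon])$ with $\EC([P_C,0])$ is not free, since without structural assumptions one may have $\EC([P_C,0])/\EC([P_C,\varepsilon])\to\infty$ as $\varepsilon\to 0^+$ — Monte Carlo protocols cannot in general be converted into Las Vegas ones. The amplify-and-verify plan needs (i) $\EC([P_C,0])<\infty$ so that a backup exists, and (ii) the ability to certify joint correctness from agreement, which is clean when the target set $F_\theta$ is effectively singleton-valued but delicate for set-valued $F_\theta$ where two correct outputs may legitimately differ. I expect the formal statement of Theorem~\ref{Thm_unified_amortized_Theta} to impose precisely these kinds of regularity on $P_C$. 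With that in place the same argument applies uniformly to all four scenarios $C\in\{D,R,QD,QR\}$, because the model is classical and adaptive throughout and neither the LLN step nor the amplification interacts with any quantum coherence between coordinates.
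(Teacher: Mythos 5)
Your upper bound is essentially the paper's argument (Section~\ref{sec_opt}, Lemmas~\ref{Lem_algo_CD}--\ref{Lem_algo_QR}): run a near-optimal zero-error algorithm independently on all $n$ coordinates and truncate, with Chebyshev playing the role of your weak law of large numbers. Two small cautions there: the infimum defining $\EC([P_C,0])$ need not be attained, and in the randomized scenarios the per-coordinate costs are independent but \emph{not} identically distributed (the adversary picks $\theta_1,\dots,\theta_n$), so you need a uniform bound on $\max_\mu \E_\mu[\pi]$ and $\max_\mu\sigma(\pi,\mu)$, which the paper gets from Fact~\ref{Fact_conti_expe} plus compactness of $\PT$. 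The first half of your lower bound ($C([P_C^n,\varepsilon])\ge C([P_C,\varepsilon]^n)$ plus Theorem~\ref{Thm_unified_amortized}) is also fine and parallels the paper's use of $\EC([P_C^n,\varepsilon])\le C([P_C^n,\varepsilon])$ together with Theorem~\ref{Thm_direct_sum_expec_0}; both routes reduce the theorem to showing $\EC([P_C,\varepsilon])=\Omega(\EC([P_C,0]))$ for small $\varepsilon$.

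It is exactly at that reduction that your proposal has a genuine gap. The amplify-and-verify step does not work: agreement of two independent runs of the $\varepsilon$-error protocol $B$ does not certify correctness, even for singleton-valued Boolean targets, since both runs can output the same wrong value (with probability up to $\varepsilon^2$). Hence your composite protocol is not zero-error but only $\varepsilon^2$-error, and what you actually prove is a recursion of the form $\EC([P_C,\varepsilon^{2^k}]) \le 2^k\,\EC([P_C,\varepsilon]) + O(\varepsilon)\,\EC([P_C,0])$: the prefactor $2^k$ diverges, the iteration never reaches error $0$, and even passing to the limit would require continuity of $\EC([P_C,\cdot])$ at $\varepsilon=0$ --- which is precisely the nontrivial content to be proved and is false without the structural hypotheses (finite $\Theta$, $[P_C,0]\neq\emptyset$, $\sqrt{\varepsilon}<\mu_\textsf{min}$, etc.). The paper's mechanism (Lemma~\ref{Lem_conti_0_CD}, and Lemma~\ref{Lem_conti_0_CR} for the randomized case) is different in kind: run the $\varepsilon$-error algorithm, look at the final transcript $m$, and compare the posterior error $\Pr(\mathcal{E}_m)$ with $\sqrt{\varepsilon}$. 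When $\Pr(\mathcal{E}_m)\le\sqrt{\varepsilon}<\mu_\textsf{min}$ and $|\Theta|<\infty$, \emph{every} parameter consistent with $m$ must accept the output, so correctness is genuinely certified from the transcript rather than from agreement; only with probability $\le\sqrt{\varepsilon}$ (Markov) is a zero-error backup invoked, giving $\ED([P_D,0])\le\ED([P_D,\varepsilon])+\sqrt{\varepsilon}\,\ED([P_D,0])$ and hence the constant-factor comparison. The randomized case is then reduced to the distributional one by mixing the adversarial distribution with the uniform distribution on $\Theta$ and invoking the minimax theorem (Proposition~\ref{Prop_minimax}), which is why the error threshold there scales like $1/|\Theta|^2$. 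Without an argument of this type your lower bound stalls at $\EC([P_C,\varepsilon])$ and does not reach $\Theta(\EC([P_C,0]))$.
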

Together with the result~\cite{ABB+16} showing the function satisfying $\ER([f, 0]) = \tilde{O}(\sqrt{\mathsf{Det}(f)})$ as well as Proposition~\ref{Prop_appendix_cont}: $R([f, \varepsilon]) = \mathsf{Det}(f)$ for small $\varepsilon > 0$, Theorem~\ref{Thm_unified_amortized_Theta} gives the following corollary:
\begin{Cor}\label{Cor_separation_FKNN}
There is a function $f$ and small (but not too small) $\varepsilon > 0$ such that
\begin{equation*}
R([f^n, \varepsilon]) = \Theta(n \sqrt{\mathsf{Det}(f)}) \Mand R([f, \varepsilon]) = \mathsf{Det}(f)
\end{equation*}
hold.
\end{Cor}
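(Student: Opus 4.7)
The plan is to combine, in a direct way, the three ingredients flagged just before the corollary: Theorem~\ref{Thm_unified_amortized_Theta} in the classical randomized scenario, the explicit Boolean function of \cite{ABB+16} that separates zero-error expected randomized complexity from deterministic complexity, and Proposition~\ref{Prop_appendix_cont} for the single-instance equality.

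First I would specialize Theorem~\ref{Thm_unified_amortized_Theta} to $C = R$ and to the problem $P_R$ that encodes a Boolean function $f : \Bset^k \to \Bset$ in the natural way (the standard bit-value oracle on the input), so that the general quantities $R([P_R,\varepsilon])$, $\ER([P_R,\varepsilon])$ and their $n$-fold versions specialize exactly to the usual randomized query-complexity quantities in the statement. Applying Theorem~\ref{Thm_unified_amortized_Theta} then gives, for every sufficiently small $\varepsilon > 0$,
\begin{equation*}
\lim_{n\to\infty} \frac{R([f^n,\varepsilon])}{n} \;=\; \Theta\bigl(\ER([f,0])\bigr).
\end{equation*}

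Next I would plug in the explicit Boolean function $f$ of \cite{ABB+16}, which satisfies $\ER([f,0]) = \tilde{\Theta}(\sqrt{\mathsf{Det}(f)})$. Substituting this bound into the displayed limit and multiplying through by $n$ for all sufficiently large $n$ yields the first claimed equality $R([f^n,\varepsilon]) = \Theta(n \sqrt{\mathsf{Det}(f)})$. The second equality $R([f,\varepsilon]) = \mathsf{Det}(f)$ is then immediate from Proposition~\ref{Prop_appendix_cont}, which asserts exactly this identity for any Boolean function whenever $\varepsilon > 0$ is sufficiently small.

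The only point requiring care is that the two "sufficiently small $\varepsilon > 0$" regimes — one from Theorem~\ref{Thm_unified_amortized_Theta}, one from Proposition~\ref{Prop_appendix_cont} — have a common element that also works for the specific function of \cite{ABB+16}. Since both thresholds are independent of $n$, any $\varepsilon$ below the smaller of the two suffices, so this is not a genuine obstacle; the main conceptual content of the corollary is already contained in Theorem~\ref{Thm_unified_amortized_Theta} and in the construction of \cite{ABB+16}, and the corollary merely packages them together.
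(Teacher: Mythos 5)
Your argument is correct and is essentially the same proof the paper intends: specialize Theorem~\ref{Thm_unified_amortized_Theta} to the classical randomized query-complexity encoding of $f$, plug in the function of \cite{ABB+16} with $\ER([f,0]) = \tilde{\Theta}(\sqrt{\mathsf{Det}(f)})$, and use Proposition~\ref{Prop_appendix_cont} (together with the standard observation $R([f,0]) = \mathsf{Det}(f)$) for the single-instance equality. The one nuance you flag — that the two "small $\varepsilon$" thresholds (the $\varepsilon < 1/128|\Theta|^2$ regime of Theorem~\ref{Thm_unified_amortized_Theta} and the $\varepsilon < 2^{-\ell}$ regime of Proposition~\ref{Prop_appendix_cont}) must overlap and are both $n$-independent — is exactly the point behind the paper's phrase "small (but not too small) $\varepsilon$"; also note, as you implicitly do, that the first equality is really $\tilde{\Theta}$ and holds for all sufficiently large $n$.
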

This firstly gives an asymptotic separation of the type~\eqref{Intro_eq_FKNN} in classical randomized query complexity.
On the other hand, for a relatively large error such as $\varepsilon =1/3$, we can not get any non-constant advantage:
\begin{Cor}
For any boolean valued function $f$,
$R([f^n, 1/3]) = \Omega(n \cdot R([f, 1/3]))$
holds.
\end{Cor}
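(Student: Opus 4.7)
The plan is to combine the amortized characterization from Theorem~\ref{Thm_unified_amortized} with a standard Markov-plus-amplification trick that converts between expected and worst-case complexity when the output is boolean.

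First, I would note the trivial monotonicity
\begin{equation*}
R([f^n, 1/3]) \geq R([f, 1/3]^n),
\end{equation*}
because any algorithm for $f^n$ whose total error is at most $1/3$ errs, a fortiori, on each individual coordinate with probability at most $1/3$, so it also witnesses the right-hand side. This reduces the task to proving $R([f, 1/3]^n) \geq \Omega(n \cdot R([f, 1/3]))$.

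Next, I would apply Theorem~\ref{Thm_unified_amortized} with the classical randomized scenario $C = R$, the single-instance problem corresponding to $f$, and $\varepsilon = 1/3$, obtaining
\begin{equation*}
\lim_{n \to \infty} \frac{R([f, 1/3]^n)}{n} = \ER([f, 1/3]).
\end{equation*}
Chaining with the preceding inequality yields the asymptotic lower bound $R([f^n, 1/3]) \geq (1-o(1)) \cdot n \cdot \ER([f, 1/3])$.

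The crux is then to show $\ER([f, 1/3]) = \Omega(R([f, 1/3]))$ for boolean $f$. Take an expected-complexity algorithm $A$ for $f$ achieving error at most $1/9$ with expected query count $T = \ER([f, 1/9])$, and truncate it at $9T$ queries; Markov's inequality bounds the truncation probability on any fixed input by $1/9$, and defaulting to output $0$ on truncation loses at most another $1/9$ in error, yielding worst-case error at most $2/9 < 1/3$. Hence $R([f, 1/3]) \leq 9\, \ER([f, 1/9])$. Since $f$ is boolean, the usual majority-vote error amplification also works in the expected-complexity regime, so $\ER([f, 1/9]) = O(\ER([f, 1/3]))$. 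Combining these gives $R([f, 1/3]) = O(\ER([f, 1/3]))$, which together with the previous step proves the corollary.

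The main delicate point I expect is precisely this expected-to-worst-case conversion: Markov alone only affords us an extra constant slack in the error, which we can spend only because a single default bit is a valid output and amplification from error $1/9$ to $1/3$ is free for boolean $f$. For larger output alphabets this step would fail directly, which is presumably why the corollary is restricted to boolean-valued functions.
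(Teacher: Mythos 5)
Your proposal is correct and follows essentially the same route as the paper: the paper's one-line proof invokes exactly the same two ingredients, namely Theorem~\ref{Thm_unified_amortized} applied to $R([f,1/3]^n)$ together with the Markov-plus-amplification conversion $\ER([f,1/3]) = \Omega(R([f,1/3]))$, and you have simply expanded each step. One small remark: you do not need the $(1-o(1))$ slack when citing Theorem~\ref{Thm_unified_amortized}, because the lower-bound direction of that theorem comes from the additivity $\ER([f,1/3]^n) = n\,\ER([f,1/3])$ (Proposition~\ref{Prop_direct-sum_randomized_GO}) together with the trivial bound $R([f,1/3]^n) \geq \ER([f,1/3]^n)$, which already gives $R([f,1/3]^n) \geq n\,\ER([f,1/3])$ for every $n$, not just asymptotically.
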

\begin{proof}
By Markov inequality and the success amplification trick, $\ER([f, 1/3]) = \Omega(R([f, 1/3]))$ holds. Combining with Theorem~\ref{Thm_unified_amortized} shows the statement.
\end{proof}
These are the immediate corollaries from Theorem~\ref{Thm_unified_amortized} and Theorem~\ref{Thm_unified_amortized_Theta} in case of classical randomized query complexity. Other possible applications should be discussed in future research.

\paragraph{Second part: Direct sum theorems without the limit}
The main result for the second part is the following:
\begin{Thm}\label{Thm_direct_sum_expec_0}
(informal)
For any complexity scenario $C \in \{D, R, QD, QR\}$ and for almost any problem $P_C$,
\begin{equation*}
\EC([P_C^n, \varepsilon]) = \Theta(n \cdot \EC([P_C, 0]))
\end{equation*}
holds for any sufficiently small positive $\varepsilon$.
\end{Thm}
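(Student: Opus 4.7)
The plan is to prove the two sides of $\Theta$ separately. The upper bound is essentially trivial, whereas the lower bound reduces to the already-stated Theorem~\ref{Thm_unified_amortized_Theta} via Markov's inequality.

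For the upper bound $\EC([P_C^n, \varepsilon]) = O(n \cdot \EC([P_C, 0]))$, I would take the expected-complexity-optimal zero-error algorithm for a single instance of $P_C$ and run it sequentially on each of the $n$ instances, feeding each instance's queries to the corresponding oracle. Since each instance is answered with zero error, the composite algorithm trivially has error $0 \leq \varepsilon$ on all $n$ instances, and by linearity of expectation the total expected number of oracle calls is at most $n \cdot \EC([P_C, 0])$. This bound is independent of $\varepsilon$.

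For the lower bound $\EC([P_C^n, \varepsilon]) = \Omega(n \cdot \EC([P_C, 0]))$, the idea is to convert an expected-complexity algorithm into a worst-case one via Markov's inequality. Given any algorithm $\mathcal{A}$ achieving $T = \EC([P_C^n, \varepsilon])$, I would consider its truncation $\mathcal{A}'$ that aborts and outputs arbitrarily after $T/\varepsilon$ oracle calls. By Markov's inequality $\mathcal{A}$ exceeds this budget with probability at most $\varepsilon$, so $\mathcal{A}'$ has worst-case oracle complexity $T/\varepsilon$ and error at most $\varepsilon + \varepsilon = 2\varepsilon$. Hence
\[
C([P_C^n, 2\varepsilon]) \leq \EC([P_C^n, \varepsilon])/\varepsilon.
\]
Invoking Theorem~\ref{Thm_unified_amortized_Theta} at error level $2\varepsilon$ (still sufficiently small) gives $C([P_C^n, 2\varepsilon]) = \Omega(n \cdot \EC([P_C, 0]))$, and the factor $\varepsilon$ becomes an absolute constant once $\varepsilon$ is fixed, absorbed into the $\Theta$-notation.

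The main obstacle is really inherited from Theorem~\ref{Thm_unified_amortized_Theta}: establishing that for ``almost any'' problem the worst-case amortized complexity at small error already matches $n \cdot \EC([P_C, 0])$ will require a single-instance continuity property in the spirit of Proposition~\ref{Prop_appendix_cont}, namely $\EC([P_C, \varepsilon']) = \Theta(\EC([P_C, 0]))$ for all sufficiently small $\varepsilon' > 0$; the ``almost any'' caveat is presumably the non-degeneracy condition that enforces this. A secondary subtlety is that Theorem~\ref{Thm_unified_amortized_Theta} is phrased as an $n \to \infty$ limit while Theorem~\ref{Thm_direct_sum_expec_0} asserts the $\Theta$ bound for every $n$; small values of $n$ must be handled separately, most easily via the trivial inequality $\EC([P_C^n, \varepsilon]) \geq \EC([P_C, \varepsilon])$ combined with the same single-instance continuity to replace $\EC([P_C, \varepsilon])$ by $\EC([P_C, 0])$.
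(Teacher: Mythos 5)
Your upper bound is exactly the paper's: running the zero-error optimal single-instance algorithm $n$ times gives $\EC([P_C^n,\varepsilon]) \leq n\EC([P_C,0])$ by linearity of expectation, with no dependence on $\varepsilon$. That part is fine.

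The lower bound as you outline it is circular. The paper proves Theorem~\ref{Thm_unified_amortized_Theta} \emph{from} Theorem~\ref{Thm_direct_sum_expec_0}: its proof explicitly uses ``$\EC([P_C^n,\varepsilon]) \leq C([P_C^n,\varepsilon])$ and Theorem~\ref{Thm_direct_sum_expec_0}'' to get the lower bound on the worst-case amortized complexity. You cannot therefore invoke Theorem~\ref{Thm_unified_amortized_Theta} as the key step in proving Theorem~\ref{Thm_direct_sum_expec_0}. Your Markov-truncation reduction $\varepsilon\, C([P_C^n,2\varepsilon]) \leq \EC([P_C^n,\varepsilon])$ is correct, but it only moves the burden to the worst-case quantity without supplying a proof of that lower bound.

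The missing ingredient is \emph{additivity} of the expected complexity (Proposition~\ref{Prop_unified_additivity}), proved via a direct-sum argument on optimal algorithms (and, in the randomized scenarios, a minimax theorem, Proposition~\ref{Prop_minimax}). It gives
\begin{equation*}
\EC([P_C^n,\varepsilon]) \;\geq\; \EC([P_C,\varepsilon]^n) \;=\; n\,\EC([P_C,\varepsilon]),
\end{equation*}
which already produces the factor $n$ directly; the remaining step is the continuity at $\varepsilon=0$ that you correctly identify, namely Lemma~\ref{Lem_conti_0_CD} (distributional) or Lemma~\ref{Lem_conti_0_CR} (randomized) applied with $\alpha=0$, giving $\EC([P_C,\varepsilon]) \geq (1-\sqrt{\varepsilon})\,\EC([P_C,0])$ and its randomized analogue. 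Your fallback for small $n$, ``$\EC([P_C^n,\varepsilon]) \geq \EC([P_C,\varepsilon])$ plus continuity,'' drops the factor $n$ entirely, so it does not recover the claimed $\Omega(n\,\EC([P_C,0]))$ with constants uniform in $n$; additivity is what makes the bound work for every $n$ at once rather than only in the $n\to\infty$ limit. Also note that the ``almost any problem'' caveat is not a single-instance continuity property in the style of Proposition~\ref{Prop_appendix_cont} (that proposition is about $R$, the worst-case randomized query complexity, which is generally \emph{dis}continuous); it is the hypotheses $|\Theta|<\infty$, $[P_C,0]\neq\emptyset$, and (in the distributional case) $\mu$ non-trivial, which are exactly what the two continuity lemmas at $\varepsilon=0$ require.
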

Let us discuss several related works related to Theorem~\ref{Thm_direct_sum_expec_0}. In classical randomized complexity, Ref.~\cite{JKS10} showed the basic relations~\eqref{eq_intro_JKS10} and posed a question whether it is possible to eliminate the term $\delta^2$ as well as $\varepsilon/(1 - \delta)$ in the error exponent.
By Markov inequality and the success amplification trick showing $\ER([P_R, 0]) = \Omega(\log (1/\delta) R([P_R, \delta]))$, Theorem~\ref{Thm_direct_sum_expec_0} tells neither of them are required when the error is sufficiently small, and hence partly answers the question.
Another related work is Ref.~\cite{BB19} which shows the relation~\eqref{Intro_eq_direct-sum_randomized} in case of classical randomized query complexity.
Compared to the relation~\eqref{Intro_eq_direct-sum_randomized}, Theorem~\ref{Thm_direct_sum_expec_0} provides a better bound although it is applicable only for small $\varepsilon$.
Additionally, our results answer the open problem posed in Ref.~\cite[the sentence after Theorem~2]{BB19}: ``Whether or not $R([f^n, \varepsilon]) = \Omega(nR(f, \varepsilon/n))$ for any $f$ and $\varepsilon$?'' in the negative way, by the counter example given in Corollary~\ref{Cor_separation_FKNN}.
Lastly, we compare Theorem~\ref{Thm_direct_sum_expec_0} with the recent result~\cite{BKST24} that shows the direct sum relation~\eqref{eq_intro_BKST} in case of classical distributional query complexity.
One possible issue of the relation~\eqref{eq_intro_BKST} is that the bound become trivial for small $\varepsilon$, e.g., $\varepsilon =o(\sqrt{n})$.
Theorem~\ref{Thm_direct_sum_expec_0} overcomes this issue and shows the optimal bound when $\varepsilon$ is small.

\subsection{Proof techniques}

The keys for the proof of our results are the two properties that the quantity $\EC([P_C, \varepsilon])$ has: \emph{Additivity} and \emph{Continuity}.
Here we describe its meaning and how to prove them in detail.

\paragraph{Additivity}
The term ``additivity'' is sometimes used in several fields in information science such as Information theory.
In this work, the additivity property denotes the following:
\begin{equation*}
\EC([P_C, \varepsilon]^n)
=
n\cdot\EC([P_C, \varepsilon]).
\end{equation*}
For proof, we basically apply the following basic strategy:
\begin{itemize}
\item To prove $\EC([P_C, \varepsilon]^n) \leq n\cdot\EC([P_C, \varepsilon])$, take an optimal algorithm for $[P_C, \varepsilon]$ and run the algorithm $n$ times for the $n$ instances $P_C^n$.

\item To prove the opposite direction: $\EC([P_C, \varepsilon]^n) \geq n\cdot\EC([P_C, \varepsilon])$, take an optimal algorithm for $[P_C, \varepsilon]^n$ and take $i \in [n]$ uniformly at random. Then use the optimal algorithm to solve only the $i$'th instance $[P_C, \varepsilon]$.
\end{itemize}
This strategy, in turn, successfully yields the correct proof in case of $C \in \{D, QD\}$. 
However, in case of $C \in \{R, QR\}$ some additional technique is in fact necessary, because we need to optimize the algorithms over all inputs.
We therefore prove a version of minimax theorems~\cite{Neu28} as the additional technique
and apply it to prove the additivity in case of $C \in \{R, QR\}$.
Apart from the present work, several versions of minimax theorems are sometimes used in computer science~\cite{BGPW13, Bra15, BB23}.

\paragraph{Continuity}
The continuity literally means the following:
\begin{equation*}
\lim_{\rho \to \varepsilon} \EC([P_C, \rho])
= \EC([P_C, \varepsilon]).
\end{equation*}
Note that such a property does not hold in the case of the worst-case complexity.
A basic strategy for its proof is as follows.
Take two optimal algorithms $\pi$ for $[P_C, \rho]$ and $\pi'$ for $[P_C, \varepsilon/2]$, and run $\pi$ w.p. $1 - p$ and $\pi'$ w.p. $p$ ($p \in (0, 1)$).
When the probability $p$ is appropriately selected, the new algorithm turns out to have an error $\leq \varepsilon$ and has the complexity $\EC([P_C, \rho]) + O(|\rho - \varepsilon|)$.
This is the basic strategy for the proof and in fact works for any scenario $C \in \{D, R, QD, QR\}$ and any $\varepsilon$ except for $\varepsilon = 0$.
In case of $\varepsilon =0$, the proof is done by a different technique, a careful analysis on the output statistics of algorithms for $[P_C, \varepsilon]$.
Similar techniques have previously appeared in~\cite{BR11, BGPW13}.

\subsection{Organization of the paper}
Section~\ref{sec_Preliminaries} describes the notations, our models of computation and examples captured by our framework.
Section~\ref{sec_technical} collects several mathematical assumptions and facts used to prove some of our results.
Section~\ref{sec_additivity} is devoted for the proof of the additivity, and 
Section~\ref{sec_continuity} is done for the proof of the continuity.
Section~\ref{sec_opt} describes constructions of optimal algorithms.
Our main results are then shown in Section~\ref{sec_main-results}.
Several other propositions are left to Appendix.

\section{Preliminaries}\label{sec_Preliminaries}

For a compact metric space $\Theta$, we naturally view\footnote{due to the fact that any reversible, deterministic classical computation may be represented by a permutation matrix on its register. See~Ref.~\cite[Section~20.2]{AB09} for a detailed explanation.} a classical oracle as a set of stochastic matrices 
\begin{equation*}
\NT = \{\Nt \text{~is a stochastic matrix} \mid \theta \in \Theta\}
\end{equation*}
 (with a fixed input and output dimensions independent of $\theta$) that are continuous with respect to a parameter $\theta \in \Theta$. 
A query oracle is a special case of this definition, since we can take $\Theta := \Bset^n$ and $\mathcal{N}_{x^n}: i \mapsto x_i$ for $x^n = (x_i)_{i \leq n}\in \Bset^n$.
Analogously in quantum scenario, a quantum oracle  is a set 
\begin{equation*}
\NT = \{\Nt \text{~is a quantum channel} \mid \theta \in \Theta\}
\end{equation*}
 of quantum channels (with a fixed input and output dimensions independent of $\theta$) that are continuous (as the diamond norm) with respect to a parameter $\theta \in \Theta$\footnote{Any norm on the space of quantum channels yields the same topology, since we are dealing with finite dimensional quantum systems.}.

To examine general oracle problems such as state/channel estimation processes, query complexity, discrimination problems in a unified manner, we define a target function to compute as a set of subsets in $\mathbb{R}^d$. 
Formally, a target function is defined as $\mathcal{F}_\Theta := \{F_\theta \subset \mathbb{R}^d \mid \theta \in \Theta\}$ for $d \geq 1$, and we say an algorithm computes $\mathcal{F}_\Theta$ when the output of the algorithm belongs to $F_\theta$ where $\theta \in \Theta$ denotes the parameter of the given oracle.
For example in the ordinary query scenario for computing a binary function $f:\{0, 1\}^n \to \Bset$, the target function is defined as $\mathcal{F}_\Theta = \{F_\theta = \{f(\theta)\} \subset \{0, 1\}\}$, in which each $F_\theta$ has exactly one element $f(\theta)$.

\par
For any classical or quantum algorithm $\pi$ for computing $\mathcal{F}(\Theta)$, let $|\pi|$ be the number of the worse-case oracle calls of the algorithm $\pi$ and $\E[\pi]$ be the expectation of the number of oracle calls over all possible randomness such as classical randomness and/or quantum measurements. We sometimes write $\E_\mu[\pi]$ to explicitly express the underlying distribution $\mu$ of inputs.

\subsection{Classical scenarios}

\paragraph{Distributional case}

A distributional oracle problem $P_D :=\PD$ is defined by a target function $\mathcal{F}_\Theta$,
a classical oracle $\NT$ and a distribution $\mu$ on $\Theta$.
$[P_D, \varepsilon] =\FTE$ denotes the set of oracle algorithms $\pi$ which try to output an element in $F_\theta$ with the error $ \Pr(\pi_\mathsf{out} \notin F_\theta ) \leq \varepsilon$ when the parameter $\theta \in \Theta$ is distributed according to $\mu$, where $\pi_\mathsf{out}$ denotes the output of the algorithm $\pi$.
Similarly, $[P_D, \varepsilon]^n = \FTEn$ denotes the set of oracle algorithms $\pi_n$ which compute $\mathcal{F}_\Theta^n = (\mathcal{F}_\Theta, \ldots,  \mathcal{F}_\Theta)$ with coordinate-wise error $\varepsilon$ when the parameter $\theta^n = (\theta_1, \ldots, \theta_n)$ is distributed according to $\mu^n$.

Define
\begin{equation*}
\ED(\PDE) := \inf_{\pi \in \PDE} \E[\pi], \quad
\ED(\PDEn) := \inf_{\pi_n \in \PDEn} \E[\pi_n], 
\end{equation*}
and
\begin{equation*}
D(\PDE) := \min_{\pi \in \PDE} |\pi|, \quad
D(\PDEn) := \min_{\pi_n \in \PDEn} |\pi_n|.
\end{equation*}

\paragraph{Randomized case}

A randomized oracle problem $P_R :=\PR$ is defined similarly to that of distributed oracle problems, except that a distribution $\mu$ on $\Theta$ does not appear in the randomized scenario.
$[P_R, \varepsilon] = [\PR, \varepsilon]$ denotes the set of oracle algorithms $\pi$ which compute $\mathcal{F}_\Theta$ with error $\Pr(\pi_\mathsf{out} \in F_\theta) \leq \varepsilon$ for any parameter $\theta \in \Theta$.
$[P_R, \varepsilon]^{n} = [\PR, \varepsilon]^n$ denotes the set of oracle algorithms $\pi_n$ which compute $\mathcal{F}^{n}_\Theta = (\mathcal{F}_\Theta, \ldots,  \mathcal{F}_\Theta)$ with coordinate-wise error $\varepsilon$ for any parameter $\theta \in \Theta$.
Analogously, 
\begin{equation*}
\ER(\PRE) := \inf_{\pi \in \PRE} \max_{\mu \in \PT}\E_{\mu}[\pi], \quad
\ER(\PREn) := \inf_{\pi_n \in \PREn}\max_{\mu^{\otimes n} \in \PT^n} \E_{\mu^{\otimes n}}[\pi_n]
\end{equation*}
where $\PT := \{\mu: \text{a probability distribution on $\Theta$}\}$.  We can also define
\begin{equation*}
\ER_D(\PRE) :=  \max_{\mu \in \PT}\inf_{\pi \in \PRE}\E_{\mu}[\pi].
\end{equation*}
Interestingly, by Proposition~\ref{Prop_minimax}, these values coincide: $\ER(\PRE)=\ER_D(\PRE)$.\footnote{The space $\PT$ is known to be compact w.r.t. the weak-$\ast$ topology.}
The randomized oracle complexity is defined in the ordinary way:
\begin{equation*}
R(\PRE) := \min_{\pi \in \PRE} |\pi|, \quad
R(\PREn) := \min_{\pi_n \in \PREn} |\pi_n|.
\end{equation*}

\subsection{Quantum scenarios}\label{subsec_quantum-scenarios}

\paragraph{Model of computation}
In this paper, we employ the model shown in Figure~\ref{Fig_quantum_comp} as a natural model of quantum computation with oracle access. 
This model seems quite similar to the ordinary one, except for the following two points.
One is that, at each round,  measurements are performed on some registers and decide whether another query access is required based on the outcomes.
This is a natural solution for dealing with the \emph{average-case} query complexity.
The other difference is that, before an execution of quantum processes, classical randomness $R$ is used to select which operators are performed in execution.
This is essentially for creating classical continuous random variables. 
In the classical scenario, time-unbounded circuits have the power of producing continuous random variables such as the uniform distribution on the interval $[0, 1]$. 
However, in quantum scenario, an infinite dimensional Hilbert space is required to produce such random variables, which causes several obstacles. 
To overcome such difficulties, classical randomness is attached in this model, and the whole quantum system remains finite-dimensional.
As general information, also note that the output $\pi_\mathsf{out}$ can be a quantum state or a classical output in this model.
\begin{figure}[hbtp]
\centering
	\includegraphics[keepaspectratio, scale=0.35]{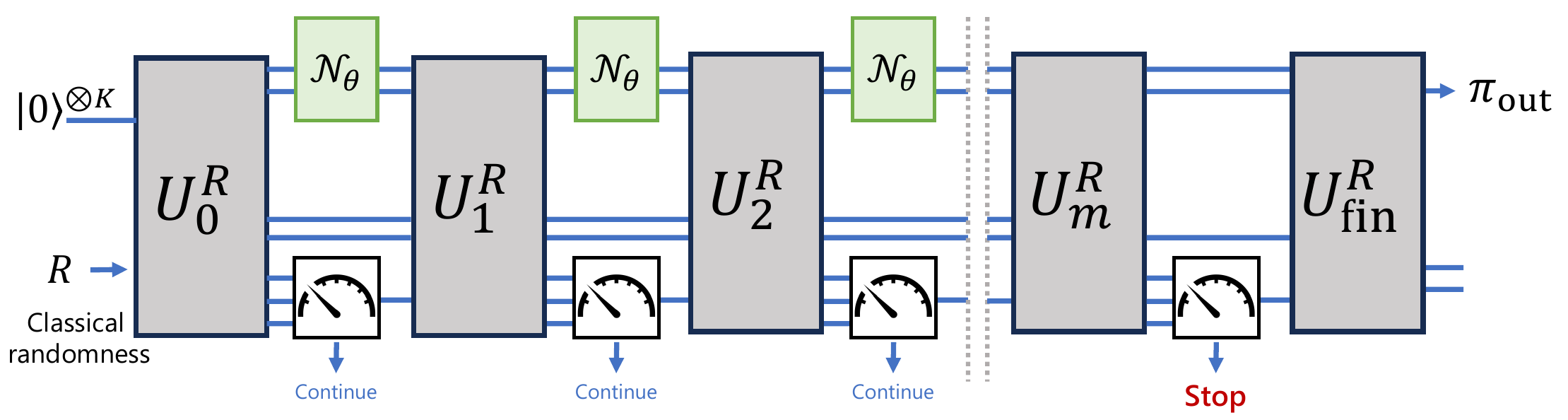}
	\caption{A general model of quantum computation with oracle access}
	\label{Fig_quantum_comp}
\end{figure}

In  case of the $n$ oracles $\{\mathcal{N}_{\theta_1}, \ldots, \mathcal{N}_{\theta_n}\}$, 
we focus on the model in which each selection of oracles is determined classically, as pictured in Figure~\ref{Fig_quantum_comp-n}.
In this model, each query access is selected by the classical randomness $R$ and the measurement outcome $M_i~(1 \leq i\leq m)$.
This model is weaker than the natural oracle model in which the selections of oracles are quantumly determined, i.e., 
the model where the oracle is defined by $\mathcal{N}_\mathsf{all} : |i \rangle \langle i| \otimes \rho \mapsto |i \rangle \langle i| \otimes \mathcal{N}_{\theta_i}(\rho)$.
\begin{figure}[hbtp]
\centering
	\includegraphics[keepaspectratio, scale=0.35]{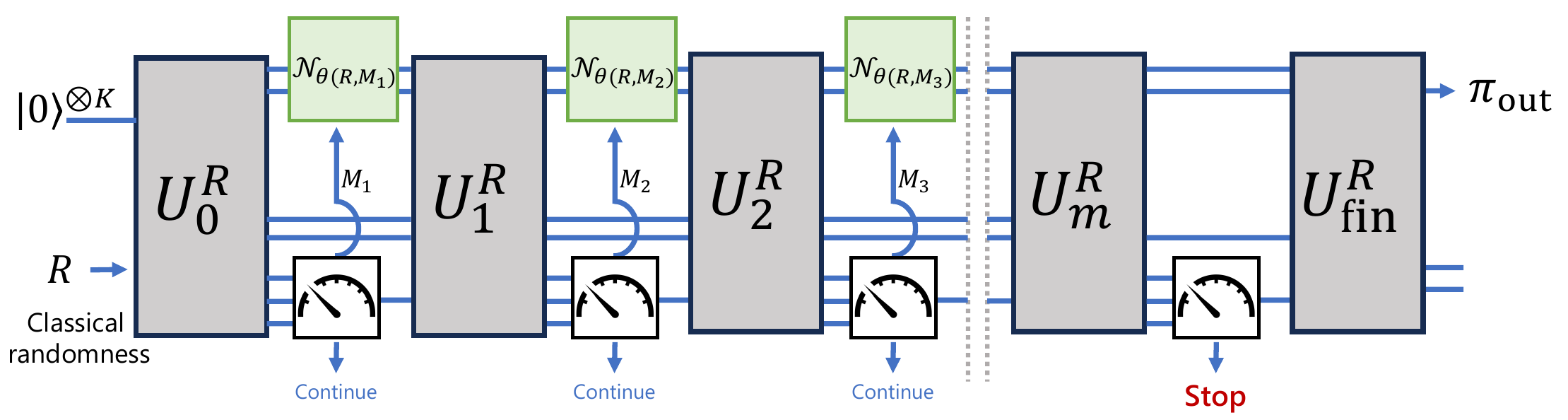}
	\caption{The model of quantum computation for $n$ oracles}
	\label{Fig_quantum_comp-n}
\end{figure}

\paragraph{Quantum distributed case}
A quantum distributed oracle problem expressed by $P_{QD} :=\PQD$ and 
the set of algorithms $[P_{QD}, \varepsilon] =\FQTE$ is defined similarly to that of classical distributed oracle problems. 
Note that in quantum scenarios, the output space $\mathcal{O}$ can be a quantum space.
$[P_{QD}, \varepsilon]^{n} = \FQTEn$ denotes the set of quantum oracle algorithms $\pi_n$ which compute $\mathcal{F}_\Theta^{n} = (\mathcal{F}_\Theta, \ldots, \mathcal{F}_\Theta)$ with coordinate-wise error $\varepsilon$ 
when the parameter $\theta^n = (\theta_1, \ldots, \theta_n)$ is distributed according to $\mu^n$.
Analogously, 
\begin{equation*}
\EQD([P_{QD}, \varepsilon]) := \inf_{\pi \in [P_{QD}, \varepsilon]} \E_{\mu}[\pi], \quad
\EQD([P_{QD}, \varepsilon]^{n}) := \inf_{\pi_n \in [P_{QD}, \varepsilon]^{n}} \E_{\mu^{\otimes n}}[\pi_n]
\end{equation*}
and
\begin{equation*}
QD([P_{QD}, \varepsilon]) := \min_{\pi \in [P_{QD}, \varepsilon]} |\pi|, \quad
QD([P_{QD}, \varepsilon]^{n}) := \min_{\pi_n \in [P_{QD}, \varepsilon]_\mathsf{Q}^{n}} |\pi_n|.
\end{equation*}

\paragraph{Quantum randomized case}
A quantum randomized oracle problem expressed by $P_{QR} :=\PR$ and
the set of algorithms $\PQRE = \FQRE$ is defined similarly to that of randomized oracle problems.
$\PQREn =\FQREn$ denotes the set of quantum oracle algorithms $\pi_n$ which compute $\mathcal{F}_\Theta^n = (\mathcal{F}_\Theta, \ldots, \mathcal{F}_\Theta)$ with coordinate-wise error $\varepsilon$ for any parameter $\theta \in \Theta$.
Analogously, 
\begin{equation*}
\EQR(\PQRE) := \inf_{\pi \in \PQRE} \max_{\mu \in \PT}\E_{\mu}[\pi], \quad
\EQR(\PQREn) := \inf_{\pi_n \in \PQREn}\max_{\mu^{\otimes n} \in \PT^n} \E_{\mu^{\otimes n}}[\pi_n]
\end{equation*}
and
\begin{equation*}
\EQR_D(\PQRE) :=  \max_{\mu \in \PT}\inf_{\pi \in \PQRE}\E_{\mu}[\pi].
\end{equation*}
Then by Proposition~\ref{Prop_minimax}, these values coincide: $\EQR(\PQRE)=\EQR_D(\PQRE)$.
The randomized oracle complexity is defined in the ordinary way:
\begin{equation*}
QR(\PQRE) := \min_{\pi \in \PQRE} |\pi|, \quad
QR(\PQREn) := \min_{\pi_n \in \PQREn} |\pi_n|.
\end{equation*}

\subsection{Examples in this framework}\label{subsec_examples}
Here we describe how our framework is applied to different complexity scenarios.

\paragraph{Classical/Quantum query complexity}
In this scenario, one aims to compute a (possibly promise function or relation) function  $f: \Bset^\ell \to \Bset$  efficiently.
In our framework, this scenario is represented by defining as follows:
\begin{itemize}
\item $\Theta := \Bset^\ell$ (or a subset of $\Bset^\ell$ in case for promise functions).
\item $F_\theta := \{f(\theta)\},~ (\theta \in \Theta = \Bset^\ell)$, the set that has one element $f(\theta)$. In case of relations, $F_\theta$ may have several different elements.
\item In classical case, $\mathcal{N}_x~(x \in \Bset^l)$ takes $i \in [l]$ as input and output $x_i$ w.p. exactly one. In quantum case, $\mathcal{N}_x: |i, a\rangle \mapsto |i\rangle|x_i \oplus a\rangle$ for $ a \in \Bset$.
\end{itemize}

\paragraph{Classical parameter estimation theory}
In this scenario, one aims to estimate a true parameter  $\theta \in \Theta~(\Theta \subset \mathbb{R}^d)$  efficiently,
when $\theta$ is unknown at the beginning but allowed to sample $x \in \mathcal{X}$ from a set $\mathcal{X}$ according to the distribution $x \sim p_\theta(x)$.
In our framework, this scenario is represented by defining as follows:
\begin{itemize}
\item $\Theta$ is the parameter space.
\item $F_\theta := \{\theta\},~ (\theta \in \Theta)$, the set that has one element $\theta$.
\item $\mathcal{N}_\theta$ takes nothing as input but output $x$ w.p. $p_\theta(x)$.
\end{itemize}

\paragraph{Quantum parameter estimation theory}
In this scenario, one aims to estimate a true parameter  $\theta \in \Theta~(\Theta \subset \mathbb{R}^d)$  efficiently,
when $\theta$ is unknown at the beginning but allowed to take a state $\rho_\theta$ arbitrarily many times.
In our framework, this scenario is represented by defining as follows:
\begin{itemize}
\item $\Theta$ is the parameter space.
\item $F_\theta := \{\theta\},~ (\theta \in \Theta)$, the set that has one element $\theta$.
\item $\mathcal{N}_\theta$ takes nothing as input but output $\rho_\theta$.
\end{itemize}

\paragraph{Classical PAC learning}
In this scenario, one beforehand knows an instance space $X$ and a set of possible concepts $\mathcal{C}$ that is a subset of the set of all concepts $\{c:X \to \Bset\}$.
Then the one aims to estimate some unknown concept $h \in \mathcal{C}$ with precision $1 - \delta$, by sampling only $h(x) \in \Bset$ where $x \in X$ obeys an unknown distribution $D \in \mathcal{D}$ on $X$.
In our framework, this scenario is represented by defining as follows:
\begin{itemize}
\item $\Theta := \mathcal{C} \times \mathcal{D}$.
\item $F_{(h, D)} := \{c \in \mathcal{C} \mid \Pr_D(h(x) \neq c(x)) \leq \delta\},~ ((h, D) \in \Theta )$.
\item $\mathcal{N}_\theta$ takes nothing as input but output $h(x)$ according the distribution $x \sim D$.
\end{itemize}

\section{Technical assumptions and facts}\label{sec_technical}
\begin{itemize}
\item For any algorithm $\pi$, $|\pi|$ is assumed to be finite.
\item For any small $\varepsilon > 0$, we assume $[P_C, \varepsilon]$  is not empty (and so is $[P_C, \varepsilon]^n$). This also implies that these sets can be empty when $\varepsilon = 0$. This condition becomes necessary when dealing with several instances such as parameter estimation processes. Note that we sometimes implicitly assume $[P_C, 0] \neq \emptyset$ when there is no confusion, such as in Lemma~\ref{Lem_algo_CD}.

\item The space $\mathcal{P}(\Theta)$ is formally defined as 
\begin{equation*}
\mathcal{P}(\Theta) := \{\text{a Borel probability measure}~ \mu \on \Theta \}.
\end{equation*}
\end{itemize}

The following facts come from functional analysis, specifically from the Banach-Alaoglu theorem~\cite[Theorem~3.15]{Rud91}.

\begin{Fact}
For any compact metric space $\Theta$, $\mathcal{P}(\Theta)$ is compact w.r.t. weak-$\ast$ topology.
\end{Fact}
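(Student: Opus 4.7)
The plan is to realize $\mathcal{P}(\Theta)$ as a weak-$\ast$ closed subset of the closed unit ball of $C(\Theta)^*$ and then apply the Banach--Alaoglu theorem. Since $\Theta$ is a compact metric space, $C(\Theta)$ equipped with the supremum norm is a (separable) Banach space, and the Riesz--Markov--Kakutani representation theorem identifies $C(\Theta)^*$ isometrically with the space $M(\Theta)$ of finite signed regular Borel measures on $\Theta$ equipped with the total variation norm. Under this identification the weak-$\ast$ topology on $M(\Theta)$ is exactly the topology of convergence $\int f\, d\mu_n \to \int f\, d\mu$ for every $f \in C(\Theta)$, matching the notion used in the Fact.

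First I would apply Banach--Alaoglu to conclude that the closed unit ball $B := \{\mu \in M(\Theta) : \|\mu\|_{\mathrm{TV}} \leq 1\}$ is weak-$\ast$ compact. Every $\mu \in \mathcal{P}(\Theta)$ satisfies $\|\mu\|_{\mathrm{TV}} = 1$, so $\mathcal{P}(\Theta) \subset B$, and it suffices to verify that $\mathcal{P}(\Theta)$ is weak-$\ast$ closed inside $B$, since closed subsets of compact sets are compact. To that end, I would write $\mathcal{P}(\Theta)$ as an intersection of preimages of closed sets in $\mathbb{R}$ under the weak-$\ast$ continuous evaluation functionals $\mu \mapsto \int f\, d\mu$: the normalization $\mu(\Theta) = 1$ is the single equality constraint $\int 1\, d\mu = 1$, while positivity is encoded by the family of inequalities $\int f\, d\mu \geq 0$ indexed over non-negative $f \in C(\Theta)$. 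Each such condition is weak-$\ast$ closed, so the intersection is too.

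The only non-routine step will be verifying that the condition ``$\int f\, d\mu \geq 0$ for every non-negative $f \in C(\Theta)$'' actually forces the signed measure $\mu$ to be a positive Borel measure, rather than merely constraining its action on continuous functions. I would handle this by invoking regularity of $\mu \in M(\Theta)$ together with Urysohn's lemma: for any open $U \subset \Theta$ the indicator $\mathbf{1}_U$ can be approximated monotonically from below by non-negative continuous functions, yielding $\mu(U) \geq 0$, and regularity then promotes this to $\mu(E) \geq 0$ for every Borel set $E$. Combining this characterization with the Banach--Alaoglu step shows that $\mathcal{P}(\Theta)$ is a weak-$\ast$ closed subset of the weak-$\ast$ compact ball $B$, hence itself weak-$\ast$ compact.
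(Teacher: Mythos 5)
Your proposal is correct and follows the standard route that the paper itself gestures at: the paper gives no proof of this Fact, only a citation to the Banach--Alaoglu theorem in Rudin, and your argument is precisely the fleshed-out version of that citation (Riesz--Markov--Kakutani identification of $C(\Theta)^*$ with $M(\Theta)$, Banach--Alaoglu for compactness of the unit ball, and weak-$\ast$ closedness of the normalization and positivity constraints). The one place to be a bit careful, which you flag as the ``non-routine step,'' is the passage from ``$\int f\,d\mu \ge 0$ for all non-negative $f \in C(\Theta)$'' to ``$\mu$ is a positive measure'': since $\mu$ is a priori a signed measure, the monotone approximation $f_n \uparrow \mathbf{1}_U$ should be justified by applying monotone convergence separately to $\mu^+$ and $\mu^-$ (both finite), and the step from open sets to general Borel sets uses outer regularity of $|\mu|$, not of $\mu$ itself. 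Alternatively, a marginally slicker finish is to observe that a weak-$\ast$ limit of probability measures is automatically a positive linear functional on $C(\Theta)$ sending $1 \mapsto 1$, and then invoke the Riesz representation theorem for positive functionals directly, which hands you a probability measure without detouring through signed measures at all.
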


\begin{Fact}
For any element $\theta_0$ in $\Theta$, the Dirac measure $\delta_{\theta_0}$ is an element of $\mathcal{P}(\Theta)$.
\end{Fact}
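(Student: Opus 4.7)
The plan is to verify directly that the Dirac measure $\delta_{\theta_0}$, defined by $\delta_{\theta_0}(A) := \mathbf{1}[\theta_0 \in A]$ for every Borel set $A \subset \Theta$, satisfies the three defining requirements of a Borel probability measure on $\Theta$: well-definedness on the Borel $\sigma$-algebra, total mass equal to one, and countable additivity.

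First I would observe that $\delta_{\theta_0}$ is well-defined. Since $\Theta$ is a metric space it is Hausdorff, so the singleton $\{\theta_0\}$ is closed, and consequently Borel; thus membership of $\theta_0$ in any Borel set $A$ is unambiguous, and $\delta_{\theta_0}(A) \in \{0,1\}$ is determined.

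Next I would check the probability axioms. Nonnegativity and $\delta_{\theta_0}(\emptyset) = 0$ are immediate, while $\delta_{\theta_0}(\Theta) = 1$ holds because $\theta_0 \in \Theta$. For countable additivity, given pairwise disjoint Borel sets $A_1, A_2, \ldots$, the point $\theta_0$ lies in at most one of them, whence
\begin{equation*}
\delta_{\theta_0}\!\left(\bigcup_i A_i\right) = \mathbf{1}\!\left[\theta_0 \in \bigcup_i A_i\right] = \sum_i \mathbf{1}[\theta_0 \in A_i] = \sum_i \delta_{\theta_0}(A_i),
\end{equation*}
the middle equality using disjointness.

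Essentially no obstacle arises: this is a routine measure-theoretic verification. The only mildly nontrivial ingredient is the Borel measurability of $\{\theta_0\}$, which follows automatically from $\Theta$ being a (Hausdorff) metric space, so no additional structural hypothesis on $\Theta$ beyond what is already assumed in Section~\ref{sec_Preliminaries} is needed.
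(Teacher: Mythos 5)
Your verification is correct and, since the paper states this as a Fact without proof, it is exactly the kind of routine check the author had in mind. The one ingredient worth flagging explicitly — that $\{\theta_0\}$ is Borel because $\Theta$ is a (Hausdorff) metric space — you did flag, and the countable-additivity step correctly uses that $\theta_0$ lies in at most one of any pairwise disjoint family. No gap.
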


\begin{Fact}\label{Fact_conti_expe}
For any algorithm $\pi$,  its expectation $\E_\mu[\pi]$ and standard deviation $\sigma(\pi, \mu)$ are continuous w.r.t. $\mu \in \PT$.
\end{Fact}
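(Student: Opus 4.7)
The plan is to reduce both continuity statements to the standard fact that weak-$\ast$ convergence on $\mathcal{P}(\Theta)$ integrates all bounded continuous functions. Concretely, fix an algorithm $\pi$ with $|\pi| \leq m < \infty$ (using the first standing assumption). For each $\theta \in \Theta$, let
\[
T(\theta) := \E\bigl[\#\{\text{oracle calls of }\pi\} \mid \theta\bigr], \quad S(\theta) := \Pr\bigl(\pi_{\mathsf{out}} \notin F_\theta \mid \theta\bigr),
\]
where the expectations are over the internal randomness $R$ of $\pi$ and the oracle's (classical or quantum) randomness. Then
\[
\E_\mu[\pi] = \int_\Theta T(\theta)\, d\mu(\theta), \qquad \sigma(\pi, \mu) = \int_\Theta S(\theta)\, d\mu(\theta),
\]
and both $T$ and $S$ are bounded ($T \leq m$, $S \leq 1$). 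So once I prove that $T$ and $S$ are \emph{continuous} functions on the compact metric space $\Theta$, the fact will follow from the definition of the weak-$\ast$ topology.

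The key step is therefore continuity in $\theta$. I would unroll $\pi$ round-by-round. At round $k$, conditioned on the classical randomness $r$ and the previously observed transcript, the algorithm either halts or applies a fixed (independent of $\theta$) operation followed by an oracle call $\mathcal{N}_\theta$. In the classical setting, the probability of any particular length-$k$ transcript is a multilinear polynomial in the entries of the stochastic matrices $\mathcal{N}_\theta$, and by assumption these entries depend continuously on $\theta$; since the tree has depth at most $m$ and only finitely many branches at each node (the alphabets $\mathcal{X}, \mathcal{Y}$ are finite), the probability of each leaf is a continuous function of $\theta$, and $T(\theta), S(\theta)$ are finite sums of such probabilities weighted by bounded numbers (query counts $\leq m$, or indicators $1[\pi_{\mathsf{out}} \notin F_\theta]$ — here I also use that $F_\theta$ enters only through $\theta$'s impact on the output distribution since the transcript already fixes $\pi_{\mathsf{out}}$). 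In the quantum setting, I replace ``multilinear in matrix entries'' with ``Lipschitz in diamond norm'': between measurements the algorithm alternates fixed unitaries with $\mathcal{N}_\theta$, and the composition $\mathcal{N}_\theta \circ U \circ \cdots$ is continuous in $\theta$ in diamond norm because composition is jointly continuous; the finitely many Born-rule probabilities governing each round's measurement are then continuous in $\theta$, and integrating the classical randomness $R$ over its (possibly continuous) distribution produces a continuous $T(\theta), S(\theta)$ by dominated convergence (using the $m$-boundedness).

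Once $T$ and $S$ are known to be bounded and continuous, I invoke the standard characterization: $\mu_n \to \mu$ weak-$\ast$ in $\mathcal{P}(\Theta)$ iff $\int f\, d\mu_n \to \int f\, d\mu$ for every $f \in C_b(\Theta)$. Applying this to $f = T$ and $f = S$ yields $\E_{\mu_n}[\pi] \to \E_\mu[\pi]$ and $\sigma(\pi, \mu_n) \to \sigma(\pi, \mu)$, which is the desired continuity.

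I expect the main obstacle to be the quantum case with classical randomness $R$ drawn from a possibly continuous distribution: I must justify passing continuity through the $R$-integral. This is handled cleanly by dominated convergence, since for each fixed $r$ the conditional quantities are continuous in $\theta$ (by the diamond-norm argument above) and are uniformly bounded by $m$ and $1$ respectively. The rest — the inductive unrolling and the multilinearity/continuity of finite round-by-round statistics — is routine once one is careful about the adaptive structure depicted in Figures~\ref{Fig_quantum_comp} and \ref{Fig_quantum_comp-n}.
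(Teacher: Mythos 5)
Your treatment of $\E_\mu[\pi]$ is sound and is essentially the paper's argument: show that, for fixed $\pi$, the relevant per-$\theta$ statistics (probabilities of the finitely many transcripts of the depth-$\leq m$ tree, hence the probability of halting at step $i$) are continuous in $\theta$ because the oracle $\mathcal{N}_\theta$ depends continuously on $\theta$, then pass to $\mu$ via the defining property of the weak-$\ast$ topology on $\mathcal{P}(\Theta)$.

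However, there is a genuine gap in your handling of the second quantity. In this paper $\sigma(\pi,\mu)$ is \emph{not} the error probability: it is the standard deviation of the number of oracle calls of $\pi$ under $\mu$ (this is how it is used in Lemma~\ref{Lem_algo_CR}, where $\sigma_n(\pi^n_\alpha,\mu^{\otimes n})=\sqrt{\sum_{i\leq n}\sigma^2(\pi_\alpha,\mu_i)}$). Your identification $\sigma(\pi,\mu)=\int_\Theta S(\theta)\,d\mu(\theta)$ with $S(\theta)=\Pr(\pi_{\mathsf{out}}\notin F_\theta\mid\theta)$ therefore proves continuity of the wrong functional. Moreover, no representation of $\sigma(\pi,\mu)$ as a $\mu$-integral of a fixed bounded function of $\theta$ is available, because a standard deviation is not affine in $\mu$: the variance of a mixture is the mean of the conditional variances \emph{plus} the variance of the conditional means, so even integrating the per-$\theta$ standard deviations would not give $\sigma(\pi,\mu)$. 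The correct (and easy) repair is the one the paper uses: write
\begin{equation*}
\sigma^2(\pi,\mu)=\E_\mu[\pi^2]-\E_\mu[\pi]^2,
\qquad
\E_\mu[\pi^2]=\int_\Theta \Bigl(\sum_{i\leq |\pi|} i^2\,\Pr\nolimits_\theta(\text{$\pi$ finishes at the $i$th step})\Bigr)d\mu(\theta),
\end{equation*}
note that the integrand is a finite sum of bounded continuous functions of $\theta$ (by exactly your transcript argument), so both moments are weak-$\ast$ continuous in $\mu$, and conclude by continuity of the square root. With this substitution your argument goes through; the rest of your write-up, including the dominated-convergence handling of the classical randomness $R$ in the quantum model, is compatible with the paper's (terser) proof.
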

\begin{proof}
First, observe that the probability of an algorithm $\pi$ finishing at the $i$th step is continuous due to the continuity of $\mathcal{N}_\theta$.
Let $\Pr_\theta(\text{$\pi$ finishes at the $i$th step})$ be the probability of an algorithm $\pi$ finishing at the $i$th step.
Then the expectation $\E_\theta[\pi]$, when the chosen parameter is $\theta$, is represented as
\begin{equation*}
\E_\theta[\pi] = \sum_{i \leq |\pi|} i \cdot \Pr_\theta(\text{$\pi$ finishes at the $i$th step})
\end{equation*}
which is a finite sum of continuous functions, and therefore $\E_\theta[\pi]$ is continuous w.r.t. $\theta$.
Since for any continuous function $f \in C(\Theta)$, $E_\mu[f]$ is continuous w.r.t. $\mu \in \mathcal{P}(\Theta)$, $\E_\mu[\pi]$ is continuous.
\par
For the standard deviation $\sigma(\pi, \mu)$, just use $\sigma^2(\pi, \mu) = \E_\mu[\pi^2] - \E_\mu^2[\pi]$.
\end{proof}

\section{Additivity}\label{sec_additivity}

\subsection{Classical distributional case}\label{subsec_add_CD}
\begin{Lem}\label{Lem_nleq_GO}
For any $P_D = \PD$ and $\varepsilon \in [0, 1]$,
$n\ED(\PDE)\leq \ED(\PDEn).$
\end{Lem}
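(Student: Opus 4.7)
The plan is to apply the ``restrict to one coordinate'' strategy outlined in the introduction, making the single-instance algorithm by simulating the other $n-1$ instances with fresh samples from $\mu$ and placing the real instance at a uniformly random coordinate.

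Fix any $\delta > 0$ and take a near-optimal algorithm $\pi_n \in [P_D, \varepsilon]^n$ with $\E_{\mu^n}[\pi_n] \leq \ED([P_D,\varepsilon]^n) + \delta$. For each $\theta^n \in \Theta^n$ and each coordinate $j \in [n]$, let $Q_j(\pi_n, \theta^n)$ denote the (random) number of oracle calls that $\pi_n$ directs at the $j$-th copy when run on parameters $\theta^n$. Then $\sum_{j=1}^{n} Q_j(\pi_n, \theta^n)$ is the total number of oracle calls, so $\sum_{j=1}^{n} \E_{\mu^n}[Q_j] = \E_{\mu^n}[\pi_n]$.

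From $\pi_n$ I would construct an algorithm $\pi$ for $[P_D, \varepsilon]$ as follows: on input oracle $\Nt$ (with $\theta \sim \mu$), first sample $i \in [n]$ uniformly and independently sample $\theta_j \sim \mu$ for each $j \neq i$; then simulate $\pi_n$ on the composite parameter $(\theta_1, \ldots, \theta_{i-1}, \theta, \theta_{i+1}, \ldots, \theta_n)$, routing every query to coordinate $i$ through the real oracle $\Nt$ and generating responses at coordinate $j \neq i$ internally using the sampled $\theta_j$. Finally, output the $i$-th coordinate of $\pi_n$'s output. Two facts must be checked: (a) marginally $(\theta_1,\ldots,\theta_n) \sim \mu^n$, so by the coordinate-wise error guarantee of $\pi_n$ the output lies in $F_\theta$ with failure probability $\leq \varepsilon$, which puts $\pi \in [P_D, \varepsilon]$; and (b) the number of real oracle calls made by $\pi$ equals $Q_i(\pi_n, \theta^n)$, so
\begin{equation*}
\E_\mu[\pi] \;=\; \E_i \E_{\theta^n \sim \mu^n}[Q_i(\pi_n, \theta^n)] \;=\; \frac{1}{n}\sum_{j=1}^n \E_{\mu^n}[Q_j] \;=\; \frac{\E_{\mu^n}[\pi_n]}{n} \;\leq\; \frac{\ED([P_D,\varepsilon]^n) + \delta}{n}.
\end{equation*}
Taking an infimum over $\pi$ gives $\ED([P_D, \varepsilon]) \leq (\ED([P_D,\varepsilon]^n)+\delta)/n$, and letting $\delta \to 0$ yields the lemma.

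The steps are essentially routine; the only point that requires care is the book-keeping of per-coordinate oracle-call counts $Q_j$ and the check that the described simulation produces exactly the distribution over transcripts that $\pi_n$ would produce on genuine $\mu^n$-inputs, so that the coordinate-wise error bound of $\pi_n$ transfers to $\pi$. Finiteness of $|\pi|$ (required by the assumption in Section~\ref{sec_technical}) is inherited from $|\pi_n|$, since $\pi$ makes at most $|\pi_n|$ oracle calls to the real oracle.
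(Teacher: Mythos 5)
Your proposal is correct and matches the paper's proof essentially verbatim: both restrict to a uniformly random coordinate $i$, privately simulate the other $n-1$ coordinates with fresh samples from $\mu$, route only the $i$-th coordinate's queries to the real oracle, and conclude by the per-coordinate accounting $\E_\mu[\tilde{\pi}] = \frac{1}{n}\E_{\mu^n}[\pi_n]$. Your explicit introduction of the per-coordinate query counts $Q_j$ and the $\delta$-approximation of the infimum are just careful write-ups of steps the paper states more tersely.
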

\begin{proof}
Take $\pi_n \in \PDEn$ satisfying $\E(\pi_n) = \ED(\PDEn)$ (if there are no such algorithms, take a sequence converging to $\E(\PDEn)$.).
From $\pi_n$, we create $\tilde{\pi} \in \PDE$ with input $\theta \in \Theta$ as follows.
\begin{enumerate}
\item Pick $i \in [n]$ uniformly at random.
\item Privately pick $\tilde{\theta}^{n-1} \sim \mu^{n-1}$.
\item Run $\pi_n$ with input $(\tilde{\theta}, \ldots, \theta, \ldots, \tilde{\theta})$ in which $\theta$ is inserted at the $i$th position.
Note that every oracle access to $\tilde{\theta}^{n-1}$ is internally done without access to the actual oracle for $\theta$.
\end{enumerate}
This shows $\E[\tilde{\pi}]= \frac{1}{n}\E[\pi_n] =\frac{1}{n} \ED(\PDEn)$, which implies $n \ED(\PDE) \leq \ED(\PDEn)$.
This completes proof.
\end{proof}

\begin{Lem}\label{Lem_leqn_GO}
For any $P_D = \PD$ and $\varepsilon \in [0, 1]$,
$\E(\PDEn) \leq n\ED(\PDE) .$
\end{Lem}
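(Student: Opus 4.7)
The plan is to prove this direction by the natural ``run in parallel'' construction sketched in the introduction, which is the reverse of the one used in Lemma~\ref{Lem_nleq_GO}. Concretely, I would start by fixing an arbitrary $\eta > 0$ and choosing an algorithm $\pi \in \PDE$ with $\E_\mu[\pi] \leq \ED(\PDE) + \eta$; the infimum in the definition of $\ED(\PDE)$ need not be attained, so working with an $\eta$-optimal $\pi$ and sending $\eta \to 0$ at the end is the cleanest way to avoid that annoyance.

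From $\pi$, I would build an algorithm $\pi_n$ for $\PDEn$ as follows: on input $\theta^n = (\theta_1, \ldots, \theta_n) \sim \mu^n$, run $\pi$ sequentially on each coordinate $\theta_i$, directing every oracle query made by the $i$-th copy of $\pi$ to the $i$-th oracle $\mathcal{N}_{\theta_i}$, and output the tuple $(\pi_\mathsf{out}^{(1)}, \ldots, \pi_\mathsf{out}^{(n)})$ of the $n$ outputs. Since each copy is invoked on an independent draw from $\mu$, the $i$-th output satisfies $\Pr(\pi_\mathsf{out}^{(i)} \notin F_{\theta_i}) \leq \varepsilon$, so $\pi_n \in \PDEn$ (the coordinate-wise error condition). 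For the expected cost, linearity of expectation and the fact that the $n$ copies of $\pi$ are run on i.i.d.\ inputs give
\begin{equation*}
\E_{\mu^n}[\pi_n] \;=\; \sum_{i=1}^n \E_\mu[\pi] \;=\; n \cdot \E_\mu[\pi] \;\leq\; n\bigl(\ED(\PDE) + \eta\bigr).
\end{equation*}
Taking the infimum over $\pi_n \in \PDEn$ on the left and then letting $\eta \to 0$ yields $\ED(\PDEn) \leq n \cdot \ED(\PDE)$, as desired.

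I do not expect any serious obstacle here: the only minor technical point is to handle the possibility that the infimum defining $\ED(\PDE)$ is not attained, which the $\eta$-relaxation above handles, and to observe that the sequential scheduling of the $n$ copies of $\pi$ makes the construction a well-defined algorithm in the sense of Section~\ref{sec_Preliminaries} (each copy's queries depend only on the previous answers from its own oracle, so there is no measurability issue). Combined with Lemma~\ref{Lem_nleq_GO}, this establishes the additivity $\ED(\PDEn) = n\cdot\ED(\PDE)$ in the classical distributional setting.
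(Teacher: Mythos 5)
Your proof is correct and takes essentially the same approach as the paper: build $\pi_n$ by running an (approximately) optimal single-instance algorithm $\pi$ on each of the $n$ coordinates, then invoke linearity of expectation. The only difference is cosmetic care --- you relax to an $\eta$-optimal $\pi$ and send $\eta \to 0$, whereas the paper simply writes ``take $\pi$ satisfying $\E[\pi] = \ED(\PDE)$'' (it handles the non-attainment caveat explicitly in Lemma~\ref{Lem_nleq_GO} but not here); the substance is identical.
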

\begin{proof}
Take $\pi \in \PDE$ satisfying $\E[\pi] =\ED(\PDE)$.
Let us create a new algorithm $\pi_n$ for $\mathcal{F}_\Theta^{n}$, by running $\pi$ repeatedly $n$ times.\footnote{Without written explicitly, any algorithm must be terminated with a finite number of oracle calls, infinitely many number of calls is not allowed in any algorithm.}
We see $\pi_n \in \PDEn$ and $\E[\pi_n] = n \E[\pi] = n\ED(\PDE)$.
Therefore, the definition of $\ED(\PDEn)$ implies $\ED(\PDEn)  \leq n\ED(\PDE)$.
This completes proof.
\end{proof}

Combining Lemma~\ref{Lem_nleq_GO} and Lemma~\ref{Lem_leqn_GO}, we get the additivity of $\ED(\PDEn)$:

\begin{Prop}\label{Prop_equal_GO}
$\ED(\PDEn) = n\ED(\PDE) .$
\end{Prop}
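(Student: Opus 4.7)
The plan is to observe that the claimed equality is already packaged into the two preceding lemmas and to simply chain them. Lemma~\ref{Lem_leqn_GO} gives the upper bound $\ED(\PDEn) \leq n\ED(\PDE)$ by taking an (almost) optimal algorithm $\pi$ for $[P_D,\varepsilon]$ and running it independently once per coordinate: the per-coordinate error is preserved since each copy is a legitimate member of $\PDE$, and the expected total cost adds to $n\E[\pi]$. Lemma~\ref{Lem_nleq_GO} supplies the matching lower bound $n\ED(\PDE) \leq \ED(\PDEn)$ via the embed-and-simulate reduction: given an (almost) optimal $\pi_n$ for $[P_D,\varepsilon]^n$, one samples a private $\tilde\theta^{n-1}\sim\mu^{n-1}$, inserts the real input at a uniformly chosen coordinate $i\in[n]$, and runs $\pi_n$ on the composite input. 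Since this placement produces exactly the product distribution $\mu^n$, the error guarantee at coordinate $i$ transfers to the single-instance problem, and averaging the cost over $i$ yields expected number of true oracle calls $\tfrac{1}{n}\E[\pi_n]$.

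Thus my proof is simply: apply Lemma~\ref{Lem_nleq_GO} and Lemma~\ref{Lem_leqn_GO} and combine. If an explicit infimum is not attained, I would carry out the argument with $\varepsilon'$-optimal algorithms and let $\varepsilon'\to 0$, which both lemmas already accommodate via their ``take a converging sequence'' remarks.

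The main obstacle at the level of the proposition itself is essentially none; the real conceptual work lives inside Lemma~\ref{Lem_nleq_GO}. What makes that lemma go through cleanly here is precisely the distributional promise: because the inputs are i.i.d.~from $\mu$, the simulated embedding produces an input to $\pi_n$ that is genuinely $\mu^n$-distributed, so the coordinate-wise error bound of $\pi_n$ at position $i$ lifts to the single-instance algorithm $\tilde\pi$ without loss. This is precisely the hypothesis that will break in the randomized/quantum-randomized cases handled later, where a given hard input $\theta$ cannot be freely embedded into a product distribution and the minimax machinery referenced in the introduction becomes necessary. For the present distributional proposition, though, no further technique is needed beyond combining the two inequalities.
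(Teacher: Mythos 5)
Your proposal is correct and matches the paper's argument exactly: Proposition~\ref{Prop_equal_GO} is obtained simply by combining the lower bound of Lemma~\ref{Lem_nleq_GO} with the upper bound of Lemma~\ref{Lem_leqn_GO}, with the infimum-attainment caveat handled by approximating sequences just as the paper does. Your added remarks on why the embedding works in the distributional case (and fails for the randomized case) are accurate context but not needed for this proposition.
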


\subsection{Classical randomized case}\label{subsec_add_CR}

\begin{Lem}\label{Lem_nE_randomized_GO}
\begin{equation*}
n\ER(\PRE) \leq \ER(\PREn)
\end{equation*}
\end{Lem}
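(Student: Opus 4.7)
The plan is to mimic the reduction in Lemma~\ref{Lem_nleq_GO}, but combine it with the minimax identity $\ER(\PRE)=\ER_D(\PRE)$ (Proposition~\ref{Prop_minimax}) to handle the worst case over input distributions. Intuitively, the reduction yields an algorithm whose expected cost is controlled only under the same distribution used to simulate the other $n-1$ coordinates, and minimax is exactly what converts this into the worst-case bound appearing in the definition of $\ER$.

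First, I would fix any $\pi_n \in \PREn$ and any $\mu \in \PT$, and define $\tilde{\pi}_\mu$ on input $\theta \in \Theta$ by (i) picking $i \in [n]$ uniformly at random, (ii) privately sampling $\tilde{\theta}^{n-1} \sim \mu^{n-1}$, and (iii) running $\pi_n$ on the tuple with $\theta$ inserted at the $i$th coordinate, handling queries to the other coordinates internally and using the actual oracle only for the $i$th one. The point at which the argument departs from Lemma~\ref{Lem_nleq_GO} is the verification that $\tilde{\pi}_\mu \in \PRE$: because $\pi_n$ achieves coordinate-wise error at most $\varepsilon$ for \emph{every} input tuple (not just in expectation under some distribution), the output at position $i$ is $\varepsilon$-correct uniformly in $\theta$ and in $\tilde{\theta}^{n-1}$, so $\tilde{\pi}_\mu$ genuinely lies in $\PRE$.

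For the cost, when $\theta \sim \mu$ the full input tuple is distributed as $\mu^{\otimes n}$ and the number of true oracle calls equals the number of queries $\pi_n$ makes to coordinate $i$; averaging over the uniform $i$ yields $\E_\mu[\tilde{\pi}_\mu] = \frac{1}{n}\,\E_{\mu^{\otimes n}}[\pi_n]$. In particular $\inf_{\pi\in\PRE}\E_\mu[\pi] \leq \frac{1}{n}\,\E_{\mu^{\otimes n}}[\pi_n]$. Maximizing over $\mu \in \PT$, applying Proposition~\ref{Prop_minimax}, and noting that i.i.d.\ product distributions form a subset of $\PT^n$,
\[
\ER(\PRE) \;=\; \max_{\mu\in\PT}\inf_{\pi\in\PRE}\E_\mu[\pi] \;\leq\; \frac{1}{n}\max_{\mu\in\PT}\E_{\mu^{\otimes n}}[\pi_n] \;\leq\; \frac{1}{n}\max_{\mu^{\otimes n}\in\PT^n}\E_{\mu^{\otimes n}}[\pi_n].
\]
Since $\pi_n \in \PREn$ was arbitrary, taking the infimum gives $\ER(\PRE) \leq \frac{1}{n}\,\ER(\PREn)$, which rearranges to the claim.

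The main obstacle, relative to Lemma~\ref{Lem_nleq_GO}, is precisely this minimax exchange: the reduction naturally bounds $\max_\mu\inf_\pi$, whereas the definition of $\ER(\PRE)$ is $\inf_\pi\max_\mu$. Without Proposition~\ref{Prop_minimax} there is no way to equate the two, so essentially all the extra work in the randomized case is packaged into that minimax theorem, whose proof in turn relies on compactness of $\PT$ under the weak-$\ast$ topology together with the continuity of $\mu\mapsto\E_\mu[\pi]$ provided by Fact~\ref{Fact_conti_expe}.
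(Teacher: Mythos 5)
Your proposal is correct and follows essentially the same route as the paper: simulate $n-1$ coordinates internally, use the true oracle only for one coordinate, and invoke Proposition~\ref{Prop_minimax} to convert the bound on $\ER_D$ into a bound on $\ER$. The only cosmetic differences are that you build a single algorithm $\tilde{\pi}_\mu$ that picks the coordinate index $i$ at random and uses i.i.d.\ products $\mu^{\otimes n}$, whereas the paper builds a family of $n$ algorithms $\tilde{\pi}_1,\dots,\tilde{\pi}_n$ for a possibly non-identical product $\mu_1\times\cdots\times\mu_n$ and sums over $i$; these are equivalent presentations of the same argument.
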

\begin{proof}
By Proposition~\ref{Prop_minimax}, we only need to show
$n\ER_D(\PRE) \leq \ER(\PREn)$.
For any $\delta >0$,  take $\pi_n \in \PREn$ such that
for any $\mu^{\otimes n} := \mu_1 \times \cdots \times \mu_n$, 
\begin{equation}\label{Lem_eq_ne_randomized_GO}
\E_{\mu^{\otimes n}}[\pi_n]
\leq \ER(\PREn) + \delta.
\end{equation}
Based on the algorithm $\pi_n$, we create $\tilde{\pi}_i \in \FE~(1 \leq i \leq n)$ as follows.
\begin{enumerate}
\item Privately pick $\tilde{\theta}^{n-1} \sim \mu_1 \times \cdots \times \mu_n$ except for $\mu_i$.
\item Run $\pi_n$ with input $(\tilde{\theta}, \ldots, \theta, \ldots, \tilde{\theta})$ in which the actual parameter $\theta$ is inserted at the $i$'th position.
Note that every oracle access to $\tilde{\theta}^{n-1}$ is internally done without access to the actual oracle for $\theta$.
\end{enumerate}
This construction of algorithms ensures that $\tilde{\pi}_i \in \PRE$ and furthermore, 
\begin{equation*}
\E_{\mu_1}[\tilde{\pi}_1]
+ \cdots +
\E_{\mu_n}[\tilde{\pi}_n]
=
\E_{\mu^{\otimes n}}[\tilde{\pi}_n].
\end{equation*}
Together with the inequality~\eqref{Lem_eq_ne_randomized_GO}, taking $\inf_{\tilde{\pi}_1, \ldots, \tilde{\pi}_n \in \PRE}$ and $\max_{\mu_1, \ldots, \mu_n}$ yields

\begin{equation*}
n\ER_D(\PRE) 
\leq \ER_D(\PREn) + \delta.
\end{equation*}
Since $\delta > 0$ is arbitrary, this completes proof.
\end{proof}

\begin{Lem}\label{Lem_En_randomized_GO}
\begin{equation*}
 \ER(\PREn) \leq
n\ER(\PRE) 
\end{equation*}
\end{Lem}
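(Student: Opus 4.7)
The plan is to mirror the strategy of Lemma~\ref{Lem_leqn_GO} in the randomized setting: take a near-optimal single-instance algorithm in the minimax sense, and run it independently on each of the $n$ coordinates. First, I would fix $\delta > 0$ and pick $\pi \in \PRE$ with $\max_{\mu \in \PT} \E_\mu[\pi] \leq \ER(\PRE) + \delta$; such a $\pi$ exists by the infimum definition of $\ER(\PRE) = \inf_{\pi \in \PRE} \max_{\mu \in \PT}\E_\mu[\pi]$.

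Next, I would define $\pi_n$ as the algorithm that sequentially runs $\pi$ on each of the $n$ instances, using fresh private randomness for each run, and outputs the resulting $n$-tuple. Membership $\pi_n \in \PREn$ is immediate from $\pi \in \PRE$: for any $\theta^n \in \Theta^n$, the $i$-th coordinate output lies in $F_{\theta_i}$ with probability at least $1 - \varepsilon$ because $\pi \in \PRE$ guarantees error $\leq \varepsilon$ for every fixed $\theta$. This is exactly the coordinate-wise error bound demanded by $\PREn$.

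The final step is the expectation calculation. For any product distribution $\mu^{\otimes n} = \mu_1 \times \cdots \times \mu_n \in \PT^n$, the independence of the $n$ runs of $\pi$ together with linearity of expectation gives
\begin{equation*}
\E_{\mu^{\otimes n}}[\pi_n] = \sum_{i=1}^{n} \E_{\mu_i}[\pi] \leq n \cdot \max_{\mu \in \PT} \E_\mu[\pi] \leq n\bigl(\ER(\PRE) + \delta\bigr).
\end{equation*}
Taking $\max$ over $\mu^{\otimes n} \in \PT^n$ on the left, then $\inf$ over $\pi_n \in \PREn$, and finally letting $\delta \to 0$, yields $\ER(\PREn) \leq n \ER(\PRE)$ as claimed.

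I do not anticipate a serious obstacle here. Unlike the complementary direction in Lemma~\ref{Lem_nE_randomized_GO}, which required the minimax theorem (Proposition~\ref{Prop_minimax}) to exchange $\inf$ and $\max$, the upper-bound direction does not need minimax: the bound $\max_{\mu \in \PT} \E_\mu[\pi] \leq \ER(\PRE) + \delta$ already dominates every coordinate's contribution under any product distribution, so the argument reduces to the same run-it-$n$-times construction as in the distributional case.
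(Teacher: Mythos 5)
Your proposal is correct and matches the paper's own proof essentially verbatim: take a $\delta$-near-optimal $\pi \in \PRE$ in the minimax sense, run it independently $n$ times, verify coordinate-wise error and $\E_{\mu^{\otimes n}}[\pi_n] = \sum_{i \leq n}\E_{\mu_i}[\pi] \leq n(\ER(\PRE)+\delta)$, then let $\delta \to 0$. Your observation that this direction needs no minimax theorem is also consistent with how the paper argues.
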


\begin{proof}
For any $\delta > 0$,
take $\pi \in \PRE$ satisfying that for any $\mu \in \PT$, $\E_\mu[\pi] \leq \ER(\PRE) + \delta$.
Let us create a new algorithm $\pi_n$ for $\mathcal{F}_\Theta^{ n}$, by running $\pi$ repeatedly for $n$ times.
We see $\pi_n \in \PREn$ and $\E_{\mu^{\otimes n}}[\pi_n] = \sum_{i \leq n}\E_{\mu_i}[\pi] \leq n(\ER(\PRE) +\delta)$ for any $\mu^{\otimes n} \in \PT^n$.
Therefore, the definition of $\ER(\PREn)$ implies $\ER(\PREn)  \leq n\ER(\PRE)$.
This completes proof.
\end{proof}

These two lemmas imply the following proposition.
\begin{Prop}\label{Prop_direct-sum_randomized_GO}
$\ER([P_R, \varepsilon]^{n}) =n \ER([P_R, \varepsilon])$.
\end{Prop}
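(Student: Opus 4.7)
The plan is extremely short, because all the real work has already been packaged into the two preceding lemmas. Specifically, Lemma~\ref{Lem_nE_randomized_GO} supplies the lower bound $n\,\ER([P_R,\varepsilon]) \leq \ER([P_R,\varepsilon]^{n})$ and Lemma~\ref{Lem_En_randomized_GO} supplies the matching upper bound $\ER([P_R,\varepsilon]^{n}) \leq n\,\ER([P_R,\varepsilon])$. Chaining these two inequalities immediately yields the claimed equality $\ER([P_R,\varepsilon]^{n}) = n\,\ER([P_R,\varepsilon])$, so the body of the proof is a one-line sandwich argument and nothing further is required.

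For perspective, the two ingredient lemmas follow the generic additivity strategy outlined in the introduction. The upper bound is the easy ``run $n$ independent copies'' direction: one takes an algorithm $\pi \in \PRE$ that is $(\ER([P_R,\varepsilon])+\delta)$-competitive uniformly in $\mu$, concatenates $n$ independent executions of $\pi$ to obtain $\pi_n \in \PREn$, and notes that $\E_{\mu^{\otimes n}}[\pi_n] = \sum_{i\leq n} \E_{\mu_i}[\pi]$ is at most $n(\ER([P_R,\varepsilon])+\delta)$ under any product input. The lower bound is the subtler direction: starting from a near-optimal $\pi_n$ for the $n$-fold problem, one must construct $n$ single-instance algorithms $\tilde{\pi}_i$ whose expected costs sum to the expected cost of $\pi_n$, by embedding the real parameter in the $i$-th slot and simulating the other $n-1$ slots from private samples $\tilde{\theta}^{n-1}$.

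The only genuine technicality in the lower bound, and what I expect to be the main obstacle, is that the definition of $\ER$ is $\inf_\pi \sup_\mu$, so the per-coordinate construction $\tilde{\pi}_i$ requires us to fix one distribution at a time and then take a maximum. This forces us to work instead with the $\sup_\mu \inf_\pi$ quantity $\ER_D$, and the passage between the two is exactly what Proposition~\ref{Prop_minimax} (the minimax theorem) provides, crucially leveraging the compactness of $\PT$ and the continuity of $\E_\mu[\pi]$ in $\mu$ (Fact~\ref{Fact_conti_expe}). Once the min-max swap is in hand, the ``plant and simulate'' construction, already used in Lemma~\ref{Lem_nleq_GO} for the distributional case, carries over verbatim. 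With both lemmas established, Proposition~\ref{Prop_direct-sum_randomized_GO} is a trivial corollary.
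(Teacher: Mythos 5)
Your proof is correct and matches the paper exactly: Proposition~\ref{Prop_direct-sum_randomized_GO} is stated in the paper as an immediate consequence of Lemma~\ref{Lem_nE_randomized_GO} and Lemma~\ref{Lem_En_randomized_GO}, and your account of how those two lemmas (the plant-and-simulate lower bound via the minimax swap, and the run-$n$-copies upper bound) are proved is faithful to the paper's argument.
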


\subsection{Quantum distributional case}\label{subsec_add_QD}

Since the proofs for the quantum distributed case are quite similar to the classical distributed case, we explain how to modify the proofs appropriately.
\begin{Lem}\label{Lem_nleq_QD}
For any $P_{QD} = \PQD$ and $\varepsilon \in [0, 1]$,
$n\EQD(\PQDE)\leq \EQD(\PQDEn).$
\end{Lem}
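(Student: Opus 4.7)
The plan is to mirror the classical proof of Lemma~\ref{Lem_nleq_GO}, exploiting the classical oracle-selection structure of the model in Figure~\ref{Fig_quantum_comp-n}. For any $\delta > 0$, I would pick $\pi_n \in \PQDEn$ with $\E_{\mu^{\otimes n}}[\pi_n] \leq \EQD(\PQDEn) + \delta$ and construct $\tilde{\pi} \in \PQDE$ as follows. On input $\theta$ (distributed as $\mu$), use the classical randomness register $R$ to sample $i \in [n]$ uniformly and $\tilde{\theta}^{n-1} = (\tilde{\theta}_1, \ldots, \tilde{\theta}_{i-1}, \tilde{\theta}_{i+1}, \ldots, \tilde{\theta}_n) \sim \mu^{n-1}$. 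Then simulate $\pi_n$: whenever $\pi_n$ would query the $i$th oracle slot, $\tilde{\pi}$ routes the query to the actual oracle $\mathcal{N}_\theta$, and whenever $\pi_n$ would query slot $j \neq i$, $\tilde{\pi}$ applies the (known) quantum channel $\mathcal{N}_{\tilde{\theta}_j}$ internally as part of its own circuitry. Finally, output the $i$th coordinate of $\pi_n$'s output.

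Two checks close the argument. First, for correctness: the joint input distribution seen by $\pi_n$ is precisely $\mu^n$ (the true $\theta$ sitting at a uniformly random coordinate among $n-1$ fresh $\mu$-samples), so the coordinate-wise error guarantee of $\pi_n$ implies that the $i$th output lies in $F_\theta$ with probability at least $1 - \varepsilon$, giving $\tilde{\pi} \in \PQDE$. Second, for the oracle count: only calls to the actual oracle $\mathcal{N}_\theta$ contribute to $|\tilde{\pi}|$, since the internal applications of $\mathcal{N}_{\tilde{\theta}_j}$ act on $\tilde{\pi}$'s own registers and are free. Letting $Q_i(\theta^n)$ denote the expected number of queries $\pi_n$ makes to its $i$th oracle slot, uniform averaging over $i$ gives
\begin{equation*}
\E_\mu[\tilde{\pi}] \;=\; \frac{1}{n}\sum_{i=1}^n \E_{\mu^{\otimes n}}[Q_i] \;=\; \frac{1}{n}\E_{\mu^{\otimes n}}[\pi_n] \;\leq\; \frac{1}{n}\bigl(\EQD(\PQDEn) + \delta\bigr).
\end{equation*}
Taking $\delta \to 0$ yields $n\EQD(\PQDE) \leq \EQD(\PQDEn)$.

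The step that requires genuinely quantum care, and is really the only substantive difference from the classical proof, is justifying that $\tilde{\pi}$'s internal simulation of the extra $n-1$ oracles is legal as a single-oracle algorithm in the model of Figure~\ref{Fig_quantum_comp}. This works precisely because Section~\ref{subsec_quantum-scenarios} restricts the $n$-oracle model to \emph{classical} selection of which oracle slot is queried at each round: the classical randomness $R$ together with the classical measurement outcomes $M_1, \ldots, M_m$ tell $\tilde{\pi}$ whether the current round addresses slot $i$ (forward to $\mathcal{N}_\theta$) or some slot $j \neq i$ (apply the hard-coded channel $\mathcal{N}_{\tilde{\theta}_j}$). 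Under the stronger model with coherent superposition over oracle slots, e.g.\ $\mathcal{N}_\mathsf{all} : |i\rangle\langle i| \otimes \rho \mapsto |i\rangle\langle i| \otimes \mathcal{N}_{\theta_i}(\rho)$, no such per-slot routing would be possible and the reduction would break down, which is consistent with the authors' remark that their framework is intrinsically a \emph{classically adaptive} one.
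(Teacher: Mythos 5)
Your proof is correct and is essentially the paper's own argument: the paper proves this lemma by the straightforward modification of Lemma~\ref{Lem_nleq_GO} (random coordinate $i$, privately sampled $\tilde{\theta}^{n-1}$ simulated internally, averaging the per-slot query counts), with exactly the observation you make that the sampling and per-slot routing are implemented via the classical randomness $R$ and the classically-selected oracle model of Figure~\ref{Fig_quantum_comp-n}. Your spelled-out accounting of $Q_i$ and the remark on why coherent oracle selection would break the reduction are just more explicit versions of what the paper leaves implicit.
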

\begin{proof}
Modify Lemma~\ref{Lem_nleq_GO} straightforwardly. Note that picking elements $\tilde{\theta}^{n-1} \sim \mu^{n-1}$ is accomplished by using the classical randomness $R$ described in Section~\ref{subsec_quantum-scenarios}.
\end{proof}

\begin{Lem}\label{Lem_leqn_QD}
For any $P_{QD} = \PQD$ and $\varepsilon \in [0, 1]$,
$\EQD(\PQDEn) \leq n\EQD(\PQDE) .$
\end{Lem}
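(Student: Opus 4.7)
The plan is to mirror the proof of Lemma~\ref{Lem_leqn_GO} almost verbatim, with the only new content being a check that sequential composition is legitimate under the quantum model defined in Section~\ref{subsec_quantum-scenarios}. First, for any $\delta > 0$, I would pick a quantum algorithm $\pi \in \PQDE$ satisfying $\E_\mu[\pi] \leq \EQD(\PQDE) + \delta$ (the $\inf$ may not be attained, hence the slack $\delta$, just as in the randomized classical case).

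Next, I would construct a new algorithm $\pi_n$ that runs $n$ independent copies of $\pi$, the $i$th copy being directed to query the $i$th oracle $\mathcal{N}_{\theta_i}$ and to output its result into the $i$th output coordinate. Crucially, each copy is given its own fresh quantum workspace and its own fresh classical randomness $R_i$; in particular the copies do not share entanglement. Because in our model (Figure~\ref{Fig_quantum_comp-n}) the choice of which oracle among $\mathcal{N}_{\theta_1}, \ldots, \mathcal{N}_{\theta_n}$ to invoke at each step is made classically as a function of the classical randomness and past measurement outcomes, I can simply hard-code ``route every query of the $i$th copy to oracle $i$'' during the $i$th stage. Thus $\pi_n$ is a legitimate algorithm in $\PQDEn$ under the model of computation we use.

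For correctness, since the $i$th copy is identically $\pi$ acting on $\mathcal{N}_{\theta_i}$ with $\theta_i \sim \mu$, each coordinate output lies outside $F_{\theta_i}$ with probability at most $\varepsilon$, so $\pi_n$ meets the coordinate-wise error requirement and belongs to $\PQDEn$. The oracle-call count of $\pi_n$ is the sum of the call counts of the $n$ copies, so by linearity of expectation
\begin{equation*}
\E_{\mu^{\otimes n}}[\pi_n] \;=\; \sum_{i=1}^{n}\E_\mu[\pi] \;=\; n\,\E_\mu[\pi] \;\leq\; n\bigl(\EQD(\PQDE) + \delta\bigr).
\end{equation*}
Taking $\inf$ over $\pi_n$ and then $\delta \to 0$ yields $\EQD(\PQDEn) \leq n\EQD(\PQDE)$.

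There is no real obstacle in this proof; the only point worth emphasizing is the appeal to the classical-adaptivity of oracle selection in our model, which is exactly what makes sequential composition across oracles legal. If one tried to prove the analogous statement in the stronger model where the oracle $\mathcal{N}_\mathsf{all}$ superposes over $n$ oracles, this sequential strategy still yields the same upper bound, but the two models would need to be distinguished; in our setting this subtlety does not arise.
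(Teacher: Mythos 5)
Your proposal is correct and follows essentially the same route the paper intends when it says ``Modify Lemma~\ref{Lem_leqn_GO} straightforwardly'': run $n$ independent sequential copies of a (near-)optimal single-instance algorithm, the $i$th copy querying only $\mathcal{N}_{\theta_i}$, then sum expectations. Your explicit remark that classical adaptivity of oracle selection (Figure~\ref{Fig_quantum_comp-n}) is what legitimizes this sequential composition, and your use of a $\delta$-slack since the infimum may not be attained, merely spell out details the paper leaves implicit.
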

\begin{proof}
Modify Lemma~\ref{Lem_leqn_GO} straightforwardly.
\end{proof}

Combining Lemma~\ref{Lem_nleq_QD} and Lemma~\ref{Lem_leqn_QD}, we get the additivity of $\EQD(\PQDEn)$:

\begin{Prop}\label{Prop_equal_QD}
$\EQD(\PQDEn) = n\EQD(\PQDE) .$
\end{Prop}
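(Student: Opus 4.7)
The plan is to obtain Proposition~\ref{Prop_equal_QD} simply by sandwiching $\EQD(\PQDEn)$ between $n\EQD(\PQDE)$ from above (Lemma~\ref{Lem_leqn_QD}) and from below (Lemma~\ref{Lem_nleq_QD}), essentially replaying the classical distributional arguments of Section~\ref{subsec_add_CD} inside the quantum model of computation depicted in Figure~\ref{Fig_quantum_comp-n}. The two lemmas are already stated, so the work is really to verify that the classical constructions port over, and then to invoke them.

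For the upper bound $\EQD(\PQDEn) \leq n\EQD(\PQDE)$, I would take, for any $\delta>0$, an algorithm $\pi \in [P_{QD},\varepsilon]$ with $\E_\mu[\pi] \leq \EQD([P_{QD},\varepsilon]) + \delta$, and construct $\pi_n$ that runs $n$ independent copies of $\pi$ sequentially, each copy accessing the $i$th oracle $\mathcal{N}_{\theta_i}$. Because the composed procedure is still a sequence of oracle calls interleaved with measurements and possibly classical randomness $R$ (precisely the operations allowed by the $n$-oracle model), $\pi_n$ is a legitimate element of $[P_{QD},\varepsilon]^{n}$ achieving coordinate-wise error $\varepsilon$ under $\mu^{\otimes n}$, with $\E_{\mu^{\otimes n}}[\pi_n] = n \E_\mu[\pi] \leq n(\EQD([P_{QD},\varepsilon])+\delta)$. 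Letting $\delta \to 0$ gives Lemma~\ref{Lem_leqn_QD}.

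For the lower bound $n\EQD(\PQDE) \leq \EQD(\PQDEn)$, I would mimic Lemma~\ref{Lem_nleq_GO}: take a near-optimal $\pi_n$ for $[P_{QD},\varepsilon]^n$, pick $i \in [n]$ uniformly, use the classical randomness register $R$ to privately sample $\tilde\theta^{n-1} \sim \mu^{n-1}$, and then run $\pi_n$ on the input where the real oracle $\mathcal{N}_\theta$ sits at coordinate $i$ and the remaining $n-1$ oracles are \emph{simulated internally} (we know their parameters, so we just apply the known quantum channels $\mathcal{N}_{\tilde\theta_j}$ on the appropriate registers without contacting any external oracle). The resulting algorithm $\tilde\pi$ belongs to $[P_{QD},\varepsilon]$ because the marginal distribution of $\pi_n$'s output in the $i$th coordinate has error at most $\varepsilon$ under $\mu$, and by linearity its expected oracle count is $\frac{1}{n}\E_{\mu^{\otimes n}}[\pi_n]$. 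Taking infima yields $n\EQD([P_{QD},\varepsilon]) \leq \EQD([P_{QD},\varepsilon]^n)$.

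The only subtle point, and arguably the one place where the ``straightforward modification'' requires a moment's care, is that in the quantum model the internal simulation of $\mathcal{N}_{\tilde\theta_j}$ must be carried out coherently on the quantum registers while the classical sample $\tilde\theta^{n-1}$ is stored in the classical randomness register $R$; this is precisely what the model in Section~\ref{subsec_quantum-scenarios} allows, since the classical randomness fixes a conditional quantum circuit to be executed, and the oracle calls scheduled to coordinates $j\neq i$ can be replaced by this conditionally applied quantum channel without changing the output statistics. With this observation both lemmas go through, and combining them gives $\EQD(\PQDEn) = n\EQD(\PQDE)$.
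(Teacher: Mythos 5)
Your proposal is correct and follows essentially the same route as the paper: it reduces Proposition~\ref{Prop_equal_QD} to Lemma~\ref{Lem_nleq_QD} and Lemma~\ref{Lem_leqn_QD}, proved by porting the classical constructions of Lemma~\ref{Lem_nleq_GO} and Lemma~\ref{Lem_leqn_GO} into the quantum model, with the private sample $\tilde{\theta}^{n-1}\sim\mu^{n-1}$ and the internal simulation of the unqueried oracles handled via the classical randomness register $R$, exactly as the paper notes. Your explicit remark about conditionally applying the known channels $\mathcal{N}_{\tilde{\theta}_j}$ is just a slightly more detailed statement of the same observation.
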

\subsection{Quantum randomized case}\label{subsec_add_QR}

Modifying similarly to the quantum distributed case, we obtain the following statements.
\begin{Lem}\label{Lem_nleq_QR}
For any $P_{QR} = \PQR$ and $\varepsilon \in [0, 1]$,
$n\EQR(\PQRE)\leq \EQR(\PQREn).$
\end{Lem}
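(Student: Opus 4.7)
The plan is to mirror the proof of the classical randomized case (Lemma~\ref{Lem_nE_randomized_GO}), adapting the embedding construction to respect the quantum model of computation described in Section~\ref{subsec_quantum-scenarios}. First, I would invoke Proposition~\ref{Prop_minimax} to reduce the statement to showing $n\EQR_D(\PQRE) \leq \EQR(\PQREn)$, which allows the maximization over distributions to be moved outside the infimum over algorithms on the right-hand side.

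Next, for arbitrary $\delta > 0$, I would pick a near-optimal $\pi_n \in \PQREn$ with $\E_{\mu^{\otimes n}}[\pi_n] \leq \EQR(\PQREn) + \delta$ for every product distribution $\mu^{\otimes n} = \mu_1 \times \cdots \times \mu_n$. From $\pi_n$ I construct, for each $i \in [n]$, an algorithm $\tilde{\pi}_i \in \PQRE$ as follows: using the classical randomness register $R$ available in our quantum model, sample $\tilde{\theta}_j \sim \mu_j$ for $j \neq i$; then run $\pi_n$ with the actual parameter $\theta$ placed at position $i$, where all $n-1$ oracle accesses to positions $j \neq i$ are simulated internally by applying the \emph{known} quantum channels $\mathcal{N}_{\tilde{\theta}_j}$ as ordinary unitary/measurement operations rather than as oracle calls. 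Only the access to position $i$ counts as a genuine oracle call. By construction $\tilde{\pi}_i \in \PQRE$, and the total number of oracle calls satisfies
\begin{equation*}
\E_{\mu_1}[\tilde{\pi}_1] + \cdots + \E_{\mu_n}[\tilde{\pi}_n] = \E_{\mu^{\otimes n}}[\pi_n] \leq \EQR(\PQREn) + \delta.
\end{equation*}
Taking $\inf$ over each $\tilde{\pi}_i$ and then $\max$ over each $\mu_i$ on the left yields $n\EQR_D(\PQRE) \leq \EQR(\PQREn) + \delta$, and letting $\delta \to 0$ finishes the argument.

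The main subtlety (rather than obstacle) is verifying that the simulation step is legal inside the model pictured in Figure~\ref{Fig_quantum_comp-n}. Concretely, we need the simulated $\mathcal{N}_{\tilde{\theta}_j}$ operations, which depend on classical data drawn from $R$, to be implementable as part of the algorithm's own quantum processing between genuine oracle calls to $\mathcal{N}_\theta$. This is exactly what the presence of the classical randomness register $R$ (included precisely to generate continuous classical random variables) together with the classically adaptive selection of oracles allows: since $\tilde{\theta}_j$ is drawn classically and each $\mathcal{N}_{\tilde{\theta}_j}$ is a fixed known channel, the whole simulation is a standard quantum operation conditioned on $R$, and hence fits the model without any additional quantum resources. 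Everything else in the proof is a routine transcription of Lemma~\ref{Lem_nE_randomized_GO}.
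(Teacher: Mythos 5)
Your proposal is correct and takes essentially the same approach the paper intends: the paper dispatches this lemma by instructing the reader to ``modify Lemma~\ref{Lem_nE_randomized_GO} similarly to the quantum distributed case,'' which is exactly the minimax reduction plus the single-instance embedding you describe, using the classical randomness register $R$ to sample the dummy parameters $\tilde{\theta}_j$ and simulate the known channels $\mathcal{N}_{\tilde{\theta}_j}$ internally. Your explicit discussion of why the simulation step is legal within the model of Figure~\ref{Fig_quantum_comp-n} merely spells out what the paper leaves implicit.
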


\begin{Lem}\label{Lem_leqn_QR}
For any $P_{QR} = \PQR$ and $\varepsilon \in [0, 1]$,
$\EQR(\PQREn) \leq n\EQR(\PQRE) .$
\end{Lem}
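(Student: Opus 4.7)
The plan is to mirror the classical randomized proof of Lemma~\ref{Lem_En_randomized_GO}, with only the cosmetic modification that the single-instance algorithm is now quantum and that selecting which of the $n$ oracles to call at each step is done through the classical adaptivity built into the model described in Section~\ref{subsec_quantum-scenarios} (see Figure~\ref{Fig_quantum_comp-n}).

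Concretely, I would fix an arbitrary $\delta > 0$ and invoke $\EQR(\PQRE) = \EQR_D(\PQRE)$ (via the minimax statement alluded to in Proposition~\ref{Prop_minimax}) to pick a quantum algorithm $\pi \in \PQRE$ whose expected oracle count satisfies $\E_\mu[\pi] \leq \EQR(\PQRE) + \delta$ uniformly in $\mu \in \PT$. I then construct $\pi_n$ that, on input parameter $\theta^n = (\theta_1,\ldots,\theta_n)$, runs a fresh execution of $\pi$ on each coordinate in sequence: during the $i$-th run, every oracle call of $\pi$ is routed (via the classical register that controls oracle selection) to the $i$-th oracle $\mathcal{N}_{\theta_i}$. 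Because each sub-run of $\pi$ terminates with probability one (by the finiteness assumption $|\pi| < \infty$ in Section~\ref{sec_technical}) and ends with a classical readout of its output, the next sub-run can be launched on a disjoint workspace without disturbance, so $\pi_n$ is a well-defined algorithm in our quantum model.

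Correctness of $\pi_n$ on each coordinate is immediate: the $i$-th output is distributed exactly as the output of $\pi$ on $\theta_i$, hence lies in $F_{\theta_i}$ with failure probability at most $\varepsilon$, which gives $\pi_n \in \PQREn$. For the cost, linearity of expectation and the independence of sub-runs yield
\begin{equation*}
\E_{\mu^{\otimes n}}[\pi_n] \;=\; \sum_{i=1}^{n} \E_{\mu_i}[\pi] \;\leq\; n\bigl(\EQR(\PQRE) + \delta\bigr)
\end{equation*}
for every product distribution $\mu^{\otimes n} = \mu_1 \times \cdots \times \mu_n \in \PT^n$. Taking the supremum over such $\mu^{\otimes n}$ and then the infimum over algorithms in $\PQREn$ gives $\EQR(\PQREn) \leq n\EQR(\PQRE) + n\delta$, and letting $\delta \to 0$ finishes the proof.

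The only genuinely non-routine point is the first one, namely making sure that a near-optimal $\pi$ exists whose expected cost is simultaneously near-optimal against every input distribution. That is exactly the content the minimax identity $\EQR(\PQRE) = \EQR_D(\PQRE)$ provides, so I expect this to be the main (and essentially the only) obstacle; once it is in hand, the concatenation argument above is a direct transplant of the classical-randomized proof.
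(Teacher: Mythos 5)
Your construction and cost calculation match the paper's (implicit) proof, which is simply to modify the classical randomized argument of Lemma~\ref{Lem_En_randomized_GO} by running a near-optimal single-instance quantum algorithm $n$ times in sequence, using the classical control register to route each run's oracle calls to the corresponding oracle. That part is fine.

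However, the step you single out as the ``only genuinely non-routine point'' is in fact a non-issue, and your justification for it is wrong. You claim you need the minimax identity $\EQR(\PQRE) = \EQR_D(\PQRE)$ to extract a single algorithm $\pi$ with $\E_\mu[\pi] \leq \EQR(\PQRE) + \delta$ uniformly in $\mu$. But $\EQR(\PQRE)$ is \emph{defined} as $\inf_{\pi \in \PQRE}\max_{\mu \in \PT}\E_\mu[\pi]$, an infimum over $\pi$ of a $\max$ over $\mu$. The existence of a $\delta$-near-optimizer $\pi$ satisfying $\max_\mu \E_\mu[\pi] \leq \EQR(\PQRE)+\delta$ is immediate from the definition of infimum; no minimax theorem is needed. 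Proposition~\ref{Prop_minimax} is genuinely needed for the \emph{lower-bound} direction (Lemma~\ref{Lem_nleq_QR}, and its classical analogue Lemma~\ref{Lem_nE_randomized_GO}), where one starts from $\max_\mu \inf_\pi$ and must pass to $\inf_\pi \max_\mu$. Your proof is correct, but you should drop the minimax invocation and appeal directly to the definition, as the paper does.
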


\begin{Prop}\label{Prop_equal_QR}
$\EQR(\PQREn) = n\EQR(\PQRE) .$
\end{Prop}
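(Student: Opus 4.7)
The plan is to obtain Proposition~\ref{Prop_equal_QR} as an immediate consequence of Lemma~\ref{Lem_nleq_QR} and Lemma~\ref{Lem_leqn_QR}, exactly as in the classical randomized case (Proposition~\ref{Prop_direct-sum_randomized_GO}): the two lemmas sandwich $\EQR(\PQREn)$ between $n\EQR(\PQRE)$ on both sides, so equality follows. The substance therefore lies in verifying that the proofs of Lemma~\ref{Lem_nE_randomized_GO} and Lemma~\ref{Lem_En_randomized_GO} transfer to the quantum randomized model of Section~\ref{subsec_quantum-scenarios}.

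For the upper bound (Lemma~\ref{Lem_leqn_QR}), I would fix $\delta>0$, pick a near-optimal $\pi\in\PQRE$ with $\sup_{\mu\in\PT}\E_\mu[\pi]\leq\EQR(\PQRE)+\delta$, and then form $\pi_n$ by running $\pi$ sequentially on each of the $n$ coordinates, appending the outputs. Since our quantum model (Figure~\ref{Fig_quantum_comp-n}) permits classical control between oracle invocations, this sequential composition is a valid algorithm in $\PQREn$: each sub-run uses fresh ancilla registers and preserves the per-coordinate error guarantee. Linearity of expectation gives $\E_{\mu^{\otimes n}}[\pi_n]\leq n(\EQR(\PQRE)+\delta)$ for every product measure, and sending $\delta\to 0$ yields the claim.

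For the lower bound (Lemma~\ref{Lem_nleq_QR}), I would first invoke Proposition~\ref{Prop_minimax} to replace $\EQR$ by $\EQR_D$, then imitate Lemma~\ref{Lem_nE_randomized_GO}. Fix a near-optimal $\pi_n\in\PQREn$ and product measure $\mu_1\times\cdots\times\mu_n$. For each $i\in[n]$, define $\tilde\pi_i\in\PQRE$ that: (a) uses the classical randomness register $R$ from Section~\ref{subsec_quantum-scenarios} to privately sample $\tilde\theta^{n-1}\sim\prod_{j\neq i}\mu_j$, and (b) simulates $\pi_n$ with the true oracle $\mathcal{N}_\theta$ plugged in at position $i$ and the remaining $n-1$ oracles $\mathcal{N}_{\tilde\theta_j}$ simulated internally without consuming external queries. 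The per-coordinate error guarantee of $\pi_n$ ensures $\tilde\pi_i\in\PQRE$, and linearity gives $\sum_i\E_{\mu_i}[\tilde\pi_i]=\E_{\mu^{\otimes n}}[\pi_n]$. Taking $\inf_{\tilde\pi_i}$ coordinate-wise and $\max_{\mu_1,\ldots,\mu_n}$ over marginals, and then sending $\delta\to 0$, yields $n\EQR_D(\PQRE)\leq\EQR(\PQREn)$.

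The one place where quantum subtlety enters — and the only step I would scrutinize carefully — is the internal simulation of the $n-1$ passive oracles in the lower bound. Because Section~\ref{subsec_quantum-scenarios} restricts the model to \emph{classical} oracle selection (each query's address is determined by classical randomness $R$ together with previous measurement outcomes $M_1,\ldots,M_i$), the simulated channels $\mathcal{N}_{\tilde\theta_j}$ can be implemented coherently on the algorithm's workspace without invoking any external query; no coherent superposition over oracle indices is ever required. This is precisely the reason the reduction, which would fail in the fully coherent oracle model of~\cite{ACLT10,LMR+11}, goes through here, and it is also why the paper is upfront about classical adaptivity being a structural limitation of the framework.
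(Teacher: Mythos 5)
Your proposal is correct and follows the same route the paper takes: Proposition~\ref{Prop_equal_QR} is obtained by sandwiching via Lemma~\ref{Lem_nleq_QR} and Lemma~\ref{Lem_leqn_QR}, which are themselves straightforward adaptations of Lemma~\ref{Lem_nE_randomized_GO} and Lemma~\ref{Lem_En_randomized_GO} (with the minimax result, Proposition~\ref{Prop_minimax}, used for the lower bound and the classical randomness register $R$ used to sample and internally simulate the $n-1$ passive oracles). Your observation that classical oracle selection is precisely what makes the internal simulation legitimate is exactly the point the paper flags in Section~\ref{subsec_quantum-scenarios} and in the note following Lemma~\ref{Lem_nleq_QD}.
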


\section{Continuity}\label{sec_continuity}
Here we show the continuity with respect to the error parameter in $C([P_C, \varepsilon])$ for any positive $\varepsilon > 0$.
We also show the continuity at $\varepsilon = 0$ when the parameter space $\Theta$ is finite: $|\Theta| < \infty$.
\subsection{Classical distributional case}\label{subsec_cont_CD}

\begin{Lem}\label{Lem_conti_CD}
For any $\varepsilon > 0$,
\begin{equation*}
\lim_{\rho \to \varepsilon}\ED([P_D, \rho])
=\ED(\PDE).
\end{equation*}
\end{Lem}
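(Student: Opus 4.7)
My plan for proving continuity at $\varepsilon > 0$ in the classical distributional case relies on the mixing strategy sketched in the proof-techniques section, applied in both directions. First, I observe that the map $\rho \mapsto \ED([P_D, \rho])$ is non-increasing, since $[P_D, \rho_1] \subseteq [P_D, \rho_2]$ whenever $\rho_1 \leq \rho_2$. Consequently, the left limit $\lim_{\rho \to \varepsilon^-}\ED([P_D,\rho])$ dominates $\ED([P_D,\varepsilon])$ and the right limit is dominated by it, so it suffices to establish matching inequalities in the two opposite directions. Fix a small auxiliary $\varepsilon_0 \in (0, \varepsilon/2]$ for which $[P_D, \varepsilon_0]$ is nonempty (guaranteed by the assumptions in Section~\ref{sec_technical}), and fix once and for all some $\pi_0 \in [P_D, \varepsilon_0]$ with finite $\E[\pi_0]$.

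For the left limit, take any $\rho \in (0, \varepsilon)$ and any $\delta>0$; choose $\pi \in [P_D, \varepsilon]$ with $\E[\pi] \leq \ED([P_D,\varepsilon]) + \delta$. Define the mixed algorithm $\pi_p$ that runs $\pi$ with probability $1-p$ and $\pi_0$ with probability $p$. Since both the error and the expected number of queries are affine in $p$, the error of $\pi_p$ on distribution $\mu$ is at most $(1-p)\varepsilon + p\varepsilon_0$, which is $\leq \rho$ provided $p \geq (\varepsilon - \rho)/(\varepsilon - \varepsilon_0)$. Choosing $p$ equal to that threshold gives $\pi_p \in [P_D, \rho]$ and
\begin{equation*}
\ED([P_D,\rho]) \;\leq\; \E[\pi_p] \;=\; (1-p)\,\E[\pi] + p\,\E[\pi_0] \;\leq\; \ED([P_D,\varepsilon]) + \delta + p\,\E[\pi_0].
\end{equation*}
Letting $\rho \to \varepsilon^-$ forces $p \to 0$, and then $\delta \to 0$, which yields $\limsup_{\rho \to \varepsilon^-}\ED([P_D,\rho]) \leq \ED([P_D,\varepsilon])$, giving left continuity when combined with monotonicity.

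For the right limit, the argument is symmetric. For $\rho > \varepsilon$ and any $\delta > 0$, pick $\tilde\pi \in [P_D, \rho]$ with $\E[\tilde\pi] \leq \ED([P_D,\rho]) + \delta$, mix it with $\pi_0$ using probability $q$, and solve the threshold inequality $(1-q)\rho + q\varepsilon_0 \leq \varepsilon$: $q \geq (\rho - \varepsilon)/(\rho - \varepsilon_0)$ suffices, and this tends to $0$ as $\rho \to \varepsilon^+$. The resulting mixed algorithm lies in $[P_D, \varepsilon]$, so
\begin{equation*}
\ED([P_D,\varepsilon]) \;\leq\; (1-q)\bigl(\ED([P_D,\rho]) + \delta\bigr) + q\,\E[\pi_0].
\end{equation*}
Rearranging and sending $\rho \to \varepsilon^+$ and then $\delta \to 0$ yields $\liminf_{\rho \to \varepsilon^+}\ED([P_D,\rho]) \geq \ED([P_D,\varepsilon])$, which combined with monotonicity gives right continuity.

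The only real subtlety I expect is verifying that the mixing procedure genuinely defines an algorithm in the sense of Section~\ref{sec_Preliminaries} and that its expected cost is exactly the convex combination $(1-p)\E[\pi] + p\E[\pi_0]$; this follows because the coin flip can be absorbed into the internal randomness and both summands enter linearly in $\E$ as well as in the error probability $\Pr_{\theta \sim \mu}(\pi_\mathsf{out} \notin F_\theta)$. Everything else is a routine squeeze between monotonicity and the two mixing bounds, and the argument cleanly avoids the case $\varepsilon = 0$, where the finite auxiliary algorithm $\pi_0$ would not exist and a separate analysis (postponed in the paper) is required.
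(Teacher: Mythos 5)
Your proof uses the same mixing strategy as the paper's (interpolating the near-optimal algorithm at one error level with an auxiliary algorithm of strictly smaller error, with a weight that tends to zero as $\rho \to \varepsilon$), so it is essentially the same approach. The paper only spells out the $\rho \searrow \varepsilon$ direction and fixes the auxiliary error at $\varepsilon/2$; you explicitly write out both one-sided limits, combine them with monotonicity, parameterize the auxiliary error as $\varepsilon_0 \in (0,\varepsilon/2]$, and take $\delta$-near-optimal algorithms rather than assuming the infimum is attained—useful technical care, but not a different route.
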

\begin{proof}
Since both the limit $\rho \searrow \varepsilon$ and $\rho \nearrow \varepsilon$ are proved similarly, we only show the case $\rho \searrow \varepsilon$.
Take $\pi \in [P_D, \varepsilon]$ and create a new algorithm $\tilde{\pi} \in \PDE$ as running $\pi$ w.p. $ \varepsilon' := \varepsilon / (2\rho - \varepsilon)$ and another algorithm $\pi' \in [P_D, \varepsilon/2]$ w.p. $1 - \varepsilon'$.
The expectation of this algorithm satisfies
\begin{equation*}
\E[\tilde{\pi}]
= \varepsilon' \E[\pi] + (1- \varepsilon') \E[\pi'].
\end{equation*}
Let $\pi \in [P_D, \rho]$ be an optimal algorithm, i.e.,  $\E[\pi] = \ED([P_D, \rho])$.
Then the above equality implies
\begin{equation*}
\ED([P_D, \varepsilon])
\leq
\E[\tilde{\pi}]
=  \frac{\varepsilon}{2\rho - \varepsilon} \ED([P_D, \rho]) + \left(1-\frac{\varepsilon}{2\rho - \varepsilon}\right) \E[\pi'].
\end{equation*}
In addition, $\ED([P_D, \rho]) \leq \ED(\PDE)$ trivially holds.
Therefore, taking $\rho \searrow \varepsilon$ yields the desired statement.
\end{proof}

\begin{Lem}\label{Lem_conti_0_CD}
Suppose  $|\Theta| < \infty$.
Then 
\begin{equation*}
\ED([P_D, \alpha]) \leq \ED([P_D, \varepsilon]) + \sqrt{\varepsilon} ~\ED([P_D, \alpha/\sqrt{\varepsilon}])
\end{equation*}
holds for any $\alpha \in [0, 1)$ and any positive $\varepsilon$ satisfying
$ \sqrt{\varepsilon} < \mu_\textsf{min} := \min_{\theta \in \textrm{supp}\mu} \mu(\theta)$.

In particular the case\footnote{Note that $ [P_D, 0] \neq \emptyset$ is assumed here.} of $\alpha = 0$ shows that $\ED(\PDE)$ is also continuous at $\varepsilon =0$.
\end{Lem}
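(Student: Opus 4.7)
The plan is to exploit the assumption $\sqrt{\varepsilon}<\mu_{\min}$ via a per-parameter error bound, then construct $\tilde\pi\in[P_D,\alpha]$ by derandomizing a near-optimal $[P_D,\varepsilon]$-algorithm and falling back to a near-optimal $[P_D,\alpha/\sqrt{\varepsilon}]$-algorithm on the bad random seeds.

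First I would observe that any $\pi\in[P_D,\varepsilon]$ has per-$\theta$ error $e_\theta:=\Pr_\theta(\pi_\mathsf{out}\notin F_\theta)\leq \varepsilon/\mu_{\min}<\sqrt{\varepsilon}$ for every $\theta\in\mathrm{supp}\,\mu$, since otherwise a single offender $\theta^*$ would already contribute $\mu(\theta^*)e_{\theta^*}>\mu_{\min}\cdot\sqrt{\varepsilon}>\varepsilon$ to the distributional error sum, contradicting $\sum_\theta\mu(\theta)e_\theta\leq\varepsilon$.

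Next, take near-optimal $\pi_\varepsilon\in[P_D,\varepsilon]$ and $\pi'\in[P_D,\alpha/\sqrt{\varepsilon}]$, and decompose $\pi_\varepsilon$ according to its internal random seed $r$: each realization $\pi_\varepsilon^r$ is a deterministic-strategy subalgorithm with its own distributional error $e_r:=\sum_\theta\mu(\theta)\Pr(\pi_\varepsilon^r\text{ fails}\mid\theta)$, and $\mathbf{E}_r[e_r]\leq\varepsilon$ so Markov's inequality gives $\Pr_r(e_r>\sqrt{\varepsilon})\leq\sqrt{\varepsilon}$. Call such $r$ \emph{good}. Applying the $\mu_{\min}$-argument again, one sees that (at least when the per-$\theta$ errors $e_r(\theta)$ are $\{0,1\}$-valued, as is the case for deterministic oracles) a good $r$ forces $e_r(\theta)=0$ for every $\theta\in\mathrm{supp}\,\mu$, so $\pi_\varepsilon^r\in[P_D,0]\subseteq[P_D,\alpha]$. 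My constructed $\tilde\pi$ then samples $r$, checks offline (using only the known specifications of $\mu$ and $\pi_\varepsilon$, without any oracle access) whether $r$ is good, and runs $\pi_\varepsilon^r$ if so, else runs $\pi'$. The distributional error is at most $\Pr(r\text{ bad})\cdot\sum_\theta\mu(\theta)e'_\theta\leq\sqrt{\varepsilon}\cdot(\alpha/\sqrt{\varepsilon})=\alpha$, and the expected cost is at most $\Pr(r\text{ good})\mathbf{E}_r[\,|\pi_\varepsilon^r|\mid r\text{ good}]+\Pr(r\text{ bad})\mathbf{E}[\pi']\leq\mathbf{E}[\pi_\varepsilon]+\sqrt{\varepsilon}\,\mathbf{E}[\pi']$; taking infima over $\pi_\varepsilon$ and $\pi'$ yields the claim.

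The main obstacle is extending this derandomization to oracles with intrinsic stochasticity, for which $e_r(\theta)\in[0,1]$ and a good $r$ no longer automatically places $\pi_\varepsilon^r$ in $[P_D,0]$. The natural fix is to derandomize jointly over the pair $(r,\omega)$ of internal randomness and full oracle response sequence, and to defer to $\pi'$ precisely on those transcripts $\tau=(r,\omega)$ whose deterministic output $a(\tau)$ is inconsistent with some $\theta\in\mathrm{supp}\,\mu$ of positive likelihood given $\tau$; the finiteness $|\Theta|<\infty$ together with $\mu_{\min}>\sqrt{\varepsilon}$ is what makes the probability of this unsafe event bounded by $\sqrt{\varepsilon}$. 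This is where the careful analysis of the output statistics (in the spirit of \cite{BR11,BGPW13}) really comes in.

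For the particular case $\alpha=0$, the inequality rearranges to $(1-\sqrt{\varepsilon})\,\ED([P_D,0])\leq\ED([P_D,\varepsilon])$; letting $\varepsilon\to 0^+$ and combining with the trivial monotonic bound $\ED([P_D,\varepsilon])\leq\ED([P_D,0])$ immediately gives $\lim_{\varepsilon\to 0^+}\ED([P_D,\varepsilon])=\ED([P_D,0])$, i.e., continuity of $\ED(\PDE)$ at $\varepsilon=0$.
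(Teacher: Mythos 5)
Your deterministic-oracle argument is correct and is a genuinely different decomposition from the paper's. You derandomize over the internal seed $r$ only, compute $e_r$ \emph{offline}, and invoke the fallback before any oracle queries; the paper instead always runs $\pi$, conditions on the observed final register $m$ (which encodes $r$ \emph{and} the $\theta$-dependent oracle responses and the output), and falls back to $\pi'$ \emph{online} when the posterior error $\Pr(\mathcal{E}_m)$ exceeds $\sqrt{\varepsilon}$. These are different partitions of the probability space: conditioning on $m$ is strictly finer than conditioning on $r$. Your version has the minor advantage of never paying for $\pi_\varepsilon$'s queries on bad seeds, and the $\mu_{\min}$-forces-zero-error step is especially transparent because $e_r(\theta)\in\{0,1\}$; the paper's version has the advantage of applying uniformly to stochastic oracles without modification, because the conditioning already absorbs the oracle noise into $m$.

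The place where you under-deliver is exactly the place you flag yourself: the stochastic-oracle case. Saying ``defer to $\pi'$ on those transcripts $\tau$ whose output is inconsistent with some $\theta$ of positive posterior, and $|\Theta|<\infty$ together with $\mu_{\min}>\sqrt{\varepsilon}$ bounds the probability of the unsafe event by $\sqrt{\varepsilon}$'' names the construction but not the argument. Two things still need to be proved, and they are the heart of the paper's proof. First, the Markov step: $\mathbf{E}_m[\Pr(\mathcal{E}_m)]$ equals the distributional error of $\pi$, hence is $\leq\varepsilon$, hence $\Pr_m(\Pr(\mathcal{E}_m)>\sqrt{\varepsilon})\leq\sqrt{\varepsilon}$. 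Second, the safety step: if $\Pr(\mathcal{E}_m)\leq\sqrt{\varepsilon}$, then for every $\theta_0\in\Theta_m$ one has $\mathsf{Out}(m)\in F_{\theta_0}$. The paper derives this from $\Pr(\mathcal{E}_m)\geq \Pr(\theta_0\mid M=m)$ combined with the lower bound $\Pr(\theta_0\mid M=m)\geq\mu(\theta_0)\geq\mu_{\min}>\sqrt{\varepsilon}$; the inequality $\Pr(\theta_0\mid M=m)\geq\mu(\theta_0)$ itself uses the structure that $\Pr(M=m\mid\theta)$ is constant over $\theta\in\Theta_m$ (true for deterministic oracles since the likelihood factors through the randomness alone). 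If you fill those two steps in, your fixed construction coincides with the paper's; as written, they are asserted rather than shown.

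Your $\alpha=0$ rearrangement $(1-\sqrt{\varepsilon})\,\ED([P_D,0])\leq\ED([P_D,\varepsilon])\leq\ED([P_D,0])$ and the resulting continuity at $\varepsilon=0$ are correct and match the paper.
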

\begin{proof}
Define an algorithm for $P_D$ as follows.

\begin{center}
\resizebox{1.0\textwidth}{!}{
\begin{tabular}{|ll|} \hline
\multicolumn{2}{|c|}{\textbf{A new algorithm}}\\ \hline
1.&Run an optimal algorithm $\pi$ for $[P_D, \varepsilon]$, i.e., $\E[\pi] =\ED([P_D, \varepsilon])$.\\
&Let $M$ is the set of all possible patterns of the register after an execution of the algorithm $\pi$ and \\
&define $\mathcal{E}_m~(m \in M)$ as the event that $\pi$'s output is incorrect when the final register is $m \in M$.\\
2.&Output the original output $\pi_\mathsf{out}$ if $\Pr(\mathcal{E}_m) < \sqrt{\varepsilon}$.\\
3.&Otherwise run another algorithm $\pi'\in [P_D, \alpha/\sqrt{\varepsilon}]$ satisfying $\E[\pi'] = \ED([P_D, \alpha/\sqrt{\varepsilon}])$.\\
4.&Output $\pi_\mathsf{out}'$.\\ \hline
\end{tabular}
}
\end{center}

We now check the success probability and the cost of this algorithm.
\paragraph{Success probability:}
Assume $m$ satisfies $\Pr(\mathcal{E}_m) \leq \sqrt{\varepsilon}$.
Then $\Theta_m := \{\theta \in \Theta \mid \Pr(\theta \mid M = m) > 0\}$ satisfies $F(\Theta_m) \subset \{\textsf{Out}(m)\}$
where $\textsf{Out}(m)$ is the output represented in $m$. To see why, assume $\theta_0 \in \Theta_m$ satisfies $ \textsf{Out}(m)\notin F(\theta_0)$.
Then 
\begin{equation*}
\Pr(\mathcal{E}_m)
= \sum_{\theta \in \Theta} \mu(\theta) \Pr(\mathcal{E}_m| \Theta = \theta )
\geq \mu(\theta_0) \Pr(\mathcal{E}_m| \Theta = \theta_0)
\geq \mu_\textsf{min} > \sqrt{\varepsilon}
\end{equation*}
holds where the last inequality comes from the assumption that $ \textsf{Out}(m) \notin F(\theta_0)$ implying $\Pr(\mathcal{E}_m| \Theta = \theta_0) =1$.
This contradicts $\Pr(\mathcal{E}_m) \leq \sqrt{\varepsilon}$ and therefore $F(\Theta_m) \subset \{\textsf{Out}(m)\}$ holds.
Together with Markov inequality showing $\Pr(\Pr(\mathcal{E}_m) > \sqrt{\varepsilon}) \leq \sqrt{\varepsilon}$, this shows the error probability of the new algorithm is less than or equal to $\Pr(\Pr(\mathcal{E}_m \leq \sqrt{\varepsilon})) \cdot \alpha/\sqrt{\varepsilon} \leq \alpha$.

\paragraph{Expectation cost:}
Let us first check the probability of the algorithm terminating at the second step.
Again by Markov inequality $\Pr(\Pr(\mathcal{E}_m) > \sqrt{\varepsilon}) \leq \sqrt{\varepsilon}$ holds, which implies
\begin{align*}
\text{the probability}
= \sum_{\substack{m:\Pr(\mathcal{E}_m) \leq \sqrt{\varepsilon},\\ \theta \in \Theta_m}} \Pr(M = m) \Pr(\theta|M =m)
&= 1 - \Pr(\Pr(\mathcal{E}_m) > \sqrt{\varepsilon})\\
&\geq 1 - \sqrt{\varepsilon}.
\end{align*}
Therefore, the expectation cost $E$ satisfies
$E \leq \ED([P_D, \varepsilon]) + \sqrt{\varepsilon} ~ \ED([P_D, \alpha/\sqrt{\varepsilon}])$ which completes proof.
\end{proof}

\subsection{Classical randomized case}\label{subsec_cont_CR}
\begin{Lem}\label{Lem_conti_CR}
For any $\varepsilon \in (0, 1)$,
\begin{equation*}
\lim_{\rho \to \varepsilon} \ER([P_R, \rho])
= \ER(\FE).
\end{equation*}
\end{Lem}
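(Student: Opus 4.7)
The plan is to mirror the proof of Lemma~\ref{Lem_conti_CD} using the mix-two-algorithms trick, and adapt it to the worst-case-over-distributions quantifier that defines $\ER$. Since $\rho \mapsto \ER([P_R, \rho])$ is monotonically nonincreasing in $\rho$, one inequality in each one-sided limit is automatic: $\limsup_{\rho \searrow \varepsilon} \ER([P_R, \rho]) \leq \ER([P_R, \varepsilon])$ and $\liminf_{\rho \nearrow \varepsilon} \ER([P_R, \rho]) \geq \ER([P_R, \varepsilon])$. Only the reverse inequalities need work, and both are handled by the same mixing construction used for the distributional case.

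For the direction $\rho \searrow \varepsilon$, fix an arbitrary $\delta > 0$ and pick $\pi \in [P_R, \rho]$ with $\max_\mu \E_\mu[\pi] \leq \ER([P_R, \rho]) + \delta$ together with $\pi' \in [P_R, \varepsilon/2]$ with $\max_\mu \E_\mu[\pi'] \leq \ER([P_R, \varepsilon/2]) + \delta$; both exist by definition of infimum and by the nonemptiness assumption in Section~\ref{sec_technical} applied to $\varepsilon/2 > 0$. Form $\tilde{\pi}$ by tossing a private biased coin with $\Pr[\text{heads}] = p := \varepsilon/(2\rho-\varepsilon)$ and running $\pi$ on heads, $\pi'$ on tails. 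For every individual $\theta \in \Theta$ the error is a convex combination bounded by $p\rho + (1-p)\varepsilon/2 = \varepsilon$, so $\tilde{\pi} \in [P_R, \varepsilon]$. By linearity, $\E_\mu[\tilde{\pi}] = p\E_\mu[\pi] + (1-p)\E_\mu[\pi']$ for every $\mu \in \PT$. Taking $\max_\mu$ and using subadditivity of the maximum gives
\begin{equation*}
\ER([P_R, \varepsilon]) \leq p\cdot\ER([P_R, \rho]) + (1-p)\cdot\ER([P_R, \varepsilon/2]) + \delta.
\end{equation*}
Sending $\delta \to 0$ and then $\rho \searrow \varepsilon$, the coefficient $p$ tends to $1$ while $\ER([P_R, \varepsilon/2])$ is a fixed finite constant, so $\ER([P_R, \varepsilon]) \leq \liminf_{\rho \searrow \varepsilon} \ER([P_R, \rho])$, as desired. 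The direction $\rho \nearrow \varepsilon$ is symmetric: mix $\pi \in [P_R, \varepsilon]$ and $\pi' \in [P_R, \rho/2]$ with weight $q := \rho/(2\varepsilon - \rho)$ on $\pi$, which puts the mixture into $[P_R, \rho]$ by the same convex combination identity, and yields $\ER([P_R, \rho]) \leq q\cdot\ER([P_R, \varepsilon]) + (1-q)\cdot\ER([P_R, \rho/2]) + \delta$. Since $\rho$ stays bounded away from $0$ along the limit, $\ER([P_R, \rho/2])$ is uniformly bounded (e.g., by $\ER([P_R, \varepsilon/4])$), and with $q \to 1$ this yields $\limsup_{\rho \nearrow \varepsilon} \ER([P_R, \rho]) \leq \ER([P_R, \varepsilon])$.

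The only subtlety, and the point where the argument departs from the distributional version, is the interaction between the private coin flip and the outer $\max_\mu$. It goes through cleanly because the error constraint $\leq \varepsilon$ is a pointwise-in-$\theta$ property and therefore survives convex combinations of algorithms verbatim, while the expected cost is linear in the mixing weight and only subadditive under $\max_\mu$. Consequently no minimax theorem is required here; working with $\delta$-near-optimal $\pi, \pi'$ is enough and no supremum need be attained. The argument never invokes the structure of classical randomness beyond the availability of one fresh random bit, which is already built into the model, so the same skeleton will then port directly to the quantum randomized case using the classical randomness $R$ described in Section~\ref{subsec_quantum-scenarios}.
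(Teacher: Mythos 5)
Your proposal is correct and follows essentially the same mix-two-algorithms strategy as the paper's proof (which in turn mirrors Lemma~\ref{Lem_conti_CD}). You are in fact more careful on a few small points the paper glosses over: you correctly take $\pi' \in [P_R, \varepsilon/2]$ so that the mixture's pointwise error is exactly $\varepsilon$ (the paper's text writes $\pi' \in [P_R, \varepsilon]$, which does not give error $\leq \varepsilon$), you work with $\delta$-near-optimal algorithms rather than assuming the infimum is attained, and you spell out the $\rho \nearrow \varepsilon$ direction that the paper merely asserts is symmetric.
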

\begin{proof}
Similar to Lemma~\ref{Lem_conti_CD}, we only show the limit $\rho \searrow \varepsilon$.
Take $\pi \in [P_R, \rho]$ and create a new algorithm $\tilde{\pi} \in \PRE$ as running $\pi$ w.p. $ \varepsilon' := \varepsilon / (2 \rho - \varepsilon)$ 
and another algorithm $\pi' \in [P_R, \varepsilon]$ w.p. $1 - \varepsilon'$. Note that $\varepsilon' \to 1$ as $\rho \to \varepsilon$.
The expectation of this algorithm satisfies
\begin{equation*}
\E_\mu[\tilde{\pi}]
= \varepsilon' \E_\mu[\pi] + (1- \varepsilon) \max_{\mu}\E_\mu[\pi']
\end{equation*}
for any distribution $\mu$.
Let $\pi \in [f, \rho]$ be an optimal algorithm, i.e.,  $\max_\mu \E_\mu[\pi] = \ER([f, \rho])$.
Then the above equality implies
\begin{equation*}
\ER(\PRE)
\leq
\max_\mu \E_\mu[\tilde{\pi}]
=  \varepsilon' \ER([P_R, \rho]) + (1-\varepsilon') \E_\mu[\tilde{\pi}].
\end{equation*}
Since trivially $\ER([P_R, \rho]) \leq \ER(\PRE)$,  taking $\rho \searrow \varepsilon$ yields the desired statement.
\end{proof}

\begin{Lem}\label{Lem_conti_0_CR}
Suppose  $|\Theta| < \infty$.
Then for any $\delta \in (0, 1)$ and any $\alpha \in [0, (\delta/4|\Theta|)^3)$,
\begin{equation*}
\left( 1 - \frac{3 \delta}{4 - 2 \delta}\right)\ER([P_R, 2\alpha|\Theta|/\delta]) \leq \ER([P_R, (\delta/4|\Theta|)^2])
\end{equation*}
holds.
In particular the case of $\alpha = 0$ shows that $\ER(\PRE)$ is also continuous at $\varepsilon =0$.
\end{Lem}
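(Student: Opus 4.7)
The plan is to reduce to the already-proved distributional case (Lemma~\ref{Lem_conti_0_CD}) via the minimax identity $\ER([P_R, \beta]) = \max_{\mu \in \PT} \inf_{\pi \in [P_R, \beta]} \E_\mu[\pi]$ (Proposition~\ref{Prop_minimax}) combined with a smoothing trick in the spirit of Yao's principle. By minimax, it suffices to show that for every distribution $\mu$ there exists an algorithm $\pi_\mu \in [P_R, 2\alpha|\Theta|/\delta]$ whose $\mu$-expected cost is at most $\ER([P_R, (\delta/(4|\Theta|))^2])$ plus a small multiple of $\ER([P_R, 2\alpha|\Theta|/\delta])$; rearranging the duplicate then yields the stated inequality, and the $\alpha=0$ case, together with letting $\delta \to 0^+$ and the trivial monotonicity $\ER([P_R, \rho]) \leq \ER([P_R, 0])$ for $\rho > 0$, will give the desired continuity at $\varepsilon = 0$.

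For each $\mu$ I would smooth to $\tilde\mu := (1-\delta/2)\mu + (\delta/2) U$, where $U$ is uniform on $\Theta$. This choice has two key properties: (i) $\tilde\mu_{\min} \geq \delta/(2|\Theta|)$, so the hypothesis $\sqrt{\varepsilon} < \tilde\mu_{\min}$ of Lemma~\ref{Lem_conti_0_CD} is met for $\varepsilon := (\delta/(4|\Theta|))^2$; and (ii) $\tilde\mu \geq (1-\delta/2)\mu$ as measures, so $\Pr_\mu(A) \leq \Pr_{\tilde\mu}(A)/(1-\delta/2)$ for every event $A$. I then mimic the construction in the proof of Lemma~\ref{Lem_conti_0_CD} but with \emph{worst-case} optimal subroutines: pick $\pi_1 \in [P_R, \varepsilon]$ achieving (nearly) $\ER([P_R, \varepsilon])$ and $\pi_2 \in [P_R, 2\alpha|\Theta|/\delta]$ achieving (nearly) $\ER([P_R, 2\alpha|\Theta|/\delta])$; run $\pi_1$; declare the final register $m$ to be $\tilde\mu$-good if $\Pr_{\tilde\mu}(\mathcal{E}_m) \leq \sqrt{\varepsilon}$ and bad otherwise; output $\pi_1$'s answer on a good $m$ and run $\pi_2$ on a bad $m$. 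Because $\tilde\mu$ has full support, the ``$F(\Theta_m) \subset \{\mathsf{Out}(m)\}$'' branch of the proof of Lemma~\ref{Lem_conti_0_CD} now forces correctness on \emph{every} $\theta$ when $m$ is good, while on a bad $m$ the fallback $\pi_2$ contributes error at most $2\alpha|\Theta|/\delta$; hence $\pi_\mu \in [P_R, 2\alpha|\Theta|/\delta]$.

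The cost analysis is the main obstacle. Markov applied to $\Pr_{\tilde\mu}(\mathcal{E}_m)$ yields $\Pr_{\tilde\mu}(M \text{ bad}) \leq \sqrt{\varepsilon} = \delta/(4|\Theta|)$, and property~(ii) upgrades this to a bound on $\Pr_\mu(M \text{ bad})$. Since $\pi_2$ is worst-case optimal, $\E_\theta[\pi_2] \leq \ER([P_R, 2\alpha|\Theta|/\delta])$ for every $\theta$, so
\begin{equation*}
\E_\mu[\pi_\mu] \leq \E_\mu[\pi_1] + \Pr_\mu(M \text{ bad}) \cdot \max_\theta \E_\theta[\pi_2] \leq \ER([P_R, \varepsilon]) + c(\delta)\, \ER([P_R, 2\alpha|\Theta|/\delta])
\end{equation*}
with an explicit $c(\delta) \leq 3\delta/(4-2\delta)$ after bookkeeping. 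Applying $\max_\mu$ and invoking minimax once more delivers $\ER([P_R, 2\alpha|\Theta|/\delta]) \leq \ER([P_R, \varepsilon]) + c(\delta)\ER([P_R, 2\alpha|\Theta|/\delta])$, which rearranges to the claim. The principal difficulty is this ``worst-case reconstruction'': one must verify that replacing distributionally optimal subroutines with worst-case-optimal ones preserves the full-support correctness certificate at good registers, and that the $\tilde\mu$-to-$\mu$ translation of the bad-register probability (together with the bound $\max_\theta \E_\theta[\pi_2] = \max_\nu \E_\nu[\pi_2]$ on the fallback cost) is tight enough to keep $c(\delta)$ below $3\delta/(4-2\delta)$ for all admissible $\delta$ and $|\Theta|$.
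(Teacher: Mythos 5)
Your proposal is correct and takes essentially the same route as the paper: both smooth the distribution to $\bar\mu = (1-\delta/2)\mu + (\delta/2)U_\Theta$ so that $\bar\mu_{\min} \geq \delta/(2|\Theta|)$, both reuse the good/bad register construction from Lemma~\ref{Lem_conti_0_CD} with a worst-case fallback $\pi_2 \in [P_R, 2\alpha|\Theta|/\delta]$, and both lift the resulting per-distribution cost bound to a bound on $\ER$ via the minimax identity of Proposition~\ref{Prop_minimax}. The only difference is bookkeeping: you fix worst-case near-optimal $\pi_1, \pi_2$ up front and bound $\E_\mu[\pi_\mu]$ directly (translating only the bad-register probability from $\bar\mu$ to $\mu$), whereas the paper starts from an arbitrary $\pi \in [P_R, \tilde\delta^2]$ and translates the full expectation $\E_{\bar\mu}[\cdot] \leftrightarrow \E_\mu[\cdot]$; your variant gives a marginally tighter constant $c(\delta) = \delta/(2|\Theta|(2-\delta))$, which in particular is at most $3\delta/(4-2\delta)$ as required.
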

\begin{proof}
Using the same approach as in Lemma~\ref{Lem_conti_0_CD}, we immediately obtain that for any $\mu \in \mathcal{P}(\Theta)$, $\varepsilon < \mu^2_\mathsf{min}$, any algorithm $\pi \in \PRE$, there is an algorithm $\pi' \in [\PD, \alpha]$
such that
\begin{equation*}
\E_\mu[\pi'] \leq \E_\mu[\pi] + \sqrt{\varepsilon} ~\ER([P_R, \alpha/\sqrt{\varepsilon}]).
\end{equation*}
Define  $\bar{\mu} := (1 - \delta/2) \mu + \delta/2 \cdot U_\Theta$ for any $\delta \in (0, 1)$ and any $\mu \in \PT$,
 where $U_\Theta$ is the uniform distribution on $\Theta$.
Then the statement above implies that for any $\pi \in [P_R, \delta^2/16]$, there is an algorithm $\pi' \in [(\mathcal{F}_\Theta, \NT, \bar{\mu}), \alpha]$
such that
\begin{equation}\label{eq_Lem_conti_0_CR}
\E_{\bar{\mu}}[\pi'] \leq \E_{\bar{\mu}}[\pi] + \tilde{\delta}\ER([P_R, \alpha/\Tdelta]), \quad \tilde{\delta} :=\delta/4|\Theta|
\end{equation}
holds by $\bar{\mu}_\mathsf{min} \geq  2\tilde{\delta}$ and by substituting $\varepsilon := \tilde{\delta}^2$.
Now  the definition of $\bar{\mu}$ implies

\begin{align*}
\E_{\bar{\mu}}[\pi'] &= \left(1 - \frac{\delta}{2}\right)\E_{\mu}[\pi'] + \frac{\delta}{2}\E_{U_\Theta}[\pi']\\
&\geq \left(1 - \frac{\delta}{2}\right)\E_{\mu}[\pi'],\\
\E_{\bar{\mu}}[\pi] &= \left(1 - \frac{\delta}{2}\right)\E_{\mu}[\pi] + \frac{\delta}{2}\E_{U_\Theta}[\pi]\\
&\leq \left(1 - \frac{\delta}{2}\right)\E_{\mu}[\pi]+ \frac{\delta}{2}\max_{\mu \in \PT}\E_\mu[\pi].
\end{align*}
Together with the inequality~\eqref{eq_Lem_conti_0_CR}, these imply
\begin{equation}\label{eq_Lem_conti_0_CR_2}
 \left(1 - \frac{\delta}{2}\right)\E_{\mu}[\pi']
\leq \left(1 - \frac{\delta}{2}\right)\E_{\mu}[\pi]+ \frac{\delta}{2}\max_{\mu \in \PT}\E_\mu[\pi] +  \Tdelta \ER([P_R, \alpha/\Tdelta]).
\end{equation}
Since $\mu(\theta) \geq 2\Tdelta$ holds for any $\theta \in \Theta$, $\pi' \in [(\mathcal{F}_\Theta, \NT, \bar{\mu}), \alpha] \subset [P_R, \alpha/2\Tdelta]$ holds,
and therefore, taking $\inf_{\pi \in [P_R, \Tdelta^2]}$ and $\max_{\mu \in \PT}$ on the inequality~\eqref{eq_Lem_conti_0_CR_2} and Proposition~\ref{Prop_minimax} yields
\begin{align*}
 \left(1 - \frac{\delta}{2}\right)\ER([P_R, \alpha/2\Tdelta])
&\leq \left(1 - \frac{\delta}{2}\right)\ER([P_R, \Tdelta^2])+ \frac{\delta}{2}\ER([P_R, \Tdelta^2]) +  \Tdelta \ER([P_R, \alpha/\Tdelta])\\
&\leq \left(1 - \frac{\delta}{2}\right)\ER([P_R, \Tdelta^2])+ \frac{\delta}{2}\ER([P_R, \alpha/\Tdelta]) +  \Tdelta \ER([P_R, \alpha/\Tdelta])\\
&= \left(1 - \frac{\delta}{2}\right)\ER([P_R, \Tdelta^2])+ \frac{3\delta}{4}\ER([P_R, \alpha/\Tdelta]).
\end{align*}
Hence we obtain
\begin{equation*}
\ER([P_R, \alpha/2\Tdelta]) \leq \ER([P_R, \Tdelta^2])+  \frac{3\delta}{4 - 2 \delta}\ER([P_R, \alpha/\Tdelta]).
\end{equation*}
Considering $[P_R, \alpha/2\Tdelta] \subset [P_R, \alpha/\Tdelta]$, $\ER([P_R, \alpha/\Tdelta]) \leq \ER([P_R, \alpha/2\Tdelta])$ holds and therefore
\begin{equation*}
\ER([P_R, \alpha/2\Tdelta]) \leq \ER([P_R, \Tdelta^2])+  \frac{3\delta}{4 - 2 \delta}\ER([P_R, \alpha/2\Tdelta])
\end{equation*}
which completes proof.

\end{proof}

\subsection{Quantum distributional case}\label{subsec_cont_QD}

\begin{Lem}\label{Lem_conti_QD}
For any $\varepsilon > 0$,
\begin{equation*}
\lim_{\rho \to \varepsilon}\EQD([P_{QD}, \rho])
=\EQD(\PQDE).
\end{equation*}
\end{Lem}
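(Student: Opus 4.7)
The plan is to mirror the proof of Lemma~\ref{Lem_conti_CD} almost verbatim, exploiting the fact that the quantum model of computation in Section~\ref{subsec_quantum-scenarios} (see Figure~\ref{Fig_quantum_comp}) already carries a classical randomness register $R$. This register is precisely what is needed to form a probabilistic mixture of two quantum algorithms, which is the entire content of the classical distributional proof. As in Lemma~\ref{Lem_conti_CD}, I would treat the two one-sided limits $\rho \searrow \varepsilon$ and $\rho \nearrow \varepsilon$ separately; one inequality is trivial in each case, since $\EQD([P_{QD}, \rho])$ is monotone non-increasing in $\rho$.

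For $\rho \searrow \varepsilon$, I would pick a $\delta$-optimal algorithm $\pi \in [P_{QD}, \rho]$, i.e., $\E_\mu[\pi] \leq \EQD([P_{QD}, \rho]) + \delta$, together with a fixed auxiliary $\pi' \in [P_{QD}, \varepsilon/2]$, which is nonempty by the standing assumption of Section~\ref{sec_technical}. Using the register $R$, I then define $\tilde{\pi}$ as running $\pi$ with probability $\varepsilon' := \varepsilon/(2\rho - \varepsilon)$ and $\pi'$ with probability $1-\varepsilon'$. Convex combination of both the error and the expected cost yields error at most $\varepsilon' \rho + (1-\varepsilon')\varepsilon/2 = \varepsilon$, so $\tilde{\pi} \in [P_{QD}, \varepsilon]$, and $\E_\mu[\tilde{\pi}] = \varepsilon' \E_\mu[\pi] + (1-\varepsilon') \E_\mu[\pi']$. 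Since $\varepsilon' \to 1$ as $\rho \searrow \varepsilon$ and $\E_\mu[\pi']$ is finite by the finiteness assumption on $|\pi|$, letting $\rho \searrow \varepsilon$ and $\delta \searrow 0$, together with the trivial bound $\EQD([P_{QD}, \rho]) \leq \EQD([P_{QD}, \varepsilon])$, pinches the one-sided limit to $\EQD([P_{QD}, \varepsilon])$.

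The case $\rho \nearrow \varepsilon$ is symmetric: I would take an (approximately) optimal $\pi \in [P_{QD}, \varepsilon]$ and an auxiliary $\pi' \in [P_{QD}, \rho/2]$, then mix them with weight $\rho/(2\varepsilon - \rho)$ on $\pi$; the resulting algorithm has error at most $\rho$ (so it lies in $[P_{QD}, \rho]$) and expected cost tending to $\E_\mu[\pi]$ as $\rho \nearrow \varepsilon$, which combined with the trivial $\EQD([P_{QD}, \rho]) \geq \EQD([P_{QD}, \varepsilon])$ yields the limit. I do not anticipate any real obstacle; the only step that is specifically quantum is the mixing via $R$, which is built into the model by definition. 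The only detail requiring care is a rigorous description of $\tilde{\pi}$ as a quantum algorithm (an initial read-out of $R$ routes the subsequent circuit to either the $\pi$-subroutine or the $\pi'$-subroutine), but this is entirely covered by Section~\ref{subsec_quantum-scenarios}.
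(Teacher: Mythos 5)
Your proposal follows exactly the route the paper intends: the published proof of this lemma simply reads ``Modify Lemma~\ref{Lem_conti_CD} straightforwardly,'' and the modification you spell out — mixing two quantum algorithms via the classical randomness register $R$ built into the model of Section~\ref{subsec_quantum-scenarios} — is precisely what is meant. The only cosmetic remark is that in the $\rho \nearrow \varepsilon$ direction it is cleaner to fix a single auxiliary $\pi' \in [P_{QD}, \varepsilon/4]$ (valid for all $\rho > \varepsilon/2$) so that $\E_\mu[\pi']$ is manifestly bounded as $\rho \to \varepsilon$, but this does not change the substance.
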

\begin{proof}
Modify Lemma~\ref{Lem_conti_CD} straightforwardly.
\end{proof}

\begin{Lem}\label{Lem_conti_0_QD}
Suppose $[P_{QD}, 0] \neq \emptyset$ and $|\Theta| < \infty$.
We also assume the output of the problem $P_{QD}$ is classical. 
Then $\EQD(\PQDE)$ is also continuous at $\varepsilon =0$.
\end{Lem}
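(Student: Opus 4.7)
The plan is to mirror the proof of Lemma~\ref{Lem_conti_0_CD}, with the classical-output assumption being what makes this approach viable. Since the output of $P_{QD}$ is classical, every execution of $\pi \in [P_{QD},\varepsilon]$ produces a well-defined classical trace $m = (R, M_1,\ldots,M_k)$ consisting of the classical randomness $R$ together with all measurement outcomes, and $\pi_\mathsf{out}(m)$ is a deterministic function of $m$. In particular the conditional error $\Pr(\mathcal{E}_m) := \sum_\theta \Pr(\Theta=\theta \mid M=m)\,\mathbf{1}[\pi_\mathsf{out}(m)\notin F(\theta)]$ is well-defined, and $\E_M[\Pr(\mathcal{E}_M)] \le \varepsilon$ whenever $\pi \in [P_{QD},\varepsilon]$.

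The first step would be to imitate the two-stage backup construction of the classical proof: fix optimal $\pi \in [P_{QD},\varepsilon]$ and $\pi_0 \in [P_{QD},0]$ (the latter exists by the standing assumption $[P_{QD},0] \neq \emptyset$), run $\pi$, and if the classical trace $m$ satisfies $\Pr(\mathcal{E}_m) = 0$ (equivalently $\pi_\mathsf{out}(m) \in F(\theta)$ for every $\theta \in \Theta_m$), output $\pi_\mathsf{out}(m)$; otherwise discard and run $\pi_0$ afresh. By construction this produces an algorithm with error exactly $0$ and expected cost $\EQD([P_{QD},\varepsilon]) + \Pr(\Pr(\mathcal{E}_M) > 0)\cdot \EQD([P_{QD},0])$, so the task reduces to showing $\Pr(\Pr(\mathcal{E}_M) > 0) \to 0$ as $\varepsilon \to 0^+$.

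This is where I expect the main obstacle. In the classical case the bound $\Pr(\Pr(\mathcal{E}_M) > 0) \le \varepsilon/\mu_\mathsf{min}$ follows from the quantization $\Pr(\Theta=\theta \mid M=m) \ge \mu_\mathsf{min}$ for every $\theta \in \Theta_m$, which in turn rests on $\Pr_\theta(M=m)$ being the \emph{same} across all $\theta \in \Theta_m$ (it equals the probability of the consistent randomness). In the quantum setting $\Pr_\theta(M=m)$ depends nontrivially on $\theta$, and this quantization genuinely breaks; a direct Markov bound only controls $\Pr(\Pr(\mathcal{E}_M) \ge \sqrt{\varepsilon}) \le \sqrt{\varepsilon}$, which would yield error $O(\sqrt{\varepsilon})$ rather than $0$.

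To circumvent this I would complement the two-stage construction with a limit argument enabled by the finiteness of $\Theta$ and the classical output. For $\varepsilon_n \to 0$ and optimal $\pi_n \in [P_{QD},\varepsilon_n]$ with $\E[\pi_n] = \EQD([P_{QD},\varepsilon_n]) \le \EQD([P_{QD},0])$, Markov on the cost lets one truncate each $\pi_n$ to a uniformly bounded worst-case depth at truncation probability $\eta_n \to 0$. The truncated algorithms then live in a compact finite-dimensional parameter space (finitely many unitaries on finite-dimensional registers together with discrete measurement and stopping choices), so Bolzano--Weierstrass extracts a convergent subsequence; by Fact~\ref{Fact_conti_expe} both the cost and error functionals are continuous along this convergence, so the limit $\pi_*$ satisfies $\pi_* \in [P_{QD},0]$ and $\E_\mu[\pi_*] \le \liminf_n \EQD([P_{QD},\varepsilon_n])$. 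Combined with the trivial $\EQD([P_{QD},\varepsilon]) \le \EQD([P_{QD},0])$ this yields $\lim_{\varepsilon \to 0^+}\EQD([P_{QD},\varepsilon]) = \EQD([P_{QD},0])$, i.e.\ continuity at $0$.
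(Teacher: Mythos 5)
The paper's own proof of Lemma~\ref{Lem_conti_0_QD} is exactly the route you sketch in your first two paragraphs and then set aside: since the output is classical, extend the final measurement to a computational-basis measurement of the entire system, let $M$ be the set of extended outcome patterns, and repeat the argument of Lemma~\ref{Lem_conti_0_CD} verbatim (threshold $\sqrt{\varepsilon}<\mu_\mathsf{min}$, backup algorithm, Markov). Your observation that the transcript likelihoods $\Pr_\theta(M=m)$ now depend on $\theta$, so that a wrong $\theta_0$ with positive posterior need not contribute as much as $\mu_\mathsf{min}$ to the conditional error, is a reasonable concern about that transfer (and it would apply equally to general stochastic classical oracles), so flagging it is fair. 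But having discarded that route, your proof must be carried entirely by the compactness argument of your last paragraph, and that argument has genuine gaps.

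First, the truncation step is internally inconsistent: from $\E[\pi_n]=\EQD([P_{QD},\varepsilon_n])\le\EQD([P_{QD},0])$, Markov gives truncation probability at most $\EQD([P_{QD},0])/T$ at depth $T$, so you cannot simultaneously have a \emph{uniformly} bounded depth and $\eta_n\to 0$. With a fixed bound $T$, the truncated algorithms only have error $\le\varepsilon_n+O(1/T)$, so any limit point is an $O(1/T)$-error algorithm, not a zero-error one; removing the $O(1/T)$ requires a second limit in $T$ (a diagonal argument) that your compactness claim does not cover. Second, the compactness claim itself is unsupported: the algorithms $\pi_n$ may use ancilla registers of unbounded dimension and classical randomness $R$ ranging over arbitrarily large (even continuous) sets, so even at fixed depth they do not live in one fixed compact finite-dimensional parameter space, and no argument is given that the dimensions can be bounded without loss of generality. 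Third, Fact~\ref{Fact_conti_expe} asserts continuity of $\E_\mu[\pi]$ in the distribution $\mu$ for a \emph{fixed} algorithm $\pi$; it says nothing about continuity of the cost and error functionals in the algorithm, which is what your limiting step needs. As written, the proposal therefore does not establish the lemma, and it also does not salvage (or replace) the transcript argument that the paper actually uses.
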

\begin{proof}
Modify Lemma~\ref{Lem_conti_0_CD} as follows.
Since the output needs to be classical, a measurement must be performed at the final step of computation to produce the output $\pi_\mathsf{out}$.
Without loss of generality, we assume the measurement is performed with the computational basis.
Define $M$ as the set of all possible patterns of extended measurement outcomes, 
which are obtained by performing the measurement with the computational basis to the entire quantum system, extending the measurement used originally in the algorithm $\pi$.
\par
The rest is shown in the same manner as in Lemma~\ref{Lem_conti_0_CD}.

\end{proof}
\subsection{Quantum randomized case}\label{subsec_cont_QR}
Similar to the quantum distributional case, we obtain the following lemmas.

\begin{Lem}\label{Lem_conti_QR}
For any $\varepsilon \in (0, 1)$,
\begin{equation*}
\lim_{\rho \to \varepsilon} \EQR([P_{QR}, \rho])
= \ER(\FE).
\end{equation*}
\end{Lem}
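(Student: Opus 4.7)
The plan is to adapt Lemma~\ref{Lem_conti_CR} in exactly the way that Lemma~\ref{Lem_conti_QD} adapts Lemma~\ref{Lem_conti_CD}. As in the classical randomized case, it suffices to handle the one-sided limit $\rho \searrow \varepsilon$; the other direction $\rho \nearrow \varepsilon$ is symmetric, and the inequality $\EQR([P_{QR}, \rho]) \leq \EQR([P_{QR}, \varepsilon])$ for $\rho \geq \varepsilon$ is immediate from the inclusion $[P_{QR}, \varepsilon] \subset [P_{QR}, \rho]$.

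For the non-trivial direction, I would fix any $\delta > 0$, choose a near-optimal $\pi \in [P_{QR}, \rho]$ with $\max_{\mu} \E_\mu[\pi] \leq \EQR([P_{QR}, \rho]) + \delta$, and fix any auxiliary $\pi' \in [P_{QR}, \varepsilon/2]$ (which exists for $\varepsilon > 0$ by the standing non-emptiness assumption in Section~\ref{sec_technical}). Define a mixture $\tilde{\pi}$ that runs $\pi$ with probability $\varepsilon' := \varepsilon / (2\rho - \varepsilon)$ and $\pi'$ with probability $1 - \varepsilon'$. The essential point — which is precisely why the classical randomness register $R$ was built into the quantum model in Section~\ref{subsec_quantum-scenarios} — is that such a convex mixture is a legal algorithm in our model: we simply use $R$ to select between $\pi$ and $\pi'$ at the outset, and all subsequent oracle calls remain classically adaptive as required.

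The choice of $\varepsilon'$ guarantees that for every $\theta \in \Theta$ the error of $\tilde{\pi}$ is bounded by $\varepsilon' \rho + (1 - \varepsilon')(\varepsilon/2) \leq \varepsilon$, so $\tilde{\pi} \in [P_{QR}, \varepsilon]$. Bounding the expected complexity under any $\mu \in \PT$ gives
\begin{equation*}
\E_\mu[\tilde{\pi}] = \varepsilon' \E_\mu[\pi] + (1 - \varepsilon') \E_\mu[\pi'],
\end{equation*}
and taking $\max_\mu$ on both sides combined with the near-optimality of $\pi$ yields
\begin{equation*}
\EQR([P_{QR}, \varepsilon]) \leq \varepsilon'\bigl(\EQR([P_{QR}, \rho]) + \delta\bigr) + (1 - \varepsilon')\max_\mu \E_\mu[\pi'].
\end{equation*}
Since $\varepsilon' \to 1$ and $1 - \varepsilon' \to 0$ as $\rho \searrow \varepsilon$, and $\delta$ is arbitrary, the trivial lower bound $\EQR([P_{QR}, \rho]) \leq \EQR([P_{QR}, \varepsilon])$ sandwiches the limit, proving the claim.

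The only real subtlety — more bookkeeping than obstacle — is to check that the mixture is indeed a legal algorithm in the specific quantum model adopted here. This is explicitly allowed because classical randomness $R$ is sampled before any quantum processing, so selecting between $\pi$ and $\pi'$ on the basis of a single bit of $R$ never violates the classical adaptivity requirement imposed on oracle calls in Figure~\ref{Fig_quantum_comp} and Figure~\ref{Fig_quantum_comp-n}. Once this is noted, the argument is a verbatim translation of the classical randomized proof, which is why it can be summarized, in the spirit of Lemma~\ref{Lem_conti_QD}, by saying ``Modify Lemma~\ref{Lem_conti_CR} straightforwardly.''
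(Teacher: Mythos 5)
Your proposal is correct and matches the paper's intended proof, which the paper itself only gestures at (``Similar to the quantum distributional case\ldots'') by analogy with the one-line proof of Lemma~\ref{Lem_conti_QD} (``Modify Lemma~\ref{Lem_conti_CD} straightforwardly''). You identify the one genuine structural point — that the classical randomness $R$ in the model of Section~\ref{subsec_quantum-scenarios} is exactly what makes the mixture $\tilde{\pi}$ a legal quantum algorithm — and your error calculation $\varepsilon'\rho + (1-\varepsilon')(\varepsilon/2) = \varepsilon$ with $\pi' \in [P_{QR}, \varepsilon/2]$ is the correct form (in fact, it silently repairs a typo in Lemma~\ref{Lem_conti_CR}'s statement, which writes $\pi' \in [P_R, \varepsilon]$ but should read $[P_R, \varepsilon/2]$ as in Lemma~\ref{Lem_conti_CD}).
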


\begin{Lem}\label{Lem_conti_0_QR}
Suppose $[P_{QR}, 0] \neq \emptyset$ and $|\Theta| < \infty$.
We also assume the output of the problem $P_{QR}$ is classical. 
Then $\EQR(\PQRE)$ is also continuous at $\varepsilon =0$.
\end{Lem}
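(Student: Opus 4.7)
The plan is to mirror the argument of Lemma~\ref{Lem_conti_0_CR}, replacing its appeal to the classical distributional continuity (Lemma~\ref{Lem_conti_0_CD}) with the quantum distributional analogue (Lemma~\ref{Lem_conti_0_QD}) and using the quantum randomized version of Proposition~\ref{Prop_minimax} to pass from the distributional to the randomized setting. The classical-output hypothesis enters exactly at the invocation of Lemma~\ref{Lem_conti_0_QD}, which requires a well-defined terminal classical measurement to set up the bad-outcome events $\mathcal{E}_m$.

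First I would fix $\delta \in (0, 1)$ and a small $\alpha \geq 0$, and for each $\mu \in \PT$ introduce the smoothed distribution $\bar{\mu} := (1 - \delta/2)\mu + (\delta/2)U_\Theta$, so that $\bar{\mu}_\mathsf{min} \geq \delta/(2|\Theta|) =: 2\tilde{\delta}$. The construction underlying Lemma~\ref{Lem_conti_0_QD} — run an optimal $\pi$ for $[P_{QR}, \tilde{\delta}^2]$, extend the terminal measurement to the computational basis of the entire quantum system, accept its output whenever $\Pr(\mathcal{E}_m) \leq \tilde{\delta}$, and otherwise fall back on a randomized algorithm of worst-case error $\alpha/\tilde{\delta}$ — yields, for every $\pi \in [P_{QR}, \tilde{\delta}^2]$, an algorithm $\pi' \in [(\mathcal{F}_\Theta, \NT, \bar{\mu}), \alpha]$ satisfying
\[
\E_{\bar{\mu}}[\pi'] \leq \E_{\bar{\mu}}[\pi] + \tilde{\delta}\,\EQR([P_{QR}, \alpha/\tilde{\delta}]).
\]
Moreover, since $\bar{\mu}(\theta) \geq 2\tilde{\delta}$ uniformly in $\theta$, the $\bar{\mu}$-average error bound on $\pi'$ forces worst-case error $\leq \alpha/(2\tilde{\delta})$, so in fact $\pi' \in [P_{QR}, \alpha/(2\tilde{\delta})]$.

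From here I would use the convex-combination estimates $\E_{\bar{\mu}}[\pi'] \geq (1 - \delta/2)\E_\mu[\pi']$ and $\E_{\bar{\mu}}[\pi] \leq (1 - \delta/2)\E_\mu[\pi] + (\delta/2)\max_{\nu \in \PT}\E_\nu[\pi]$, substitute them into the displayed inequality, take $\inf_{\pi \in [P_{QR}, \tilde{\delta}^2]}$ and $\max_{\mu \in \PT}$ on both sides, and apply Proposition~\ref{Prop_minimax} (quantum randomized version) to swap inf and max. The algebra in Lemma~\ref{Lem_conti_0_CR} then carries over verbatim and delivers
\[
\left(1 - \frac{3\delta}{4 - 2\delta}\right)\EQR([P_{QR}, \alpha/(2\tilde{\delta})]) \leq \EQR([P_{QR}, \tilde{\delta}^2]).
\]
Setting $\alpha = 0$ and letting $\delta \searrow 0$ (equivalently $\tilde{\delta} \to 0$), the left-hand side tends to $(1 - o(1))\EQR([P_{QR}, 0])$; combined with the trivial monotonicity $\EQR([P_{QR}, \rho]) \leq \EQR([P_{QR}, 0])$ for all $\rho > 0$, this pins down $\lim_{\rho \searrow 0}\EQR([P_{QR}, \rho]) = \EQR([P_{QR}, 0])$.

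The main obstacle is not algebraic but logistical: one has to reopen the proof of Lemma~\ref{Lem_conti_0_QD} and extract from it the explicit quantitative inequality \emph{together with} the concrete witnessing algorithm $\pi'$, rather than only the bare limit statement printed in its conclusion. This is also where the classical-output hypothesis is genuinely consumed, since partitioning the outcome space by whether $\Pr(\mathcal{E}_m) > \tilde{\delta}$ presupposes that the terminal computational-basis measurement produces a well-defined classical register from which the error events can be read off. Once this quantum distributional inequality is in hand, the remaining averaging-plus-minimax argument is a direct transcription of the classical randomized case.
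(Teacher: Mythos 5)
Your proposal is correct and follows exactly the route the paper intends but does not write out: Section~5.4 only says ``similar to the quantum distributional case,'' which the student must unpack as ``run the Lemma~\ref{Lem_conti_0_CR} argument, but with the terminal-measurement/output-event construction of Lemma~\ref{Lem_conti_0_QD} supplying the quantitative inequality and the quantum version of Proposition~\ref{Prop_minimax} supplying the inf--max swap.'' You correctly identified that the quantitative form of Lemma~\ref{Lem_conti_0_QD} (not just its limit conclusion) is what is needed, that the classical-output hypothesis is consumed precisely in defining the outcome set $M$ and the error events $\mathcal{E}_m$ for the extended computational-basis measurement, and that $\bar\mu(\theta) \geq 2\tilde\delta$ (fixing the paper's $\mu$ vs.\ $\bar\mu$ typo) forces the inclusion $[(\mathcal{F}_\Theta, \NT, \bar\mu), \alpha] \subset [P_{QR}, \alpha/(2\tilde\delta)]$; the closing limit $\delta \searrow 0$ together with monotonicity of $\EQR([P_{QR},\cdot])$ gives continuity along the full net $\rho \to 0$, not merely the sequence $\tilde\delta^2$.
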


\section{Construction of optimal algorithms}\label{sec_opt}
\subsection{Classical distributional case}\label{subsec_opt_CD}

\begin{Lem}\label{Lem_algo_CD}
For any $n \in \mathbb{N}$, $\varepsilon >0$, $\alpha \in (0, \varepsilon)$, there is an algorithm $\pi \in \PDEn$ such that
\begin{equation*}
|\pi| \leq  n\ED([P_D, \varepsilon -\alpha]) + o(n).
\end{equation*}
This especially implies $D(\PDEn) \leq n\ED([P_D, \varepsilon -\alpha]) + o(n)$.

When $\varepsilon = 0$, for any $n \in \mathbb{N}$ and any $\alpha \in (0, 1)$, there is an algorithm $\pi \in [P_D^n, \alpha]$ such that
\begin{equation}\label{eq_Lem_algo_CD}
|\pi| \leq  n\ED([P_D, 0]) + o(n).
\end{equation}
This especially implies $D([P_D^n, \alpha]) \leq n\ED([P_D, 0]) + o(n)$.

\end{Lem}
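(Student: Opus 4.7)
I would construct $\pi$ by sequentially running fresh copies of an optimal single-instance algorithm, with an early-termination cutoff to convert the expected-complexity guarantee into a worst-case one. Specifically, for a small $\delta > 0$, pick $\pi^* \in [P_D, \varepsilon - \alpha]$ with $\E[\pi^*] \leq E + \delta$, where $E := \ED([P_D, \varepsilon - \alpha])$, and let $\sigma^2$ be its variance (finite since $|\pi^*| < \infty$ by the assumption in Section~\ref{sec_technical}). Then build $\pi_n$ by running fresh copies of $\pi^*$ on instances $1, 2, \dots, n$ in turn; if the cumulative query count ever exceeds $T_n := n(E + \delta) + \sigma\sqrt{n/\alpha}$, abort and output arbitrary values on any coordinates not yet finished.

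The error analysis is direct: the per-instance counts $X_1, \dots, X_n$ are i.i.d.\ with mean $\leq E + \delta$ and variance $\sigma^2$, so Chebyshev's inequality yields $\Pr\bigl(\sum_i X_i > T_n\bigr) \leq \sigma^2 n / (\sigma \sqrt{n/\alpha})^{2} = \alpha$. On the complementary event every coordinate is computed by $\pi^*$ and is correct up to error $\varepsilon - \alpha$; hence the per-coordinate error is at most $(\varepsilon - \alpha) + \alpha = \varepsilon$, placing $\pi_n \in \PDEn$ with $|\pi_n| \leq T_n$. For the $\varepsilon = 0$ case, the same construction applied to a zero-error $\pi^* \in [P_D, 0]$ (nonempty by assumption) makes the cutoff the sole failure mode, so the joint error is bounded directly by $\Pr(\text{cutoff}) \leq \alpha$ and $\pi_n \in [P_D^n, \alpha]$, with cost bound $T_n$ computed from the zero-error variance.

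To upgrade the per-$\delta$ bound $T_n = nE + n\delta + O(\sqrt{n})$ into $nE + o(n)$, apply a diagonal argument: pick $\delta(n) \to 0$ slowly enough that the variance $\sigma_{\delta(n)}$ of the corresponding $\pi^*_{\delta(n)}$ still satisfies $\sigma_{\delta(n)}\sqrt{n} = o(n)$. For instance, one can let $\delta(n)$ be the smallest $\delta$ for which some $\pi^*_\delta$ admits $\sigma_\delta \leq \sqrt{n}/\log n$; since each fixed $\delta > 0$ yields some $\pi^*_\delta$ with finite $\sigma_\delta$, this quantity is well-defined and tends to $0$. I expect this diagonal step to be the main technical subtlety, since the variance of near-optimal algorithms could in principle blow up rapidly as $\delta \to 0$; the redemption is that only $o(\sqrt{n})$ variance growth needs to be accommodated at each scale, so the diagonal can always be arranged.
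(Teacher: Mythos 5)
Your construction matches the paper's almost exactly: run a near-optimal single-instance algorithm $n$ times, impose a worst-case cutoff, and use Chebyshev to bound the probability of hitting the cutoff, which is charged to the coordinate-wise error budget. The one substantive difference is that you handle the fact that $\ED([P_D, \varepsilon-\alpha])$ is an infimum rather than a minimum: you take a $\delta$-near-optimal $\pi^*$ and run a diagonal argument sending $\delta(n)\to 0$ slowly enough that both $n\delta(n)$ and $\sigma_{\delta(n)}\sqrt{n}$ are $o(n)$. The paper simply writes ``take an optimal algorithm $\pi$'' and computes with $\E[\pi_\alpha]=\ED([P_D,\varepsilon-\alpha])$, implicitly assuming the infimum is attained (it flags the non-attainment issue in Lemma~\ref{Lem_nleq_GO} but not here). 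Your diagonal step, while a bit loosely phrased (``the smallest $\delta$ for which some $\pi^*_\delta$ admits $\sigma_\delta\le\sqrt n/\log n$'' should be an infimum over $\delta$, with a chosen $\pi^*_\delta$ attaining the variance bound within slack), is sound because for each fixed $\delta>0$ there exists some finite-variance $\pi^*_\delta$, so a monotone selection of $\delta(n)\downarrow 0$ with $\sigma_{\delta(n)}=o(\sqrt n)$ always exists. So you buy a genuinely more careful treatment of the infimum at the cost of a diagonalization; the Chebyshev cutoff, the choice of threshold $\Theta(\sqrt{n/\alpha})$ giving error spill $\le\alpha$, and the $\varepsilon=0$ variant are all the same as in the paper.
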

\begin{proof}
For any $\alpha \in (0, \varepsilon)$, take the algorithm $\pi_\alpha^n \in [P_D, \varepsilon - \alpha]^{n}$ which is obtained by running an optimal algorithm $\pi \in [P_D, \varepsilon - \alpha]$ for $n$ times.
Let $\tilde{\pi}^n_{(\alpha, k)}~(\forall~k > 0)$ be a algorithm by terminating the algorithm $\pi_\alpha^n$ when the number of oracle calls reaches $\E[\pi_\alpha^n] + k \sigma_n = n\E[\pi_\alpha] + k \sigma_n$ where $\sigma_n$ is the standard deviation of the number of oracle calls in $\pi^n_\alpha$.
This definition implies $|\pi^n_{(\alpha, k)}| \leq n\E[\pi_\alpha] + k \sigma_n$, and, by Chebyshev's inequality $\Pr(|\pi_{(\alpha, k)}^n - n \E[\pi_\alpha]| \geq k \sigma) \leq k^{-2}$,
$\pi^n_{(\alpha, k)}$ computes $\mathcal{F}_\Theta^{\otimes n}$ with coordinate-wise error $\leq \varepsilon -\alpha + k^{-2}$.
This means $\pi^n_\alpha \in [P_D, \varepsilon -\alpha+ k^{-2}]^{n}$,
and therefore, 
\begin{equation}
|\pi_{(\alpha, k)}^n| \leq n \ED([P_D, \varepsilon -\alpha]) + k \sigma_n
\end{equation}
for any $k > 0$.
Substituting $k = 1/\sqrt{\alpha}$ yields
\begin{equation}
 |\pi_{(\alpha, \alpha^{-1/2})}^n|\leq  n\ED([P_D, \varepsilon -\alpha]) + \frac{\sigma_n}{\sqrt{\alpha}}.
\end{equation}
Since the standard deviation $\sigma_n$ of $n$-i.i.d. random variables scales as $\Theta(\sqrt{n})$, we obtain the desired argument.

\par
Similar proof works when $\varepsilon = 0$. First take an optimal algorithm $\pi^n \in [P^n_D, 0] = [P_D, 0]^n$ similarly and use Chebyshev's inequality.
Then set $k = 1/\sqrt{\alpha}$. The remaining algorithm satisfies the inequality~\eqref{eq_Lem_algo_CD}.
\end{proof}

\subsection{Classical randomized case}\label{subsec_opt_CR}
\begin{Lem}\label{Lem_algo_CR}
For any $n \in \mathbb{N}$, $\varepsilon >0$, $\alpha \in (0, \varepsilon)$, there is an algorithm $\pi \in \PREn$ such that
\begin{equation*}
|\pi| \leq  n\ER([P_R, \varepsilon -\alpha]) + o(n).
\end{equation*}
This especially implies $R(\PREn) \leq n\ER([P_R, \varepsilon -\alpha]) + o(n)$.

When $\varepsilon = 0$, for any $n \in \mathbb{N}$ and any $\alpha \in (0, 1)$, there is an algorithm $\pi \in [P_R^n, \alpha]$ such that
\begin{equation}\label{eq_Lem_algo_CR}
|\pi| \leq  n\ER([P_R, 0]) + o(n).
\end{equation}
This especially implies $R([P_R^n, \alpha]) \leq n\ER([P_R, 0]) + o(n)$.
\end{Lem}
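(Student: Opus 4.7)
The plan is to mirror Lemma~\ref{Lem_algo_CD} line by line, with the modifications required by the fact that in the randomized model both the expectation and the variance of the number of oracle calls are distribution-dependent and must be controlled uniformly over $\mu \in \PT$.

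\textbf{Building block.} For a parameter $\delta>0$ to be tuned later (or $\delta=0$ if the infimum defining $\ER$ is attained, as is implicitly assumed in the proof of Lemma~\ref{Lem_algo_CD}), pick a near-optimal algorithm $\pi \in [P_R, \varepsilon-\alpha]$ with $\max_{\mu \in \PT}\E_\mu[\pi] \leq \ER([P_R, \varepsilon-\alpha])+\delta$; the existence of the matching worst-case bound is exactly what Proposition~\ref{Prop_minimax} provides. Let $\pi^n$ be the algorithm that runs $\pi$ independently on each of the $n$ instances, so that $\pi^n \in [P_R, \varepsilon-\alpha]^n$.

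\textbf{Uniform moment bounds.} For any product distribution $\mu^{\otimes n}=\mu_1\times\cdots\times\mu_n$, independence gives
\[
\E_{\mu^{\otimes n}}[\pi^n]=\sum_{i=1}^n \E_{\mu_i}[\pi] \leq n\bigl(\ER([P_R, \varepsilon-\alpha])+\delta\bigr),
\]
and since a single run uses at most $|\pi|$ oracle calls deterministically,
\[
\mathrm{Var}_{\mu^{\otimes n}}(\pi^n)=\sum_{i=1}^n \sigma^2(\pi,\mu_i) \leq n|\pi|^2 .
\]
Crucially, both bounds are uniform in $\mu^{\otimes n}$.

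\textbf{Truncation and Chebyshev.} Let $\tilde{\pi}^n$ be obtained from $\pi^n$ by halting with some arbitrary default output after $T := n\bigl(\ER([P_R, \varepsilon-\alpha])+\delta\bigr)+|\pi|\sqrt{n}/\sqrt{\alpha}$ oracle calls. Then $|\tilde{\pi}^n|\leq T$, and by Chebyshev's inequality together with the previous step,
\[
\Pr_{\mu^{\otimes n}}\bigl(\pi^n \geq T\bigr) \leq \Pr_{\mu^{\otimes n}}\bigl(\pi^n - \E_{\mu^{\otimes n}}[\pi^n] \geq |\pi|\sqrt{n}/\sqrt{\alpha}\bigr) \leq \alpha,
\]
uniformly in $\mu^{\otimes n}$. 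Since truncation is the only additional source of error, the coordinate-wise error of $\tilde{\pi}^n$ is at most $(\varepsilon-\alpha)+\alpha=\varepsilon$, so $\tilde{\pi}^n \in \PREn$. Finally, letting $\delta=\delta_n\to 0$ slowly as $n\to\infty$ absorbs the $n\delta$ term into $o(n)$, giving $|\tilde{\pi}^n| \leq n\ER([P_R, \varepsilon-\alpha])+o(n)$ and hence the claimed bound on $R(\PREn)$. The case $\varepsilon=0$ is treated identically, starting from a (near-)optimal $\pi \in [P_R, 0]$; the Chebyshev bound $\alpha$ now bounds the total error of $\tilde{\pi}^n \in [P_R^n, \alpha]$ rather than the coordinate-wise error.

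\textbf{Main obstacle.} The only genuine complication relative to the distributional case is ensuring that a \emph{single, data-independent} truncation threshold $T$ suffices against every product distribution. This forces us to use the worst-case expectation $\max_\mu \E_\mu[\pi]$ (delivered by Proposition~\ref{Prop_minimax}) in place of the fixed $\E_\mu[\pi_\alpha]$ used in Lemma~\ref{Lem_algo_CD}, and to bound $\sigma^2(\pi,\mu)$ by the distribution-free quantity $|\pi|^2$; once both are uniform in $\mu$, Chebyshev yields a single tail bound valid against every adversary, and the rest of the argument is identical to the distributional version.
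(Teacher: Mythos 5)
Your proof is correct and follows essentially the same approach as the paper: take a near-optimal algorithm for $[P_R,\varepsilon-\alpha]$, run it $n$ times independently, truncate at a data-independent threshold, and apply Chebyshev. The only substantive difference is that you bound the per-instance variance by the crude but distribution-free $|\pi|^2$, whereas the paper uses $\max_{\mu\in\PT}\sigma(\pi_\alpha,\mu)$ (finite by continuity and compactness, via Fact~\ref{Fact_conti_expe}); your bound is simpler and avoids needing that fact. One small misattribution: the existence of $\pi$ with $\max_\mu\E_\mu[\pi]\leq\ER([P_R,\varepsilon-\alpha])+\delta$ follows directly from the definition $\ER=\inf_\pi\max_\mu\E_\mu[\pi]$, not from Proposition~\ref{Prop_minimax}.
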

\begin{proof}
For any $\alpha \in (0, \varepsilon)$ and any $\delta > 0$, 
take $\pi_\alpha \in [P_R, \varepsilon - \alpha]$ such that for any $\mu$, $\E_\mu[\pi_\alpha] < \ER([P_R, \varepsilon - \alpha]) + \delta$ holds.
The algorithm $\pi_\alpha^{\otimes n} \in [P_R, \varepsilon - \alpha]^{n}$ created by running $\pi_\alpha$ for $n$ times repeatedly satisfies
\begin{equation*}
\E_{\mu^{\otimes n}}[\pi_\alpha^{\otimes n}]
= \sum_{i \leq n} \E_{\mu_i} [\pi_\alpha].
\end{equation*}
for any $\mu^{\otimes n} = \mu_1 \times \cdots \times \mu_n \in \PT$. By Chebyshev's inequality, for any $k > 0$,
\begin{equation}\label{Thm_eq_Che_random_GO}
\Pr_{\mu^{\otimes n}}(|\pi^n_\alpha| - \E_{\mu^{\otimes n}}[\pi^n_\alpha] \geq k \sigma_n(\pi^n_\alpha, \mu^{\otimes n}))
\leq \frac{1}{k^2}.
\end{equation}
Considering that the standard deviation $\sigma_n(\pi^n_\alpha, \mu^{\otimes n})$ is calculated as
\begin{equation*}
\sigma_n(\pi^n_\alpha, \mu^{\otimes n})
= \sqrt{\sum_{i \leq n} \sigma^2(\pi_\alpha, \mu_i)},
\end{equation*}
Chebyshev's inequality~\eqref{Thm_eq_Che_random_GO} further implies
\begin{equation*}
\Pr_{\mu^{\otimes n}}(|\pi^n_\alpha|  \geq n \max_\mu \E_{\mu}[\pi_\alpha] + k \max_\mu \sqrt{n}\sigma(\pi_\alpha, \mu))
\leq \frac{1}{k^2}.
\end{equation*}
(Note that both $\E_\mu[\pi_\alpha]$ and $\sigma(\pi_\alpha, \mu)$ are continuous on $\mu \in \PT$ from Fact~\ref{Fact_conti_expe}.)
Next we define $\pi^n_{(\alpha, k)}$ as the algorithm $\pi^n_\alpha$ with the additional condition that it must be terminated when the number of query calls reaches
$n \max_\mu \E_{\mu}[\pi_\alpha] + k \max_\mu \sqrt{n}\sigma(\pi_\alpha, \mu)$.
For any $\theta^0_1, \ldots, \theta^0_n \in \Theta$, take $\mu^{\otimes n}$ as $\mu_i(\theta_i^0) =1~(1 \leq i \leq n)$, and Chebyshev's inequality shows
the error probability of the algorithm $\pi^n_{(\alpha, k)}$ on the parameter $(\theta^0_1, \ldots, \theta^0_n)$ is at most $\varepsilon - \alpha + 1/k^2$.
That is, $\pi^n_{(\alpha, k)}$ is an element of $[P_R, \varepsilon - \alpha +1/k^2]^{n}$. Therefore, by substituting $k = 1/\sqrt{\alpha}$,
we get $\pi^n_{(\alpha, 1/\sqrt{\alpha})} \in [P_R, \varepsilon]^n$ and 
\begin{align*}
|\pi^n_{(\alpha, 1/\sqrt{\alpha})}| &\leq 
n \max_\mu \E_{\mu}[\pi_\alpha] + \frac{1}{\sqrt{\alpha}} \max_\mu \sqrt{n}\sigma(\pi_\alpha, \mu)\\
&\leq n (\ER([P_R, \varepsilon -\alpha]) + \delta) + \frac{1}{\sqrt{\alpha}} \max_\mu \sqrt{n}\sigma(\pi_\alpha, \mu).
\end{align*}
This shows the desired statement. In the case of $\varepsilon = 0$, apply a similar argument given in Lemma~\ref{Lem_algo_CD}.
\end{proof}

\subsection{Quantum cases}\label{subsec_opt_QD}
Modifying Lemma~\ref{Lem_algo_QD} and Lemma~\ref{Lem_algo_QR} straightforwardly, we obtain the following lemmas for quantum scenarios.

\paragraph{Quantum distribution case}
\begin{Lem}\label{Lem_algo_QD}
For any $n \in \mathbb{N}$, $\varepsilon >0$, $\alpha \in (0, \varepsilon)$, there is an algorithm $\pi \in \PQDEn$ such that
\begin{equation*}
|\pi| \leq  n\EQD([P_{QD}, \varepsilon -\alpha]) + o(n).
\end{equation*}
This especially implies $QD(\PDEn) \leq n\EQD([P_D, \varepsilon -\alpha]) + o(n)$.

When $\varepsilon = 0$, for any $n \in \mathbb{N}$ and any $\alpha \in (0, 1)$, there is an algorithm $\pi \in [P_{QD}^n, \alpha]$ such that
\begin{equation}
|\pi| \leq  n\EQD([P_{QD}, 0]) + o(n).
\end{equation}
This especially implies $QD([P_{QD}^n, \alpha]) \leq n\EQD([P_{QD}, 0]) + o(n)$.
\end{Lem}

\paragraph{Quantum randomized case}
\begin{Lem}\label{Lem_algo_QR}
For any $n \in \mathbb{N}$, $\varepsilon >0$, $\alpha \in (0, \varepsilon)$, there is an algorithm $\pi \in \PQREn$ such that
\begin{equation*}
|\pi| \leq  n\EQR([P_{QR}, \varepsilon -\alpha]) + o(n).
\end{equation*}
This especially implies $QR(\PQREn) \leq n\EQR([P_{QR}, \varepsilon -\alpha]) + o(n)$.
 
When $\varepsilon = 0$, for any $n \in \mathbb{N}$ and any $\alpha \in (0, 1)$, there is an algorithm $\pi \in [P_{QR}^n, \alpha]$ such that
\begin{equation}
|\pi| \leq  n\EQR([P_{QR}, 0]) + o(n).
\end{equation}
This especially implies $QR([P_{QR}^n, \alpha]) \leq n\EQR([P_{QR}, 0]) + o(n)$.
\end{Lem}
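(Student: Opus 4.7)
The plan is to transplant the classical randomized argument of Lemma~\ref{Lem_algo_CR} into the quantum setting, using that our quantum model with $n$ oracles (Figure~\ref{Fig_quantum_comp-n}) still selects which $\mathcal{N}_{\theta_i}$ to query based on classical randomness $R$ together with the classical measurement outcomes $M_1,\dots,M_m$. Because of this classical adaptivity, ``run $\pi_\alpha$ sequentially $n$ times on the $n$ coordinates'' is a well-defined quantum algorithm whose total query count is an ordinary classical random variable.

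First, for any $\delta>0$ I would pick a quantum algorithm $\pi_\alpha\in[P_{QR},\varepsilon-\alpha]$ satisfying $\max_{\mu}\E_\mu[\pi_\alpha]<\EQR([P_{QR},\varepsilon-\alpha])+\delta$, which is available by the definition of $\EQR$ (and equals $\EQR_D$ via Proposition~\ref{Prop_minimax}). I then form $\pi_\alpha^{n}$ by applying $\pi_\alpha$ in turn to coordinates $1,\dots,n$; since the quantum registers used for different coordinates can be kept disjoint and the coordinate selection is classical, the total query count decomposes additively: for any product distribution $\mu^{\otimes n}=\mu_1\times\cdots\times\mu_n$ we have $\E_{\mu^{\otimes n}}[\pi_\alpha^{n}]=\sum_{i\le n}\E_{\mu_i}[\pi_\alpha]$ and $\sigma_n^2(\pi_\alpha^{n},\mu^{\otimes n})=\sum_{i\le n}\sigma^2(\pi_\alpha,\mu_i)$, both of which are continuous in each $\mu_i$ by Fact~\ref{Fact_conti_expe}.

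Next, I define the truncated algorithm $\pi_{(\alpha,k)}^n$ to be $\pi_\alpha^n$ forced to halt (outputting whatever has been produced so far, or an arbitrary symbol) once the number of oracle accesses reaches $n\max_\mu\E_\mu[\pi_\alpha]+k\sqrt n\max_\mu\sigma(\pi_\alpha,\mu)$. By Chebyshev's inequality applied to the classical random variable $|\pi_\alpha^n|$, the probability of truncation is at most $1/k^2$ for any product distribution; since a single fixed input $(\theta_1^0,\dots,\theta_n^0)$ corresponds to the product of Dirac measures, the coordinate-wise error of $\pi_{(\alpha,k)}^n$ is at most $\varepsilon-\alpha+1/k^2$ for every input. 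Choosing $k=1/\sqrt\alpha$ gives $\pi_{(\alpha,1/\sqrt\alpha)}^n\in[P_{QR},\varepsilon]^n$ and the bound $|\pi_{(\alpha,1/\sqrt\alpha)}^n|\le n(\EQR([P_{QR},\varepsilon-\alpha])+\delta)+\sqrt{n/\alpha}\max_\mu\sigma(\pi_\alpha,\mu)$, which is $n\EQR([P_{QR},\varepsilon-\alpha])+o(n)$ after sending $\delta\to0$. The $\varepsilon=0$ case is handled identically by starting from an optimal $\pi\in[P_{QR},0]$ and using the same Chebyshev truncation with $k=1/\sqrt\alpha$, yielding an algorithm in $[P_{QR}^n,\alpha]$ of length $n\EQR([P_{QR},0])+o(n)$.

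The only mildly delicate point, and the place I would spend most care, is verifying that truncating a quantum algorithm in the middle and declaring an arbitrary output on the truncated branch really keeps the coordinate-wise error bound intact; this is where the classical adaptivity is essential, because on the non-truncated branch the already-completed coordinates are independent of later ones, so their marginal error is governed solely by $\pi_\alpha$, while the truncated coordinates contribute at most $1/k^2$ to the error of any single coordinate. Everything else—additivity of expectations and variances across coordinates, continuity in $\mu$, and the Chebyshev bound—carries over from Lemma~\ref{Lem_algo_CR} without modification.
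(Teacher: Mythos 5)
Your proposal is correct and is essentially the proof the paper intends: the paper's own proof of Lemma~\ref{Lem_algo_QR} is simply the instruction ``Modify Lemma~\ref{Lem_algo_CR} straightforwardly,'' and you have carried out exactly that modification, with the key observation (which the paper leaves implicit) that the model of Figure~\ref{Fig_quantum_comp-n} makes the query count a classical random variable so that the Chebyshev truncation goes through unchanged.
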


\section{Main results}\label{sec_main-results}
Here our main results, Theorem~\ref{Thm_unified_amortized}, \ref{Thm_unified_amortized_Theta} and \ref{Thm_unified_amortized_Theta}, are proved applying the statements shown in the previous sections. Theorem~\ref{Thm_unified_amortized} and \ref{Thm_unified_amortized_Theta} deal with the direct sum theorems \emph{with the limit}, while Theorem~\ref{Thm_direct_sum_expec_0} deals with the theorems \emph{without limit}. Additionally, several propositions are proved in this section while completing the proofs for main results. These propositions may be of independent interest.

In the below we first focus on the proof of Theorem~\ref{Thm_unified_amortized}.
\begin{Thm1}
For any complexity scenario $C \in \{D, R, QD, QR\}$, any $\varepsilon > 0$, and any problem $P_C$,
\begin{equation*}
\lim_{n \to \infty} \frac{C([P_C, \varepsilon]^n)}{n}
=
\EC([P_C, \varepsilon]).
\end{equation*}
\end{Thm1}
\begin{proof}
As proved in Section~\ref{sec_additivity}, 
\begin{equation*}
\EC([P_C, \varepsilon])
\leq
\lim_{n \to \infty} \frac{C([P_C, \varepsilon]^n)}{n}
\end{equation*}
holds. Also as in Section~\ref{sec_opt}, 
\begin{equation*}
\lim_{n \to \infty} \frac{C([P_C, \varepsilon]^n)}{n}
\leq
\EC([P_C, \varepsilon - \alpha])
\end{equation*}
holds for any $\alpha > 0$. Together with the continuity in Section~\ref{sec_continuity}, 
\begin{equation*}
\EC([P_C, \varepsilon])
\leq
\lim_{n \to \infty} \frac{C([P_C, \varepsilon]^n)}{n}
\leq
\lim_{\alpha \to 0}\EC([P_C, \varepsilon - \alpha])
= \EC([P_C, \varepsilon]).
\end{equation*}
This completes the proof.
\end{proof}

The following proposition gives the additivity that works for any complexity scenario.
\begin{Prop}\label{Prop_unified_additivity}
For any complexity scenario $C \in \{D, R, QD, QR\}$, for any $\varepsilon \geq 0$, for any problem $P_C$,
\begin{equation*}
\EC([P_C, \varepsilon]^n) = n\EC([P_C, \varepsilon]).
\end{equation*}
\end{Prop}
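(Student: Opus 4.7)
The plan is essentially an unification step: the proposition is the common statement of the four case-specific additivity results already established in Section~\ref{sec_additivity}, namely Proposition~\ref{Prop_equal_GO} for $C=D$, Proposition~\ref{Prop_direct-sum_randomized_GO} for $C=R$, Proposition~\ref{Prop_equal_QD} for $C=QD$, and Proposition~\ref{Prop_equal_QR} for $C=QR$. So my first step is simply to split into the four scenarios and invoke the corresponding proposition in each case, observing that the definitions of $\EC(\cdot)$ and $[P_C, \varepsilon]^n$ coincide with those used in Section~\ref{sec_additivity}.

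If one wants to give a self-contained argument rather than a dispatch, the plan is to repeat the two-directional strategy outlined in Section~\ref{sec_additivity}. For the upper bound $\EC([P_C,\varepsilon]^n)\le n\,\EC([P_C,\varepsilon])$, I would take an almost-optimal algorithm $\pi$ for $[P_C,\varepsilon]$ (exactly optimal when the infimum is attained, otherwise within $\delta$ for arbitrary $\delta>0$) and run it independently on each of the $n$ coordinates; linearity of expectation together with the coordinate-wise error guarantee places the composite algorithm in $[P_C,\varepsilon]^n$ with expected cost $n\,\EC([P_C,\varepsilon])+n\delta$, and then I let $\delta\to 0$. For the lower bound, I would take an almost-optimal algorithm $\pi_n$ for $[P_C,\varepsilon]^n$, pick $i\in[n]$ uniformly, and simulate $\pi_n$ on a true oracle in coordinate $i$ and on privately sampled dummy parameters in the other $n-1$ coordinates; the resulting single-instance algorithm lies in $[P_C,\varepsilon]$ and its expected cost equals $\E[\pi_n]/n$.

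The main technical subtlety, and the step I would be most careful about, is the randomized scenarios $C\in\{R,QR\}$. There, the expected cost of a single algorithm has to be controlled uniformly over all input distributions $\mu\in\mathcal{P}(\Theta)$, and simply averaging the dummy parameters does not directly interchange the $\inf_\pi$ and $\max_\mu$. This is resolved exactly as in Lemma~\ref{Lem_nE_randomized_GO} by working with the dual quantity $\EC_D([P_C,\varepsilon])$ (fix $\mu$ first, then take $\inf_\pi$) and invoking the minimax identity $\EC([P_C,\varepsilon])=\EC_D([P_C,\varepsilon])$ from Proposition~\ref{Prop_minimax}; one then applies the sampling argument to each of $n$ separately chosen distributions $\mu_1,\ldots,\mu_n$ and takes the max over them. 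The quantum cases are handled identically once one observes, as in Section~\ref{subsec_add_QD}, that the private sampling of dummy parameters is realized by the classical randomness register $R$ of the model in Section~\ref{subsec_quantum-scenarios}, so the simulation remains a legal algorithm in the prescribed model. Combining the two directions in each scenario yields the claimed equality.
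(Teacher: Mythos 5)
Your proposal is correct and matches the paper's approach exactly: the paper's proof of Proposition~\ref{Prop_unified_additivity} is simply a dispatch to the four case-specific additivity results of Section~\ref{sec_additivity} (Propositions~\ref{Prop_equal_GO}, \ref{Prop_direct-sum_randomized_GO}, \ref{Prop_equal_QD}, \ref{Prop_equal_QR}), which is precisely your first paragraph. Your elaboration of the underlying two-directional strategy, including the minimax step for the randomized scenarios and the use of the classical randomness register in the quantum model, faithfully reproduces the reasoning of those lemmas.
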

\begin{proof}
This is immediate from results in Section~\ref{sec_additivity}.
\end{proof}

Proposition~\ref{Prop_direct_sum_expec} deals with direct sum theorems when the overall error is small.
\begin{Prop}\label{Prop_direct_sum_expec}
Suppose $|\Theta| < \infty$.
\begin{enumerate}[(i)]
\item In any distributional problem $P_C$~($C \in \{D, QD\}$) 
with non-trivial distribution\footnote{A distribution $\mu$ is non-trivial if and only if $\max_{\theta \in \Theta} \mu(\theta) < 1$} $\mu$,  
\begin{equation*}
\EC([P_C^n, \varepsilon]) = \Theta(n \cdot \EC([P_C, \varepsilon/n]))
\end{equation*}
for any $n \in \mathbb{N}$ and any positive
$ \varepsilon < \min\{99/100, \mu^2_\textsf{min}\}$ where the value $\mu_\textsf{min}$ is defined as $\mu_\textsf{min} := \min_{\theta \in \textrm{supp}\mu} \mu(\theta)$.

\item In any randomized problem $P_C$~($C \in \{R, QR\}$), 
\begin{equation*}
\EC([P_C^n, \varepsilon]) = \Theta(n \cdot \EC([P_C, \varepsilon/n]))
\end{equation*}
for any $n \in \mathbb{N}$ and any $\varepsilon \in (0, 1/128|\Theta|^2)$.
\end{enumerate}
\end{Prop}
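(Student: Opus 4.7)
My plan is to combine the additivity of expected complexity (Proposition~\ref{Prop_unified_additivity}) with the continuity-at-zero lemmas from Section~\ref{sec_continuity}. The central observation is the sandwich of inclusions
\begin{equation*}
[P_C, \varepsilon/n]^n \;\subset\; [P_C^n, \varepsilon] \;\subset\; [P_C, \varepsilon]^n,
\end{equation*}
where the left inclusion is a union bound (if each of $n$ coordinates has error $\leq \varepsilon/n$, then the joint error is $\leq \varepsilon$), and the right inclusion holds because the event ``coordinate $i$ is incorrect'' is a subset of ``at least one coordinate is incorrect,'' so any algorithm with joint error $\leq \varepsilon$ automatically has coordinate-wise error $\leq \varepsilon$ on every coordinate.

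Taking $\EC$ of the left inclusion and applying Proposition~\ref{Prop_unified_additivity} yields the upper bound $\EC([P_C^n, \varepsilon]) \leq \EC([P_C, \varepsilon/n]^n) = n\cdot\EC([P_C, \varepsilon/n])$. Taking $\EC$ of the right inclusion and applying additivity yields $\EC([P_C^n, \varepsilon]) \geq n\cdot\EC([P_C, \varepsilon])$. Hence the entire burden of the lower bound reduces to proving $\EC([P_C, \varepsilon]) \geq c\cdot\EC([P_C, \varepsilon/n])$ for some absolute constant $c>0$. Since $\alpha \mapsto \EC([P_C, \alpha])$ is non-increasing and $\varepsilon/n \geq 0$, we have $\EC([P_C, \varepsilon/n]) \leq \EC([P_C, 0])$, so it suffices to establish $\EC([P_C, \varepsilon]) \geq c\cdot\EC([P_C, 0])$.

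For the distributional scenarios $C \in \{D, QD\}$, I would apply Lemma~\ref{Lem_conti_0_CD} (resp.~Lemma~\ref{Lem_conti_0_QD}) with $\alpha = 0$, giving $(1-\sqrt{\varepsilon})\,\EC([P_C, 0]) \leq \EC([P_C, \varepsilon])$ under the hypothesis $\sqrt{\varepsilon} < \mu_{\min}$, i.e., $\varepsilon < \mu_{\min}^2$. Non-triviality of $\mu$ forces $\mu_{\min} \leq 1/2$, so the stated condition $\varepsilon < \min\{99/100, \mu_{\min}^2\}$ implies $1 - \sqrt{\varepsilon} \geq 1/2$ and one can take $c = 1/2$. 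For the randomized scenarios $C \in \{R, QR\}$, I would apply Lemma~\ref{Lem_conti_0_CR} (resp.~Lemma~\ref{Lem_conti_0_QR}) with $\delta = 1/2$ and $\alpha = 0$, obtaining $(1/2)\,\EC([P_C, 0]) \leq \EC([P_C, 1/(64|\Theta|^2)])$. Since $\varepsilon < 1/(128|\Theta|^2) < 1/(64|\Theta|^2)$, monotonicity yields $\EC([P_C, \varepsilon]) \geq \EC([P_C, 1/(64|\Theta|^2)]) \geq (1/2)\,\EC([P_C, 0])$, again producing $c = 1/2$.

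The only real obstacle is parameter bookkeeping — verifying that the quantitative ranges $\varepsilon < \mu_{\min}^2$ and $\varepsilon < 1/(128|\Theta|^2)$ exactly match the hypotheses of the continuity lemmas when instantiated with $\delta = 1/2$ and $\alpha = 0$. Beyond this routine check, the proof is entirely a two-step application of the already-established additivity and continuity machinery, with the sandwich of inclusions as the only conceptual ingredient.
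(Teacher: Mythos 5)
The sandwich of inclusions and the reduction to the continuity lemmas is exactly the structure the paper uses, and your upper bound is verbatim the paper's. The gap is in the lower bound: by reducing to proving $\EC([P_C, \varepsilon]) \geq c\cdot\EC([P_C, 0])$ and then instantiating Lemma~\ref{Lem_conti_0_CD} (resp.\ Lemma~\ref{Lem_conti_0_CR}) with $\alpha = 0$, you have smuggled in the assumption $[P_C, 0] \neq \emptyset$. That assumption is \emph{not} a hypothesis of Proposition~\ref{Prop_direct_sum_expec} (it appears only in Theorem~\ref{Thm_direct_sum_expec_0}), and the paper explicitly allows $[P_C, 0] = \emptyset$ in its technical assumptions of Section~\ref{sec_technical}; the $\alpha = 0$ case of each continuity lemma carries a footnote noting that $[P_C, 0] \neq \emptyset$ must then be assumed. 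If $[P_C, 0] = \emptyset$ your intermediate target $\EC([P_C, \varepsilon]) \geq c\cdot\EC([P_C, 0])$ is unprovable (the right side is $+\infty$), yet the proposition remains true.

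The fix is precisely what the paper does: keep $\alpha$ strictly positive in the continuity lemma and choose it so that the error exponent on the right-hand side becomes $\varepsilon/n$ directly, bypassing $\EC([P_C, 0])$ entirely. Concretely, for $C \in \{D, QD\}$ take $\alpha = \varepsilon^{3/2}/n$ in Lemma~\ref{Lem_conti_0_CD}, so $\alpha/\sqrt{\varepsilon} = \varepsilon/n$ and the lemma gives $\ED([P_D, \varepsilon^{3/2}/n]) \leq \ED([P_D, \varepsilon]) + \sqrt{\varepsilon}\,\ED([P_D, \varepsilon/n])$; monotonicity in the error ($\varepsilon^{3/2}/n \leq \varepsilon/n$) then yields $(1-\sqrt{\varepsilon})\,\ED([P_D, \varepsilon/n]) \leq \ED([P_D, \varepsilon])$ with no appeal to $[P_D, 0]$. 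For $C \in \{R, QR\}$ take $\delta = 1/2$ and $\alpha = \varepsilon/(4n|\Theta|)$ in Lemma~\ref{Lem_conti_0_CR}, giving $\tfrac12\,\ER([P_R, \varepsilon/n]) \leq \ER([P_R, 1/64]) \leq \ER([P_R, \varepsilon])$; the stated range $\varepsilon < 1/(128|\Theta|^2)$ is exactly what is needed to put $\alpha$ in the admissible interval $[0, (\delta/4|\Theta|)^3)$. Your reduction ``$\EC([P_C, \varepsilon/n]) \leq \EC([P_C, 0])$, so it suffices to compare to zero error'' is a valid implication, but it discards information and lands you on a harder (and, without the extra hypothesis, false-in-general) subgoal; the positive-$\alpha$ instantiation keeps the comparison at error $\varepsilon/n$ where it belongs.
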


\begin{proof}
We first show
$\EC([P_C^n, \varepsilon]) = \Omega(n \cdot \EC([P_C, \varepsilon/n]))$.
For any complexity scenario, by Proposition~\ref{Prop_unified_additivity},
\begin{equation}\label{Thm_direct_expec_eq_1}
n \EC([P_C, \varepsilon]) \leq \EC([P_C, \varepsilon]^n) \leq \EC([P_C^n, \varepsilon])
\end{equation}
holds. For distributional problems, we further obtain by substituting $\alpha = \varepsilon^{3/2}/n$ in Lemma~\ref{Lem_conti_0_CD} 
\begin{equation*}
(1 - \sqrt{\varepsilon})\EC([P_C, \varepsilon/n])
\leq \EC([P_C, \varepsilon])
\end{equation*}
for any $n \in \mathbb{N}$ and any positive
$ \varepsilon < \min\{99/100, \mu^2_\textsf{min}\}$.  This implies
$\EC([P_C, \varepsilon]) = \Omega(n\cdot\EC([P_C, \varepsilon/n]))$ together with
the inequality~\eqref{Thm_direct_expec_eq_1}.
On the other hand, in case of randomized problems, we obtain
by Lemma~\ref{Lem_conti_0_CR}
\begin{equation*}
\left(1 - \frac{3\delta}{4-2\delta}\right)\EC([P_C, 2\alpha/\delta])
\leq \EC([P_C, \delta^2/16])
\end{equation*}
which is simplified to,
by substituting $\delta = 1/2$ and $\alpha = \varepsilon/4n|\Theta|$,
\begin{equation*}
\frac{1}{2}\cdot\EC([P_C, \varepsilon/n])
\leq \EC([P_C, 1/64]) \leq 
\EC([P_C, \varepsilon])
\end{equation*}
for any $\varepsilon < 1/128|\Theta|^2$\footnote{This condition comes from $4 \alpha = \varepsilon/4n|\Theta| < 4 (\delta/4|\Theta|)^3$.}. Together with the inequality~\eqref{Thm_direct_expec_eq_1}, 
we have 
$\EC([P^n_C, \varepsilon]) = \Omega(n \cdot \EC([P_C, \varepsilon/n]))$.
\par
To show the other inequality:
$\EC([P_C^n, \varepsilon]) = O(n \cdot \EC([P_C, \varepsilon/n]))$, observe that $ [P_C, \varepsilon/n]^n$ is contained in$ [P_C^n, \varepsilon]$,  and therefore
\begin{equation*}
\EC([P_C^n, \varepsilon]) 
\leq
\EC([P_C, \varepsilon/n]^n) 
= n \EC([P_C, \varepsilon/n]).
\end{equation*}
This completes proof.
\end{proof}

In Theorem~\ref{Thm_direct_sum_expec_0}, we show direct sum theorems when the overall error is small in terms of the expected query/oracle complexity.
\begin{Thm3}
Suppose $|\Theta| < \infty$ and $[P_C, 0] \neq \emptyset$.
\begin{enumerate}[(i)]
\item If $P_C$ is a  distributional problem~(i.e., $C \in \{D, QD\}$)
with non-trivial distribution $\mu$,
\begin{equation*}
\EC([P_C^n, \varepsilon]) = \Theta(n \cdot \EC([P_C, 0]))
\end{equation*}
for any $n \in \mathbb{N}$ and any positive
$ \varepsilon < \min\{99/100, \mu^2_\textsf{min}\}$ where  the value $\mu_\textsf{min}$ is defined as $\mu_\textsf{min} := \min_{\theta \in \textrm{supp}\mu} \mu(\theta)$.

\item If $P_C$ is a randomized problem~(i.e., $C \in \{R, QR\}$), 
\begin{equation*}
\EC([P_C^n, \varepsilon]) = \Theta(n \cdot \EC([P_C, 0]))
\end{equation*}
for any $n \in \mathbb{N}$ and any $\varepsilon \in (0, 1/128|\Theta|^2)$.
\end{enumerate}
\end{Thm3}

\begin{proof}
Take $\alpha = 0 $ in Lemma~\ref{Lem_conti_0_CD} or Lemma~\ref{Lem_conti_0_CR} and the rest is shown in the same manner as in Proposition~\ref{Prop_direct_sum_expec}.
\end{proof}

In contrast to  Theorem~\ref{Thm_direct_sum_expec_0}, we below show direct sum theorems when the overall error is small in terms of the worst-case query/oracle complexity.
\begin{Thm2}
Suppose $|\Theta| < \infty$ and $[P_C, 0] \neq \emptyset$.
Then for any complexity scenario $C \in \{D, R, QD, QR\}$, for any problem $P_C$,
\begin{equation*}
\lim_{n \to \infty} \frac{C([P^n_C, \varepsilon])}{n}
=
\Theta\left(\EC([P_C, 0])\right).
\end{equation*}
for any positive $\varepsilon < 1/128|\Theta|^2$ if $C \in \{R, QR\}$ and any positive $\varepsilon < \mu^2_\mathsf{min}$ if $C \in \{D, QD\}$.
\end{Thm2}
\begin{proof}
To show $\lim_{n \to \infty} \frac{C([P^n_C, \varepsilon])}{n} = \Omega(\EC([P_C, 0]))$,  use $\EC(P_C^n, \varepsilon) \leq C([P_C^n, \varepsilon])$ and Theorem~\ref{Thm_direct_sum_expec_0}.
To show $\lim_{n \to \infty} \frac{C([P^n_C, \varepsilon])}{n} = O(\EC([P_C, 0]))$, use the results: $C([P_C^n, \alpha]) \leq n\EC([P_C, 0]) + o(n)$ proved in Section~\ref{sec_opt}.
\end{proof}

\section*{Acknowledgement}
The author would like to thank Shun Watanabe for valuable and insightful discussions during three online meetings over 10 hours in total, without whom this work could not have been completed.
The author would also like to thank his supervisor Fran\c{c}ois Le Gall for his kindness and valuable comments regarding an earlier draft of this paper.
The author would also like to thank his friend Ziyu Liu for his friendship and pointing out the reference~\cite{Rud91}.
The author was partially supported by MEXT Q-LEAP grants No. JPMXS0120319794.

\bibliography{/Users/surugadaiki/Dropbox/Citations/mathematics_D, /Users/surugadaiki/Dropbox/Citations/computational_D, /Users/surugadaiki/Dropbox/Citations/query_D, /Users/surugadaiki/Dropbox/Citations/quant_info_D, /Users/surugadaiki/Dropbox/Citations/comm_comp_D, /Users/surugadaiki//Dropbox/Citations/books_D}
\bibliographystyle{unsrt}
\appendix
\section{Supplemental materials}

\begin{Prop}\label{Prop_appendix_cont}
For a function $f: \Bset^n \to \Bset$, let $\varepsilon < 2^{-n}$. Then $R([f, \varepsilon]) = R([f, 0])$.
\end{Prop}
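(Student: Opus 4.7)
The inequality $R([f,\varepsilon])\leq R([f,0])$ is immediate from $[f,0]\subset[f,\varepsilon]$, so the task reduces to proving $R([f,0])\leq R([f,\varepsilon])$. The plan is to take an optimal randomized algorithm for $[f,\varepsilon]$ and extract a deterministic zero-error algorithm from its support, using a simple union/averaging argument made possible by the small error $\varepsilon<2^{-n}$.

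More concretely, let $\pi$ be a randomized query algorithm computing $f$ with worst-case error at most $\varepsilon$ and using $R([f,\varepsilon])$ queries. By the standard identification of a randomized algorithm with a distribution $\mu$ over deterministic algorithms (each using at most $R([f,\varepsilon])$ queries), the worst-case error guarantee becomes
\begin{equation*}
\Pr_{A\sim\mu}[A(x)\neq f(x)]\leq\varepsilon\quad\text{for every }x\in\{0,1\}^n.
\end{equation*}
Summing over all $x$ and swapping the order of summation gives
\begin{equation*}
\E_{A\sim\mu}\bigl[\,|\{x\in\{0,1\}^n:A(x)\neq f(x)\}|\,\bigr]=\sum_{x}\Pr_{A\sim\mu}[A(x)\neq f(x)]\leq 2^n\varepsilon<1,
\end{equation*}
where the final strict inequality uses the hypothesis $\varepsilon<2^{-n}$.

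Since the random variable $|\{x:A(x)\neq f(x)\}|$ is a nonnegative integer with expectation strictly less than $1$, there must exist some deterministic algorithm $A_0$ in the support of $\mu$ for which this quantity equals $0$. That $A_0$ computes $f$ correctly on every input using at most $R([f,\varepsilon])$ queries, so $R([f,0])\leq R([f,\varepsilon])$, completing the proof. There is no real obstacle here; the only subtlety worth emphasizing is that the inequality must be strict ($\varepsilon<2^{-n}$ rather than $\leq$), so that an integer-valued expectation below $1$ forces some realization to vanish.
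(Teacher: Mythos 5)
Your proof is correct and essentially identical to the paper's: both arguments sum the per-input error bound $\Pr[\text{mistake on }x]\leq\varepsilon$ over all $2^n$ inputs to conclude that the probability of making any mistake (equivalently, the expected number of mistakes) is strictly less than $1$, then fix a realization of the randomness that makes no mistakes. The paper phrases this as a union bound over "bad" random strings while you phrase it as a first-moment bound on the count of errors, but these are the same calculation.
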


\begin{proof}
Let $\pi \in [f, \varepsilon]$ be an optimal algorithm: $|\pi| = R([f, \varepsilon])$ and denote the set of randomness used in $\pi$ by $R$.
The set of randomness that may make mistake on some input $x \in \Bset^n$ is then defined as
\begin{equation*}
R_\mathsf{wrong}
:= \{r \in R \mid \exists x \st \pi_r(x) \neq f(x)\}
\end{equation*}
where $\pi_r(x)$ denotes the output of the algorithm $\pi$ when the input is $x \in \Bset^n$ and the randomness is $r \in R$.
\par
Since $\pi$ has the worst-case error $\leq \varepsilon$, for any $x \in \Bset^n$, 
\begin{equation*}
\Pr_R(\{r \in R \mid \pi_r(x) \neq f(x)\}) < 2^{-n}.
\end{equation*}
By summing up all over $x \in \Bset^n$, this leads to 
\begin{equation*}
\Pr_R(R_\mathsf{wrong}) \leq \sum_{x \in \Bset^n} \Pr_R(\{r \in R \mid \pi_r(x) \neq f(x)\}) < 1.
\end{equation*}
This means that there exists $r_\mathsf{good} \in R\setminus R_\mathsf{wrong}$, which satisfies
$\pi_{r_\mathsf{good}}(x) = f(x)$ for any $x \in \Bset^n$.
Fixing the randomness to $r_\mathsf{good}$, the deterministic algorithm $\pi_{r_\mathsf{good}}$ always output the correct value.
This means $R([f,0]) \leq |\pi_{r_\mathsf{good}}|$.
Since the new algorithm $\pi_{r_\mathsf{good}}$ must have a smaller (or at most equal) complexity than $|\pi|$, 
we also observe $|\pi_{r_\mathsf{good}}| \leq R([f, \varepsilon])$.
Together with the trivial relation $R([f, \varepsilon]) \leq R([f, 0])$, these arguments shows the desired statement.
\end{proof}

\section{Minimax theorem for algorithms}
Here we prove a minimax theorem for oracle algorithms. Since both of the classical and quantum cases are proved in the same manner, we focus on the classical randomized case.

\begin{Prop}\label{Prop_minimax}

Let $\tilde{\mathcal{U}} \subset \mathcal{P}(\Theta)$ is a non-empty, convex and compact subset, and $F: \FE \to \mathbb{R}$ be a function satisfying the following:

\begin{itemize}
\item For any finite distribution $\nu_A$ on $\FE$ and any $\mu \on \TU$,  $F(\pi(\nu_A), \mu) \leq \E_{\pi \sim \nu_A}[F(\pi, \mu)] $ where $\pi(\nu_A)$ is a randomized algorithm according to $\nu_A$.
\item For any finite distribution $\nu_B$ on $\tilde{\mathcal{U}}$ and any $\pi \in \FE$, $\E_{\nu_B}[F(\pi, \mu)] \leq F(\pi, \bar{\mu})$ where $\bar{\mu} := \E_{\nu_B}[\mu]$.
\item $F(\pi, \mu)$ is continuous with respect to $\mu \in \tilde{\mathcal{U}}$.
\end{itemize}
Then
\begin{equation*}
\inf_{\pi \in \FE} \max_{\mu \in \tilde{\mathcal{U}}}F(\pi, \mu)
=
 \max_{\mu \in \tilde{\mathcal{U}}}\inf_{\pi \in \FE}F(\pi, \mu)
\end{equation*}
\end{Prop}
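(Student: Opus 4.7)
The easy direction $\max_\mu \inf_\pi F(\pi, \mu) \leq \inf_\pi \max_\mu F(\pi, \mu)$ is weak duality, obtained by noting that the right-hand side is independent of $\mu$ and dominates $\inf_\pi F(\pi, \mu)$ for every $\mu$. For the converse I would argue by contradiction in a Sion-style fashion, using the two mixing hypotheses as bridges between pure and mixed strategies over $\FE$.

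Suppose $\alpha := \max_\mu \inf_\pi F(\pi, \mu) < \inf_\pi \max_\mu F(\pi, \mu) =: \beta$ and fix $c$ with $\alpha < c < \beta$. For each $\mu \in \TU$, the bound $\inf_\pi F(\pi, \mu) \leq \alpha < c$ furnishes some $\pi_\mu \in \FE$ with $F(\pi_\mu, \mu) < c$, and by continuity of $F(\pi_\mu, \cdot)$ the set $\{\mu' \in \TU : F(\pi_\mu, \mu') < c\}$ is an open neighborhood of $\mu$. Compactness of $\TU$ then yields finitely many algorithms $\pi_1, \ldots, \pi_k \in \FE$ such that $\min_{i \leq k} F(\pi_i, \mu) < c$ for every $\mu \in \TU$.

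Next I apply the classical Sion/von Neumann minimax theorem to the payoff $G(\nu_A, \mu) := \sum_i \nu_A(i) F(\pi_i, \mu)$ on $\Delta_k \times \TU$. In $\nu_A$ the map is linear (hence convex and continuous), and in $\mu$ it is continuous and concave (the latter because condition~2 makes each $F(\pi_i, \cdot)$ concave on $\TU$, and non-negative combinations preserve concavity); both $\Delta_k$ and $\TU$ are compact and convex, so
\begin{equation*}
\min_{\nu_A \in \Delta_k} \max_{\mu \in \TU} G(\nu_A, \mu)
= \max_{\mu \in \TU} \min_{\nu_A \in \Delta_k} G(\nu_A, \mu)
= \max_{\mu \in \TU} \min_{i \leq k} F(\pi_i, \mu) < c,
\end{equation*}
where the final strict inequality uses the finite cover together with achievement of the outer maximum on compact $\TU$. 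Consequently some $\nu_A^\star \in \Delta_k$ satisfies $G(\nu_A^\star, \mu) < c$ uniformly in $\mu \in \TU$.

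Finally, condition~1 derandomizes: the pure algorithm $\pi^\star := \pi(\nu_A^\star) \in \FE$ satisfies $F(\pi^\star, \mu) \leq G(\nu_A^\star, \mu) < c$ for every $\mu$, so $\max_\mu F(\pi^\star, \mu) \leq c < \beta$, contradicting the definition of $\beta$. The main obstacle I anticipate is not a single technical step but the careful bookkeeping of strict versus weak inequalities across the open cover, the compactness extraction, and the intermediate finite minimax—in particular, checking that condition~2 (a Jensen-type inequality over arbitrary finite mixtures) genuinely delivers the concavity of $F(\pi_i, \cdot)$ needed for Sion's theorem, and that condition~1 produces a pure algorithm actually lying in $\FE$ rather than a mere randomized extension.
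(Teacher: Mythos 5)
Your argument is correct, and it does establish the proposition, but it takes a genuinely different route from the paper's proof. The paper deliberately restricts itself to von Neumann's minimax theorem for \emph{finite} matrix games: it first proves a finite-game inequality for finite subsets $H_\pi \subset \FE$ and $H_\mu \subset \TU$ (Lemma~\ref{Lem_finite_minimax}, via concavity and von Neumann), then upgrades from finite $H_\mu$ to the whole of $\TU$ via a uniform-continuity/$\varepsilon$-net argument on the compact set $\TU$ (Lemma~\ref{Lem_inf_minimax}), and finally extracts the finite $H_\pi$ in the main proof from a closed-set finite-intersection-property argument. You instead run the same open-cover extraction of $\pi_1,\ldots,\pi_k$ (your step and the paper's finite-intersection-property step are two dual phrasings of the same compactness fact), but then apply Sion's minimax theorem \emph{directly} to $G(\nu_A,\mu)=\sum_i\nu_A(i)F(\pi_i,\mu)$ on $\Delta_k\times\TU$, using that condition~2 is exactly concavity of $F(\pi,\cdot)$ and condition~3 supplies the needed continuity, and then derandomize via condition~1. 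This collapses the paper's two-lemma ladder into a single invocation of a stronger off-the-shelf theorem, so it is shorter; the trade-off is that you rely on the infinite-dimensional Sion theorem in the weak-$\ast$-compact space $\TU$, whereas the paper only ever calls minimax on a finite matrix and does the topology ``by hand.'' Two small points of bookkeeping that are worth being explicit about, though neither is a gap: (a) the identity $\min_{\nu_A\in\Delta_k}G(\nu_A,\mu)=\min_{i\le k}F(\pi_i,\mu)$ holds because a linear functional on the simplex is minimized at a vertex; and (b) the fact that $\pi(\nu_A^\star)\in\FE$ is built into the hypothesis, since condition~1 presupposes $\pi(\nu_A)$ lies in the domain $\FE$ of $F$, so the concern you raise at the end resolves by reading the hypothesis literally.
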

The following lemma is a key ingredient for proof of Proposition~\ref{Prop_minimax}.

\begin{Lem}\label{Lem_finite_minimax}
For any finite subset $H_\pi \subset \FE$, any finite subset $H_\mu \subset \TU$ and any $\alpha \in \{\alpha' \geq \max_{\mu \in \TU} \min_{\pi \in H_\pi} F(\pi, \mu)\}$,
\begin{equation*}
\min_{\nu_A \text{~on~} H_\pi}\max_{\nu_B \text{~on~} H_\mu} \E_{\nu_A, \nu_B}[F(\pi, \mu)] \leq \alpha.
\end{equation*}
\end{Lem}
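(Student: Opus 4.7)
The plan is to reduce Lemma~\ref{Lem_finite_minimax} to the classical finite-dimensional minimax theorem. Write $\Delta(H_\pi)$ and $\Delta(H_\mu)$ for the simplices of probability distributions on the finite sets $H_\pi$ and $H_\mu$, and define the auxiliary payoff $G(\nu_A, \nu_B) := \E_{\pi \sim \nu_A,\, \mu \sim \nu_B}[F(\pi, \mu)]$. By linearity of expectation, $G$ is \emph{bilinear} on $\Delta(H_\pi) \times \Delta(H_\mu)$, and both simplices are convex and compact. So von Neumann's minimax theorem for finite bilinear games directly gives
\[
\min_{\nu_A \on \Delta(H_\pi)} \max_{\nu_B \on \Delta(H_\mu)} G(\nu_A, \nu_B)
=
\max_{\nu_B \on \Delta(H_\mu)} \min_{\nu_A \on \Delta(H_\pi)} G(\nu_A, \nu_B).
\]

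Next I would bound the right-hand side by $\alpha$. For any fixed $\nu_B$, the map $\nu_A \mapsto G(\nu_A, \nu_B)$ is linear on the simplex, so its minimum is attained at a vertex, giving $\min_{\nu_A} G(\nu_A, \nu_B) = \min_{\pi \in H_\pi} \E_{\mu \sim \nu_B}[F(\pi, \mu)]$. The second hypothesis on $F$ (concavity-like in the $\mu$-argument) yields
\[
\E_{\mu \sim \nu_B}[F(\pi, \mu)] \leq F(\pi, \bar{\mu}), \qquad \bar{\mu} := \E_{\nu_B}[\mu].
\]
Because $\TU$ is convex, the barycentre $\bar{\mu}$ lies in $\TU$, so
\[
\min_{\pi \in H_\pi} F(\pi, \bar{\mu}) \leq \max_{\mu \in \TU} \min_{\pi \in H_\pi} F(\pi, \mu) \leq \alpha,
\]
by the standing assumption on $\alpha$. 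Chaining these inequalities and invoking the minimax identity from the previous paragraph yields the claim.

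I do not foresee a genuine obstacle: the argument reduces to three standard ingredients—von Neumann's theorem for bilinear finite games, the vertex-attainment of linear minimisation over a simplex, and the concave-like hypothesis on $F$. The one subtle point worth double-checking is that the outer $\max_{\mu \in \TU}$ in the definition of $\alpha$ ranges over all of $\TU$ rather than just the finite set $H_\mu$; this is exactly what the proof needs, since $\bar{\mu}$ is produced by convex combinations in $\TU$ and typically does not lie in $H_\mu$.
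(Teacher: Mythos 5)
Your proof is correct and follows essentially the same route as the paper's: both reduce to von Neumann's minimax theorem for the finite bilinear game on $\Delta(H_\pi)\times\Delta(H_\mu)$, and both bound $\max_{\nu_B}\min_{\nu_A}$ by $\alpha$ via the barycentre $\bar\mu = \E_{\nu_B}[\mu] \in \TU$ together with the concavity-like hypothesis on $F$; the only difference is presentational (you invoke minimax first and then bound the right-hand side, while the paper bounds $\max\min$ first and then flips).
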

\begin{proof}
For any finite subset $H_\pi \subset \FE$ and any finite subset $H_\mu \subset \TU$, we first show
\begin{equation*}
\forall~\nu_B:\text{finite distribution on $\TU$}, \quad \exists\tau \in H_\pi \text{~s.t.~} \E_{\mu \sim \nu_B}[F(\tau, \mu)] \leq \alpha.
\end{equation*}
For any $\nu_B$, define $\bar{\mu} := \E_{\mu \sim \nu_B}[\mu(x)]$.
Then, for any $\alpha \in \{\alpha' \geq \max_{\mu \in \TU} \min_{\pi \in H_\pi} F(\pi, \mu)\}$, there is an algorithm $\tau \in H_\pi$ such that $F(\tau, \bar{\mu}) \leq \alpha$.
Therefore, by the convexity $\E_{\mu \sim \nu_B}[F(\tau, \mu)] \leq F(\tau, \bar{\mu})$, we obtain
\begin{equation*}
\forall~\nu_B:\text{finite distribution on $\TU$}, \quad \exists\tau \in H_\pi \text{~s.t.~} \E_{\mu \sim \nu_B}[F(\tau, \mu)] \leq F(\tau, \bar{\mu})\leq \alpha
\end{equation*}
which leads to 
\begin{equation*}
\max_{\nu_B \text{~on~} H_\mu} \min_{\nu_A \text{~on~} H_\pi}\E_{\nu_A, \nu_B}[F(\pi, \mu)] \leq \alpha.
\end{equation*}
Let us apply von-Neumann's minimax theorem here. 
\begin{equation*}
\min_{\nu_A \text{~on~} H_\pi}\max_{\nu_B \text{~on~} H_\mu} \E_{\nu_A, \nu_B}[F(\pi, \mu)] \leq \alpha
\end{equation*}
which completes proof.
\end{proof}

\begin{Lem}\label{Lem_inf_minimax}
For any finite subset $H_\pi \subset \FE$ and any $\alpha \in \{\alpha' \geq \max_{\mu \in \TU} \min_{\pi \in H_\pi} F(\pi, \mu)\}$, there is $\tau \in \FE$ such that
\begin{equation*}
\max_{\mu \in \TU} F(\tau, \mu) \leq \alpha.
\end{equation*}
\end{Lem}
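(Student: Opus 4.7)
The plan is to combine Lemma~\ref{Lem_finite_minimax} with a finite intersection property argument on the compact simplex of distributions over $H_\pi$, and then invoke the convexity axiom of Proposition~\ref{Prop_minimax} to turn the resulting mixture into an actual algorithm $\tau \in \FE$.

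First, for each $\mu \in \TU$, I would define
\begin{equation*}
A_\mu := \Bigl\{\nu_A \in \Delta(H_\pi) : \E_{\pi \sim \nu_A}[F(\pi, \mu)] \leq \alpha \Bigr\},
\end{equation*}
where $\Delta(H_\pi)$ denotes the simplex of probability distributions on the finite set $H_\pi$. Since $H_\pi$ is finite, the map $\nu_A \mapsto \E_{\pi \sim \nu_A}[F(\pi, \mu)] = \sum_{\pi \in H_\pi} \nu_A(\pi)\, F(\pi, \mu)$ is a finite linear combination in the coordinates of $\nu_A$, hence continuous in $\nu_A$. Therefore $A_\mu$ is a closed subset of the compact simplex $\Delta(H_\pi) \subset \mathbb{R}^{|H_\pi|}$.

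Next, I would show that the family $\{A_\mu\}_{\mu \in \TU}$ enjoys the finite intersection property. Given any $\mu_1, \ldots, \mu_k \in \TU$, apply Lemma~\ref{Lem_finite_minimax} with $H_\mu := \{\mu_1, \ldots, \mu_k\}$ to obtain some $\nu_A \in \Delta(H_\pi)$ satisfying $\max_{\nu_B \text{ on } H_\mu} \E_{\nu_A, \nu_B}[F(\pi, \mu)] \leq \alpha$. Specializing $\nu_B$ to the point mass at each $\mu_i$ yields $\E_{\pi \sim \nu_A}[F(\pi, \mu_i)] \leq \alpha$ for every $i$, so $\nu_A \in \bigcap_{i=1}^k A_{\mu_i}$. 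By compactness of $\Delta(H_\pi)$ the full intersection $\bigcap_{\mu \in \TU} A_\mu$ is non-empty; pick any $\nu_A^\ast$ in it.

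Finally, because $\nu_A^\ast$ is a finite distribution on $\FE$ (supported on $H_\pi$), the first assumed property of $F$ in Proposition~\ref{Prop_minimax} lets us take the mixture algorithm $\tau := \pi(\nu_A^\ast) \in \FE$, for which
\begin{equation*}
F(\tau, \mu) \leq \E_{\pi \sim \nu_A^\ast}[F(\pi, \mu)] \leq \alpha \quad \text{for every } \mu \in \TU,
\end{equation*}
so $\max_{\mu \in \TU} F(\tau, \mu) \leq \alpha$ as required. The main subtlety is conceptual rather than computational: one must carefully distinguish between a distribution $\nu_A$ over algorithms and a single algorithm $\tau$, and invoke the convexity axiom precisely at the end to convert the former into the latter. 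The compactness step is essentially automatic once we notice that finiteness of $H_\pi$ makes $\Delta(H_\pi)$ a finite-dimensional compact simplex, which sidesteps any need for topological assumptions on $\TU$ beyond what Lemma~\ref{Lem_finite_minimax} already uses.
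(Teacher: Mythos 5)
Your proof is correct, and it takes a genuinely different route from the paper's. The paper's proof of Lemma~\ref{Lem_inf_minimax} is quantitative: for each $\pi$ in the finite set $H_\pi$ it invokes uniform continuity of $F(\pi,\cdot)$ on the compact $\TU$ to build, for every $\varepsilon > 0$, a finite $\delta$-net $H_\mu(\varepsilon) \subset \TU$, applies Lemma~\ref{Lem_finite_minimax} to the pair $(H_\pi, H_\mu(\varepsilon))$, transfers the resulting bound from the net to all of $\TU$ with an additive $\varepsilon$ loss, and finally sends $\varepsilon \to 0$. You instead work entirely in the mixture simplex $\Delta(H_\pi)$: the sets $A_\mu$ are closed halfspace intersections in a finite-dimensional compact simplex, Lemma~\ref{Lem_finite_minimax} (with $\nu_B$ specialized to point masses, and the minimizer attained because $\Delta(H_\pi)$ is compact and the objective continuous) establishes the finite intersection property, and compactness of $\Delta(H_\pi)$ produces a single $\nu_A^\ast$ good uniformly over $\mu$ with no slack. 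Both arguments then finish identically by the first convexity axiom of Proposition~\ref{Prop_minimax}, converting $\nu_A^\ast$ into an algorithm $\tau \in \FE$. Your route is cleaner: it avoids the $\varepsilon$-$\delta$ net bookkeeping and the passage to the limit, relies on compactness of $\TU$ only to the extent already embedded in Lemma~\ref{Lem_finite_minimax}, and obtains the sharp bound $\max_\mu F(\tau,\mu) \leq \alpha$ directly rather than as a limit of $\alpha + \varepsilon$ bounds.
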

\begin{proof}
We first show that 
for any $\varepsilon > 0$, 
\begin{equation}\label{eq_Lem_inf_minimax}
\min_{\nu_A \on H_\pi} \max_{\mu \in \TU} \E_{\nu_A}[F(\pi, \mu)] < \alpha + \varepsilon.
\end{equation}
For any $\pi \in H_\pi$, $F(\pi, \mu)$ is continuous on the compact set $\TU$. This means $F(\pi, \mu)$ is uniformly continuous, and therefore, 
for any $\varepsilon > 0$, there is  $\delta > 0$ such that
\begin{equation*}
\|\mu_1 - \mu_2\| < \delta \Rightarrow  \forall \pi \in H_\pi, \quad |F(\pi, \mu_1) - F(\pi, \mu_2)| < \varepsilon
\end{equation*}
holds. Note that $H_\pi$ is finite.
The compactness of the set $\TU$ also ensures that there is a finite set $\{\mu_1, \ldots, \mu_n\} \subset \TU$ such that 
\begin{equation*}
\bigcup_{i \leq n} B(\mu_i, \delta) 
= \TU.
\end{equation*}
Define $H_\mu(\varepsilon) := \{\mu_1, \ldots, \mu_n\}$. (Note that $H_\mu(\varepsilon)$ directly depends on $\delta$, and $\delta$ actually depends on $\varepsilon$. This means $H_\mu$ is in fact a function of $\varepsilon$.)
Now by Lemma~\ref{Lem_finite_minimax}, for any $\varepsilon >0$, 
\begin{equation*}
\min_{\nu_A \text{~on~} H_\pi}\max_{\nu_B \text{~on~} H_\mu(\varepsilon)} \E_{\nu_A, \nu_B}[F(\pi, \mu)] \leq \alpha
\end{equation*}
holds. This implies that there is  $\nu_A$ on  $H_\pi$ such that 
\begin{equation*}
\forall \nu_B \on H_\mu(\varepsilon), \quad \E_{\nu_A, \nu_B}[F(\pi, \mu)] \leq \alpha.
\end{equation*}
This leads to, 
\begin{equation*}
\forall \mu_i \in H_\mu(\varepsilon), \quad \E_{\pi \sim \nu_A}[F(\pi, \mu_i)]\leq \alpha.
\end{equation*}
Therefore, by the definition of $H_\mu(\varepsilon)$, we obtain
\begin{equation*}
\forall \mu \in \TU,\quad \E_{\pi \sim \nu_A}[F(\pi, \mu)] < \alpha + \varepsilon
\end{equation*}
which implies
\begin{equation*}
\min_{\nu_A \on H_\pi}\max_{\mu \in \TU} \E_{\nu_A}[F(\pi, \mu)] < \alpha + \varepsilon
\end{equation*}
and therefore the statement~\eqref{eq_Lem_inf_minimax} holds.
\par
Since the statement~\eqref{eq_Lem_inf_minimax} implies
$\min_{\nu_A \on H_\pi}\max_{\mu \in \TU} \E_{\nu_A}[F(\pi, \mu)] < \alpha$, we can take a distribution $\nu_A^0 \in H_\pi$ such that 
$\max_{\mu \in \TU} \E_{\nu_A^0}[F(\pi, \mu)] < \alpha$ holds. Therefore, together with the convexity $F(\pi(\nu_A), \mu) \leq \E_{\pi \sim\nu_A}[F(\pi, \mu)]$, 
we see that the randomized algorithm $\tau := \pi(\nu_A^0)$ satisfies
$\max_{\mu \in \TU} F(\tau, \mu) < \alpha$.
This completes proof.
\end{proof}
Using Lemma~\ref{Lem_finite_minimax} and Lemma~\ref{Lem_inf_minimax}, we show Proposition~\ref{Prop_minimax} as follows.
\begin{proof}[Proof of Proposition~\ref{Prop_minimax}]
Choose any $\alpha > \max_{\mu \in \TU} \inf_{\pi \in \FE} F(\pi, \mu)$ and define $A(\pi) := \{\mu \in \TU \mid F(\pi, \mu) \geq \alpha\}$.
Then $\bigcap_{\pi \in \FE} A(\pi) = \emptyset$. Since $\TU$ is compact and $A(\pi)$ is closed due to the continuity of $F(\pi, \mu)$, 
we see there is a finite set of algorithms $H_\pi \subset \FE$ such that $\bigcap_{\pi \in \FE} A(\pi) = \emptyset$.
Thus we have that 
$\forall \mu \in \TU, \quad \min_{\pi \in H_\pi} F(\pi, \mu)< \alpha$
which is equivalent to 

\begin{equation*}
\max_{\mu \in \TU}\min_{\pi \in H_\pi} F(\pi, \mu)< \alpha.
\end{equation*}
Then Lemma~\ref{Lem_inf_minimax} tells that there is a algorithm $\tau \in \FE$ such that $\forall \mu \in \TU$, $F(\tau, \mu) \leq \alpha$.
To summarize, for any $\alpha > \max_{\mu \in \TU} \inf_{\pi \in \FE} F(\pi, \mu)$, there is a algorithm $\tau \in \FE$ such that $\min_{\mu \in \TU}F(\tau, \mu) \leq \alpha$.
This shows
\begin{equation*}
\inf_{\pi \in \FE}\max_{\mu \in \TU} F(\pi, \mu)\leq \max_{\mu \in \TU} \inf_{\pi \in \FE}F(\pi, \mu).
\end{equation*}
This completes proof.
\end{proof}

\end{document}